\definecolor{webgreen}{rgb}{0,0.4,0}
\definecolor{webbrown}{rgb}{0.6,0,0}
\definecolor{purple}{rgb}{0.5,0,0.25}
\definecolor{darkblue}{rgb}{0,0,0.7}
\definecolor{darkred}{rgb}{0.7,0,0}
\definecolor{darkgreen}{rgb}{0,0.7,0}
\newcommand{\ignore}[1]{}
\newtheorem{lemma}{{\sc Lemma}}[section]
\newtheorem{prop}{{\sc Proposition}}[section]
\newtheorem{cor}{{\sc Corollary}}[section]
\newtheorem{theorem}{{\sc Theorem}}[section]
\newtheorem{defn}{{\sc Definition}}[section]
\newtheorem{claim}{{\sc Claim}}
\newtheorem{example}{{\sc Example}}[section]
\newtheorem{remark}{{\sc Remark}}
\newenvironment{proof}{\noindent {\bf \sl Proof\/}:\enspace}
{\hfill $\blacksquare{}$ \vspace{12pt}}
\begin{document}

\title{\bf Strategy-proof Selling: a Geometric Approach}
\author{\bf Mridu Prabal Goswami\thanks{Indian Statistical Institute, Tezpur, India. 
		 I am extremely grateful to Arunava Sen and Tridib Sharma for encouraging me to take up this project. I am grateful to Abinash Panda for going through the draft and making many helpful comments. I am extremely thankful to Dipjyoti Majumdar for helpful comments. Another version of this paper was submitted to a journal and was subsequently rejected.    
		I am thankful to the two anonymous referees, I have tried to address their concerns in this draft.    
		I am extremely thankful to Magesh Kumar K. K. and Manish Yadav.  
		I am thankful to a research grant from SERB, DST, Government of India.}}
\maketitle

\begin{abstract}
\noindent We consider one buyer and one seller. For a bundle 
$(t,q)\in [0,\infty[\times [0,1]=\mathbb{Z}$, $q$ either refers to the wining probability of an object or a share of a good, and $t$ denotes the payment that the buyer makes. We define classical and restricted classical preferences of the buyer on $\mathbb{Z}$; they incorporate quasilinear, non-quasilinear, risk averse preferences with multidimensional pay-off relevant parameters. We define rich single-crossing subsets of the two classes, and characterize strategy-proof mechanisms by using monotonicity of the mechanisms and continuity of the indirect preference correspondences. We also provide a computationally tractable optimization program to compute the optimal mechanism.  We do not use revenue equivalence and virtual valuations as tools in our proofs. Our proof techniques bring out the geometric interaction between the single-crossing property and the positions of bundles $(t,q)$s. Our proofs are simple and provide computationally tractable optimization program to compute the  optimal mechanism. The extension of the optimization program to the $n-$ buyer environment is immediate.

\end{abstract}

\noindent {\bf Key words}: single-crossing, non-quasilinear preference, classical preferences, risk averse, strategy-proof, indirect preference correspondence, optimal mechanism, topology, computation, artificial intelligence

\section{\bf Introduction}

Selling rights by a firm to a vendor so that the vendor can provide specific services to the
 customers
can be abstractly thought of as selling a fraction of a good where the good refers to the full rights.  
Consider a firm that produces soft drinks. The management of the firm may grant selling rights of its products to a party in such a way that only certain  specified brands of soft drinks can be sold by the licensee. In an abstract sense, if we think the set of all brands of soft drinks produced by the firm to be 1, then selling rights to sell certain brands can be considered as selling fractions of a good.  In return, the licensee needs to make a payment to the firm. This payment may be made in the form of a license fee, or in the form a revenue sharing arrangement with the firm. The developer who received the Delhi airport on lease during its privatization in $2003$, and thus received  the rights to develop and manage the airport, in return was required to share revenue with the government. 
We call the firm that sells rights to be a seller, and the agent that buys the rights to be a buyer.      
We denote rights by $q\in [0,1]$  and payment by $t$ with $t\geq 0$, and consider preferences of the buyer of the rights over the tuples $(t,q)$, we call $(t,q)$ a bundle. We assume these preferences to be classical, i.e., monotone in both $t$ and $q$, and continuous, see Definition \ref{defn:classical}. We then relax the monotonicity assumption of classical preferences, an extension of our analysis to non-monotone preferences in Section \ref{sec:non_mon}. We call the extended preferences to be restricted classical.

We study mechanisms, i.e., functions, that map preferences of the buyer to a bundle $(t,q)$. We assume that the buyer knows, and only the buyer knows her preference, i.e., preference is private information of the buyer. 
Hence, we study mechanisms that have the property that the allocation, i.e., a bundle $(t',q')$, obtained by the buyer from the mechanism by misreporting her preference is not better under her true preference compared to the allocation $(t,q)$ obtained from the report of her true preference; in short                
`the buyer has no incentive to misreport her preference'. A mechanism that satisfies this property is called strategy-proof, see Definition \ref{defn:sp}.

In our model $q$ can also represent probability of win. In that case the expected payment that the winner makes is $qt$. However, the winner's net utility function may not be of the form ``expected value $-$ expected payment''. In general it is possible that the form in which     
payment appears in the buyer's utility function is different from the way the seller evaluates her revenue. For example, the buyer's utility may be of the form $\theta q-t$ but the seller considers expected payment, i.e., $tq$, as the measure of revenue.  
Therefore, we consider mechanisms that map  
preferences to bundles $(t,q)$, instead of the ones that map preferences to $(\text{expected payment}, q )$. A classic model that considers the latter preferences is \citep{myer}. 
In our research we allow preferences to take more general forms. This provides avenues to study optimal mechanisms with a more general classes of preferences.

We consider the `domain' of mechanisms, domain refers to the set of preferences from which the preference of the buyer is drawn, that satisfies the single-crossing property. A domain satisfies the single-crossing property if indifference sets of two distinct preferences that belong to the domain intersect at most at one bundle. We provide examples of both quasilinear and non-quasilinear preference domains that satisfy the single-crossing property.  
Further, we provide examples of single-crossing domains such that preferences from the domain admit multidimensional parametric representations. The single-crossing property, i.e., indifference sets of two distinct preferences can intersect at most one bundle, may be considered to be an algorithm  that creates restricted domains from universal set of preferences.
We make a detailed remark about it in the text.

The single-crossing property entails an order on the set of preferences. The order in the domain ensures that we can define monotonicity of the two component functions, i.e., two functions $t$ and $q$, of the mechanisms.{\footnote{Instead of using new notations to denote the two component functions of a mechanism that maps preferences to payments and shares/winning probabilities we denote them by $t$ and $q$.}} We consider rich single-crossing domains. The richness requires for every $z'=(t',q'),z''= (t'',q'')$ with $t'<t'',q'<q''$ there is a preference from the single-crossing domain such that
$z',z''$ are indifferent. By the single-crossing property such preference is unique. We define a natural order topology on rich single-crossing domains. Using the order we define monotonicity of mechanisms, and using the order topology we define continuity of the indirect preference correspondences, see Definition \ref{defn:cont}. We utilize the monotonicity of the mechanisms, and the continuity of the indirect preference correspondences to study the geometry of the ranges of the mechanisms that are strategy-proof. The geometry  provides a novel insight into the interactions between strategy-proofness and domains of mechanisms. The continuity of an indirect preference correspondence reduces to continuity of an indirect utility function if we consider utility representation of the preferences.  
For instance, for the domain $\{\theta q-t\mid \theta>0\}$  it entails continuity of the indirect utility function, where the latter is defined for $\theta\in ]0,\infty[$.   
However, we do not use the revenue equivalence equation in our characterization of strategy-proof mechanisms. We show that monotonicity and continuity of the indirect preference correspondences are necessary for strategy-proofness of mechanisms. If the range of a mechanism is finite, then they are also sufficient.   
We provide an example to show that the continuity of the indirect preference correspondence and monotonicity of mechanisms are not enough to guarantee strategy-proofness of mechanisms with continuum ranges in Section \ref{sec:count}. 
In contrast, if the number of limit points is finite, and the range is countable and closed, then continuity of the  
indirect preference correspondences and monotonicity of the mechanisms are sufficient conditions for mechanisms to be strategy-proof.
In other words the axioms applied to the mechanisms may not alone reveal their strengths when it comes to characterizations of strategy-proof mechanisms, the range of the mechanisms may also be an important element that needs to be considered. 
This analysis related to the robustness of our model is included in Section \ref{sec:count}.

In Theorem \ref{thm:implies_strtagey_proof_finite_range}  we show that if the range of a mechanism is finite, monotone and the indirect preference correspondence is continuous then the mechanism is strategy-proof. The proof of Theorem \ref{thm:implies_strtagey_proof_finite_range} is constructive. 
Given the range, i.e., given the positions of the bundles in the space $\mathbb{Z}$, the proof of the theorem provides the exact rule of the mechanism. By using the single-crossing property we identify this rule and the geometry of the rule manifests itself in terms of the preferences that make subsets of the range indifferent. We discuss this point in more details in the text. Since our proofs depend only on the ordinal feature of the single-crossing property we cannot use revenue equivalence, and in general envelope theorems, in our characterization. In fact, we do not use any. We discuss this point further in the text. Further, from the proof of Theorem \ref{thm:implies_strtagey_proof_finite_range} it is immediate that our axioms do not fix the payment rule, it fixes the rule of the mechanism. Thus, our approach is different from the standard mechanism design approach where given an allocation rule the axioms identify a payment rule that implements the allocation rule. In contrast we answer the following two questions in affirmative: `Given a finite subset of $\mathbb{Z}$ can we find reasonable conditions on the subset and the mechanisms such that the given subset can supported as the range of a mechanism? Can we identify the rule of that mechanism? An immediate advantage of this approach is that we can read- off the expected revenue of the seller easily from the geometry of the mechanism. In fact, this approach helps to setup and show the existence of the optimal mechanism quite easily with a minimal assumption 
on the probability measure on the domain of preferences without using utility representations. In particular, we do not require virtual valuation in our proofs. Since we do not use revenue equivalence in our characterization we do not have access to the technique of substituting the revenue equivalence equation into the expected revenue of the seller to solve for the optimal mechanism. In fact, even if revenue equivalence holds in some domains it does not mean that 
solving for the optimal mechanism is easy, especially when the buyer is not not risk neutral.   
We discuss this in details in Remark \ref{remark:myer_vs_single}. Our proof of the existence of an optimal mechanism entails an optimization program that is computationally tractable. This makes our approach distinct from the approaches that depend on revenue equivalence.    
We prove the existence of optimal mechanism in Theorem \ref{thm:optimal}.

Classical preferences satisfy monotonicity everywhere in the consumption space, 
but risk averse preferences such as $q\sqrt{\theta-t}, 0<\theta, t\leq \theta$  does not. Thus we extend our analysis to non-monotone preferences, and call the general class of extended preferences to be
restricted classical. The preference $q\theta-qt,  0<\theta$ is a classic example of restricted classical preferences which allows for lottery over payment that does not depend on $\theta$ unlike the risk averse model. We modify the definition of the single-crossing property for these preferences and show that characterizations of mechanisms that are analogous to the classical preferences hold for restricted classical preferences as well. All discussions about restricted classical preferences can be found in Section \ref{sec:non_mon}.      
In the next section we review some of the related papers to bring out the difference between these papers and the model in our paper.

The notion of single-crossing preferences that we use may appear similar to the one used in the literature on mechanism design. 
However, the notion in our paper and the one used in the literature are in effect very different. In the literature, for example \citep{Baisa2}, \citep{Saporiti}, \citep{Tian} this property is defined on monotone allocations, and often is defined by using a parametric class utility functions as a primitive. 
In our definition we do not use any parametric class of utility functions. Our definition stems from  the intersection of indifference sets of two `ordinal' preferences, and classical(resp. restricted classical) properties of preferences entail an order on the set of preferences which in fact makes and set of classical single-crossing properties a one dimensional manifold. With the help of various examples we demonstrate that  the dimension of the manifold is not the same thing as the dimension of parameters or types. We give examples to show that one dimensional manifold allows for multidimensional parametric utility representations. Thus we do not work with parameters or types. We use the geometry of the ordinal property of single-crossing, i.e., indifference sets of two distinct preferences  can intersect at most once to study optimal strategy-proof mechanisms, this approach is different from the other papers in the literature.

\subsection{\bf Related Literature}
\label{sec:lit}  
\citep{Baisa2} considers a specific one parameter set of preferences that satisfy the single-crossing property. We allow for multidimensional parametric representations of the preferences. \citep{Saporiti} considers a single-crossing domain to study  voting rules. Instead of assuming an order on the preferences which is the case in \citep{Baisa2}, in \citep{Baisa2} the order is due to the natural order on the parameters that represent the preferences, and \citep{Saporiti} we derive an order  on the set of preferences by using the single-crossing property of the classical preferences. 
\citep{Laffont1} consider a one dimensional parametric class of single-crossing preferences and study 
the implementability of piecewise continuously differentiable allocation rules. 
\citep{Tian} study equivalence between a notion of implementability and notions of cycle monotonicity. 
\citep{Gershkov} use \citep{Saporiti} to construct incentive compatible and ex-ante welfare maximizing. 
Furthermore  in the context of voting \citep{Barbera2} study a model where society’s preferences over
voting rules satisfy the single-crossing property with an objective to analyze self-stable rather than strategy-proof voting rules. \citep{Gans} study
an Arrovian aggregation problem with single-crossing preferences for voters, and 
show that median voters are decisive in all majority elections between pairs of alter
natives. \citep{Barbera3} develop a notion of `top monotonicity' which is a common generalization of single-peakedness and single-crossingness. 
\citep{Corchon} study a public-good-private-good production economy where
agents' preferences satisfy the single-crossing property and prove that smooth strategy-proof and Pareto-efficient social choice functions that give strictly positive amount
of both goods to all agents do not exist.

Since the single-crossing property allows for various kinds of non linearity in the preferences, the literature on non-quasilinear preferences is also relevant.  Some of the recent studies on non-quasilinear preferences are  
\citep{Mishra1} and  \citep{Mishra2} \citep{Serizawa}, \citep{Baisa1}. 
An important domain studied in \citep{Mishra1} admits 
positive income effect.
Later we give an example of a single-crossing domain that satisfies positive income effect. The main idea of the single-crossing domain in this paper is from \citep{Goswami}. In \citep{Goswami} the single-crossing property is defined for classical exchange economies.     
The domain  in \citep{myer} satisfies the single-crossing property for positive pay-off levels, for details see the section on restricted classical preferences. 
The paper is organized as follows. In Section \ref{sec:pre} we introduce important definitions  
In Section \ref{sec:sp} we study strategy-proof mechanisms for classical single-crossing preferences.
In Section \ref{sec:opt} we study optimal mechanisms for classical single-crossing preferences.
In Section \ref{sec:count} we study robustness of our axioms. 
In Section \ref{sec:non_mon} we extend our analysis to restricted classical preferences.         
In Section \ref{sec:con} we make some concluding remarks.

\section{\bf Preliminaries}
\label{sec:pre}
The economic environment in this paper consists of a seller and a buyer. 
The seller wants to sell a fraction of an object.  This fraction is denoted by $q$ and
$q\in [0,1]$. 
If the seller sells an indivisible unit of a good, then $q$ denotes the probability that the object is sold to the buyer.    
In return, the buyer needs to make a payment to the seller. This  payment is denoted by $t$. The set of allocations is denoted by 
$\mathbb{Z}$, and $\mathbb{Z}=[0,\infty[\times [0,1]$, where $[0,\infty[=\Re_{+}$ denotes the set of non-negative real numbers.
A typical bundle denoted by 
$(t,q)$, where $t\in \Re_{+}$ and $q\in [0,1]$.  The buyer's preference over $\mathbb{Z}$ is denoted by $R$. 
The strict counterpart of $R$ is denoted by $P$, and indifference is denoted by $I$.
In the same spirit as in \citep{Mishra1} we consider classical preferences. 
To formally write the definition of a classical preference we introduce  the following notations. For $z\in \mathbb{Z}$, and $R$, let $UC(R,z)=\{z'\in \mathbb{Z}| z' Rz\}$. In words $UC(R,z)$ is the set of bundles that are weakly preferred to $z$ under $R$.
Likewise $LC(R,z)=\{z'|zRz'\}$, $LC(R,z)$ is the set of bundles that are weakly less preferred to $z$ under $R$. The notion of a classical preference is defined below.

\begin{defn}[Classical Preference] \rm 
 	The complete, transitive preference relation $R$ on $\mathbb{Z}$ is \textbf{classical} if $R$ is  
 	{\bf monotone}, i.e., 
 		
 	\begin{itemize}
 		
 	\item \textbf{money-monotone:} for all $q\in [0,1]$, if $t''>t'$, then $(t',q)P(t'',q)$.
 		\item \textbf{$q$-monotone:} for all $t\in \Re_+$, if $q''>q'$, then $(t,q'')P(t,q')$.

 	\end{itemize}
 
 \noindent and  \textbf{continuous} for each $z\in \mathbb{Z}$, the sets $UC(R,z)$ and $LC(R,z)$ are closed sets.{\footnote{These two sets are closed in the product topology on the Euclidean space $\mathbb{Z}$.    }}
 
\label{defn:classical} 
 \end{defn}
 
\noindent 
Let $R$ be a classical preference and $x$ a bundle, define  $IC(R,x)=\{y\in \mathbb{Z}\mid yIx\}$.{\footnote{An $IC(R,x)$ set also represents an equivalence class of the equivalence relation $I$.}} The set $IC(R,x)$ is the set of bundles that are indifferent to $x$ according to the preference $R$. 
It can be seen easily that due to the properties of a classical preference, an $IC$ set can be represented as a curve in $\mathbb{Z}$. Thus we may also call an $IC$ set an $IC$ curve. 
We shall represent $\Re_{+}$ on the horizontal axis, and $[0,1]$ on the vertical axis. An $IC$ curve in $\mathbb{Z}$ is an upward slopping curve, i.e., if $(t',q'),(t'',q'')\in IC(R,x)$ and $t'<t''$, then $q'<q''$. Let $x'=(t',q'), x''=(t'',q'')$. By $x'\leq x''$ we mean either $x'=x''$ or $t'<t'',q'<q''$. Further, by $x'<x''$ we mean $t'<t'', q'<q''$.     
We call two bundles $x'=(t',q'), x''=(t'',q'')$ {\bf diagonal} if $x'< x''$. 
The single-crossing property is defined next.

\begin{defn}[Single-Crossing of two Preferences]\rm
We say that two distinct classical preferences $R', R''$ exhibit the {\bf single-crossing property} if and only if  for all $x,y,z \in \mathbb{Z}$, 
$$\text{if}~ z\in IC(R',x)\cap IC(R'',y),~\text{then}~ IC(R',x)\cap IC(R'',y)=\{z\}.$$
\label{defn:single_crossing}
\end{defn}
\noindent The single-crossing property implies that two $IC$ curves of two distinct preferences can meet ( or cut) at most one bundle. The single-crossing property is an ordinal property of preferences, i.e., this property does not depend on utility representations of preferences.
Next we define the notion of a single-crossing domain.{\footnote{The monotonicity of preferences and the single-crossing property in \citep{Goswami} are defined for the interior of the consumption space. }}            

\begin{defn}[Rich Single-crossing domain]\rm We call a subset of the set of classical preferences {\bf single-crossing domain} if any $R',R''$ that belongs to the subset 
satisfy the single-crossing property. We call a single crossing domain {\bf rich} if for any two bundles $x'=(t',q'), x''=(t'',q'')$ such that $t'<t'',q'<q''$ there is $R$ in the single crossing domain such that 
$x'Ix''$. We denote a rich single crossing domain by $\mathcal{R}^{rsc}$.        	
\end{defn}

\noindent We may interpret the single-crossing property as an algorithm that produces  restricted domains. Consider the set of all classical preferences. Consider two classical preferences $R'$ and $R''$ that admit the single-crossing property. Suppose we wish to add another preference $R'''$ to the list that already contains $R'$ and $R''$. 
The single-crossing property ensures that 
an additional preference can be added to the list only in a specific manner. 
A rich single-crossing domain is a maximal single-crossing domain, i.e., $\mathcal{R}^{rsc}\cup \{R\}$, where $R\not\in \mathcal{R}^{rsc}$ and $R$ is classical, is not a single-crossing domain. For all diagonal bundles $x', x''$ there is a preference $R\in \mathcal{R}^{rsc}$ such that $x'\in IC(R,x'')$. Thus if another preference is added to $\mathcal{R}^{rsc}$, then it violates the single-crossing property. 
Given the initial preferences  $R'$ and $ R''$, a rich single-crossing domain may be interpreted as the limit point of the algorithm. If we add more preferences to a rich single-crossing domain, then we may have a situation pertaining to Maskin Monotonic Transformations which is defined next.
In general, by adding a preference to $\mathcal{R}^{rsc}$ we allow indifference curves of two preferences 
to be tangential.  
 
\begin{defn}\rm ({\bf Maskin Monotonic Transformation}) Let $R',R''$ be two classical preferences. We say 
that $R''$ is a {\bf Maskin Monotonic Transformation, (in short MMT)} of $R'$ through $z$, if 
$(i)$ $UC(R'',z)\subseteq UC(R',z)$, $(ii)$ if $x\neq z, x\in UC(R'',z)$, then $xP'z$.  	
\end{defn}

\noindent MMT implies that the indifference curve of $R''$ through $z$ is tangential to the indifference curve of $R'$ through $z$. Single-crossing domains do not allow MMTs in the interior of $\mathbb{Z}$. The set of all classical preferences satisfy MMTs for all preferences at every bundle. In the context of two goods and two agents exchange economies \citep{Barbera1} find that a strategy-proof and individually rational mechanism must have a range whose elements fall on at most two line segments. \citep{Goswami} provides an example to demonstrate this not to be the case if the domain is rich single-crossing. Thus, the geometry of the range of strategy-proof mechanisms change if the domains of mechanisms are larger than rich single-crossing domains. We explore some important properties of rich single-crossing domains in the next subsection.

\subsection{\bf Topology on $\mathcal{R}^{rsc}$}
The single-crossing property provides a natural way to define an order on $\mathcal{R}^{rsc}$. This order defines an order topology on $\mathcal{R}^{rsc}$.
This order topology is metrizable, more specifically with this topology $\mathcal{R}^{rsc}$ is homeomorphic to the real line with the standard Euclidean metric. 
We proceed to formally state this topology.      
Let $\square(z)=\{x\mid x\leq z\}$.

\begin{defn}\rm 
Let $\mathcal{R}^{rsc}$ be a rich single crossing domain. Consider $z\in \mathbb{Z}$ and $R', R''\in \mathcal{R}^{rsc}$. We say, $R''$ \textbf{cuts} $R'$ \textbf{from above} at $z\in \mathbb{Z}$, if and only if  
\[ \square(z)\cap UC(R'',z) \subseteq \square(z)\cap UC(R',z). \]

\label{defn:cut} 
\end{defn}
In words, the indifference curve of $R''$ through $z$ lies above the indifference curve for $R'$ through $z$ in $\square(z)$ if both indifference curves are viewed from the horizontal axis $\Re_{+}$. By applying arguments similar to ones in  
\citep{Goswami} it can be  established that if $R''$ cuts $R'$ from above at some $z\in \mathbb{Z}$, then $R''$ cuts $R'$ from above at every $z\in \mathbb{Z}$.{\footnote{In \citep{Goswami} preferences are assumed to have strictly convex upper contour sets. However Proposition $3$ in \citep{Goswami} which proves if $R''$ cuts $R'$ from above at some bundle, then $R''$ cuts $R'$ from above at every bundle does not use convexity of preferences. }}
We say that $R''$ cuts $R'$ from above if $R''$ cuts $R'$ from above at every bundle. Thus we define the following order on $\mathcal{R}^{rsc}$: 
for all $R',R''\in \mathcal{R}^{rsc}$ we say $R'\prec R''$ if $R''$ cuts $R'$ from above. We consider the order topology on $\mathcal{R}^{rsc}$ by applying this order.   

We recall a few definitions. An order $\prec$ in any set $L$ is called {\bf simple or linear} if  
it has the following properties (see \citep{Munkres}):
$(1)$ (Comparability) For every $\alpha$ and $\beta$ in $L$ for which $\alpha\neq \beta$, either $\alpha \prec \beta$ or $\beta\prec \alpha$,
$(2)$ (Non-reflexivity) For no $\alpha$ in $L$ does the relation $\alpha \prec \alpha $ hold,
(3) (Transitivity) If  $\alpha \prec \beta $ and $\beta \prec \gamma$, then $\alpha\prec \gamma$. An important class of linearly ordered sets are linear continuums. We define this notion next.    

\begin{defn}[Linear continuum]\rm (see \citep{Munkres}) A simply ordered set $L$ having more than one element is called a linear continuum if the following hold:
$(1)$ $L$ has the least upper bound property, i.e., every bounded subset, i.e.,  bounded according to the order $\prec$ on $L$, has the least upper bound in $L$, (see \citep{Rudin}), $(2)$ if $\alpha \prec \beta$, then there exists $\gamma$ such that $\alpha\prec \gamma \prec \beta$, $\alpha, \beta,\gamma $ are in $L$.   
\end{defn}

\noindent In Theorem \ref{thm:lin_cont} we show that $\mathcal{R}^{rsc}$ is a linear continuum. 
To define the order topology for any two preferences $R', R''\in \mathcal{R}^{rsc}$ with $R'\prec R''$ define open interval $]R', R''[=\{R|R'\prec R \prec R''\}$. The collection of open intervals form a basis on $\mathcal{R}^{rsc}$. The topology generated by this basis is the order topology on $\mathcal{R}^{rsc}$.
In this topology $\mathcal{R}^{rsc}$ is connected and closed interval $[R',R'']=\{R\mid R'\precsim R \precsim R''\}$ is compact, here $R'\precsim R$ means either $R'\prec R$ or $R'=R$, $]-\infty, R[$ and $]-\infty, R]$ denote open and closed intervals respectively that are not bounded below, $]R,\infty[$ and $[R,\infty]$ denote open and closed intervals respectively that not bounded above,
see \citep{Munkres} for details. In this order topology $\mathcal{R}^{rsc}$ is homeomorphic to $\Re$ and thus metrizable. Further, the homeomorphism is an order preserving bijection with the real line. The next lemma formalizes these observations. Before proceeding we note that intervals in all ordered sets  in this paper are written by using the braces $]$ and $[$. For example, if $\alpha,\beta \in \Re $ and $\alpha<\beta$, then $]\alpha,\beta]$ denotes the left open and right closed interval.

\begin{theorem}\rm Every $\mathcal{R}^{rsc}$ is a linear continuum. Further, there is an order preserving homeomorphism, denoted by $h$, between  $\mathcal{R}^{rsc}$ with the order topology and $\Re$ with the standard Euclidean metric topology.       
 
\label{thm:lin_cont}
\end{theorem}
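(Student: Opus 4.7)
The plan is to verify that $\prec$ is a simple order on $\mathcal{R}^{rsc}$, construct an explicit order-preserving bijection $h\colon \mathcal{R}^{rsc}\to\Re$ via an ``exit-point'' parametrization, and then let standard facts about order topologies transfer the linear-continuum property from $\Re$ and guarantee that $h$ is a homeomorphism. The abstract payoff I will use is that any order-preserving bijection between simply ordered sets carries basic order-intervals to basic order-intervals, hence is automatically a homeomorphism of the two order topologies; and on $\Re$ the order topology coincides with the Euclidean one.

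First I would handle comparability. Given $R',R''\in\mathcal{R}^{rsc}$ with $R'\neq R''$, pick any $z\in\mathbb{Z}$. The single-crossing property forces $IC(R',z)$ and $IC(R'',z)$ to meet only at $z$, for two common bundles would contradict the definition. In $\square(z)$ both are monotone arcs emanating from $z$; being continuous, connected, and disjoint off $z$, one sits strictly above the other on each vertical slice $\{t\}\times[0,q_z]$ with $t<t_z$, which is exactly $R'\prec R''$ or $R''\prec R'$. Non-reflexivity is immediate, and transitivity is just the fact that ``above'' is transitive on arcs through a common point. The result cited from \citep{Goswami} right after Definition \ref{defn:cut} then propagates the comparison from one $z$ to every $z$.

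Next I would construct $h$. Fix an interior reference $z^* = (t^*, q^*)$ with $t^*>0$ and $0<q^*<1$, and consider the L-shaped arc $L = (\{0\}\times[0,q^*])\cup([0,t^*]\times\{0\})$, parametrized so that $s\in(0,t^*+q^*)$ traces $L$ from $(t^*,0)$ through the corner $(0,0)$ up to $(0,q^*)$ with $s$ increasing. For each $R\in\mathcal{R}^{rsc}$, monotonicity and continuity force $IC(R,z^*)\cap\square(z^*)$ to be a monotone arc from $z^*$ which, because $\square(z^*)$ is compact and the arc cannot touch $[0,t^*]\times\{q^*\}$ or $\{t^*\}\times[0,q^*]$ off $z^*$ by strict monotonicity, terminates at a unique point $p(R)\in L\setminus\{(t^*,0),(0,q^*)\}$. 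Let $\phi(R)$ be the $s$-parameter of $p(R)$, and let $h$ be $\phi$ composed with any order-preserving homeomorphism $(0,t^*+q^*)\to\Re$. Injectivity of $\phi$ is single-crossing applied to the two shared bundles $z^*$ and $p(R)$. Surjectivity onto $(0,t^*+q^*)$ is the payoff of richness: each image point of the parametrization is strictly diagonal with $z^*$, so a unique $R\in\mathcal{R}^{rsc}$ makes it indifferent to $z^*$, and strict monotonicity at $z^*$ rules out the two excluded endpoints. Order-preservation is immediate: $R_1\prec R_2$ says exactly that $IC(R_2,z^*)$ lies above $IC(R_1,z^*)$ in $\square(z^*)$, and this pushes the exit point further along $L$. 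Because $\Re$ is a linear continuum, so is $\mathcal{R}^{rsc}$.

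The main obstacle I anticipate is the step making the exit map $p(R)$ genuinely well-defined: showing, purely from continuity, strict monotonicity, and the compactness of $\square(z^*)$, that $IC(R,z^*)\cap\square(z^*)$ is really a single monotone arc that hits $L$ exactly once and at neither edge corner. Once the arc structure is in place the remainder is bookkeeping: single-crossing delivers injectivity and order-preservation, richness delivers surjectivity, and the abstract theory of order topologies delivers the topological conclusions without any further work.
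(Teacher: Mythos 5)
Your proposal is correct and follows essentially the same route as the paper: the paper also builds the order isomorphism as an ``exit-point'' map, sending each $R$ to the unique intersection of $IC(R,(0,0))$ with a quarter-circle arc $\mathbb{A}=[C_{\delta}(0,0)\cap\mathbb{Z}]\setminus\{(\delta,0),(0,\delta)\}$, with single-crossing giving injectivity, richness giving surjectivity, and the order topology facts transferring the linear-continuum property from an interval of $\Re$. Your only deviations are cosmetic (an interior reference bundle $z^*$ with the L-shaped boundary of $\square(z^*)$ as transversal instead of the origin with a circular arc) plus a more explicit verification of the simple-order axioms, which the paper delegates to the cited result of Goswami.
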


\begin{proof} See the Appendix.
	
\end{proof}

\noindent An implication of Theorem \ref{thm:lin_cont} is that convergence of a sequence in $\mathcal{R}^{rsc}$ is equivalent to saying that monotone sequences converge. A sequence of preferences is denoted by 
$\{R^{n}\}_{n=1}^{\infty}$. A monotone decreasing sequence means any $n$, $R^{n+1}\precsim R^{n}$, and its convergence to $R$ is denoted by $R^{n}\downarrow R$. A monotone 
increasing sequence means for any $n$, $R^{n}\precsim R^{n+1}$, and its convergence to $R$ is denoted by $R^{n}\uparrow R$. In general convergence to $R$ is denoted by $R^{n}\rightarrow R$.
Further, $R^{n}\rightarrow R \iff h(R^{n})\rightarrow h(R)$. It is known that a sequence in $\Re$ converges if and only if every monotone subsequence of the sequence converges to the same limit. Since $h$ is an order preserving homeomorphism the same is true for $\mathcal{R}^{rsc}$.   
Another very important, perhaps the most critical, implication of Theorem \ref{thm:lin_cont} is that 
$\mathcal{R}^{rsc}$ in the order topology is an one dimensional topological manifold. Later we shall give examples to show that even if $\mathcal{R}^{rsc}$ is topologically one dimensional, from the perspective  of utility representation of these preferences in terms parametric classes $\mathcal{R}^{rsc}$ is multidimensional. 
We end this section with two technical lemmas that we use later. By $cl(B)$, we mean closure of the set $B$. The next lemma shows that the infimum and the supremum of a set in $\mathcal{R}^{rsc}$ are in the closure of the set. Infimum and supremum are defined by using the order $\prec$.

\begin{lemma}\rm If $R_{0}=\inf_{\prec} U$ where $ U\subseteq \mathcal{R}^{rsc}$, then $R_{0}\in cl(U)$. 
Analogously, if $R^{0}=\sup_{\prec} U$ where $U \subseteq \mathcal{R}^{rsc}$, then $R^{0}\in cl(U)$.  	  
\label{lemma:closure}
\end{lemma}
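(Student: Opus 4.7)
The plan is to exploit the homeomorphism provided by Theorem \ref{thm:lin_cont}. Since $h:\mathcal{R}^{rsc}\to \Re$ is an order-preserving homeomorphism, it maps the order structure faithfully onto the standard order of $\Re$, so $h(\inf_{\prec}U)=\inf h(U)$ and $h(\sup_{\prec}U)=\sup h(U)$. Moreover, a homeomorphism commutes with closure: $h(cl(U))=cl(h(U))$. Hence the lemma reduces to the standard real-analysis fact that the infimum (resp.\ supremum) of a set $A\subseteq \Re$, whenever it exists, belongs to $cl(A)$. This is immediate from the definition of infimum: for every $\varepsilon>0$, there must exist $a\in A$ with $\inf A\leq a<\inf A+\varepsilon$, since otherwise $\inf A+\varepsilon$ would be a lower bound strictly larger than $\inf A$.

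If instead one prefers a direct proof within $\mathcal{R}^{rsc}$ itself, the approach is to show that every basic open neighborhood of $R_0$ meets $U$. First I would dispose of the trivial case $R_0\in U$. Otherwise every $R\in U$ satisfies $R_0\prec R$ strictly. The basic open sets in the order topology take one of the forms $]R',R''[$ with $R'\prec R_0\prec R''$, $]-\infty,R''[$ with $R_0\prec R''$, or $]R',\infty[$ with $R'\prec R_0$. In the third case the neighborhood trivially contains any element of $U$, since all such elements exceed $R_0\succ R'$. In the first two cases, if the neighborhood were disjoint from $U$, then every element of $U$ would satisfy $R\succsim R''$, making $R''$ a lower bound of $U$ strictly greater than $R_0$---contradicting $R_0=\inf_{\prec}U$. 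Hence a point of $U$ always lies in the neighborhood, giving $R_0\in cl(U)$. The proof for $R^{0}=\sup_{\prec} U$ is entirely symmetric, swapping the roles of upper and lower bounds and of $R'$ and $R''$.

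There is no essential obstacle in this proof; the substantive content---that $\mathcal{R}^{rsc}$ really is a linear continuum with the required order-topological structure---is handled by Theorem \ref{thm:lin_cont}. The only mild subtlety is ensuring that the infimum/supremum need not itself lie in $U$, which is why one must argue about neighborhoods rather than simply observing $R_0\in U$; this is exactly what the case analysis on basic open sets (or, equivalently, the appeal to the density of approximations in $\Re$ via $h$) accomplishes.
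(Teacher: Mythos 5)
Your proposal is correct. The direct argument in your second paragraph is essentially the paper's own proof: the paper splits on whether some interval $]R_{0},R[$ misses $U$ (in which case $R_{0}$ must itself lie in $U$, since otherwise $R$ would be a lower bound exceeding the infimum) versus the case where every such interval meets $U$ (so $R_{0}$ is a limit point); your split on whether $R_{0}\in U$ followed by the neighborhood analysis is the same content in a slightly different order. Your first route, reducing to $\Re$ via the order-preserving homeomorphism $h$ of Theorem \ref{thm:lin_cont}, is a legitimate alternative and involves no circularity (that theorem is established independently of this lemma); it buys a one-line reduction to the standard fact that $\inf A\in cl(A)$ for $A\subseteq\Re$, at the cost of invoking that $h$ carries infima to infima and commutes with closure, which is no harder than the direct argument itself.
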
	

\begin{proof} See the Appendix.

\end{proof}

\noindent The Lemma \ref{lemma:preference_conv} proves a closure property of the order topology on $\mathcal{R}^{rsc}$: indifference in the limit of any sequence of preferences is preserved. Lemma \ref{lemma:preference_conv} is very important when proving the constraint set of the optimization program to find the optimal mechanism in Theorem \ref{thm:optimal} is compact.

\begin{lemma}\rm Let $\{R^{n}\}_{n=1}^{\infty}\subseteq \mathcal{R}^{rsc}$ be a sequence such that $R^{n}\rightarrow R$. Let $(t^{1n},q^{1n})\rightarrow (t^1,q^1),(t^{2n},q^{2n})\rightarrow (t^2,q^2)$, where $\{(t^{in},q^{in})\}_{n=1}^{\infty}\subseteq \mathbb{Z}$, $i=1,2$.
	If $(t^{1n},q^{1n})I^{n}(t^{2n},q^{2n})$, then $(t^{1},q^{1})I(t^{2},q^{2})$. 	   
\label{lemma:preference_conv}
\end{lemma}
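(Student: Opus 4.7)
The plan is to proceed by contradiction. Assume $(t^1,q^1)$ is not indifferent to $(t^2,q^2)$ under $R$; by relabeling, $(t^1,q^1)\,P\,(t^2,q^2)$. Since each $R^n$ is strictly monotone, the indifference $(t^{1n},q^{1n})\,I^n\,(t^{2n},q^{2n})$ forces the two bundles to be either equal or diagonal for every $n$. If equality holds along a subsequence the limits coincide, contradicting $(t^1,q^1)\,P\,(t^2,q^2)$; so after passing to a subsequence the pairs are strictly diagonal with a fixed orientation, say $(t^{2n},q^{2n})<(t^{1n},q^{1n})$ for every $n$, whence $t^1\geq t^2$ and $q^1\geq q^2$. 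The key tool is the following trichotomy, which follows directly from the cut-from-above definition and single-crossing: if $b<a$ and $R^*_{ab}\in\mathcal{R}^{rsc}$ is the unique preference with $a\,I^*\,b$ guaranteed by richness, then $R\succ R^*_{ab}$ iff $a\,P\,b$ under $R$, and $R\prec R^*_{ab}$ iff $b\,P\,a$ under $R$. Geometrically, the indifference curve of $R^*_{ab}$ through $b$ passes through $a$, and the cut-from-above condition combined with single-crossing forces the indifference curve of $R$ through $b$ to lie strictly below that of $R^*_{ab}$ in the region northeast of $b$ exactly when $R\succ R^*_{ab}$, placing $a$ strictly above $R$'s curve.

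Case 1 is the strictly diagonal limit, $t^1>t^2$ and $q^1>q^2$. Richness yields a unique $R^*\in\mathcal{R}^{rsc}$ with $(t^1,q^1)\,I^*\,(t^2,q^2)$, and the trichotomy combined with $(t^1,q^1)\,P\,(t^2,q^2)$ gives $R^*\prec R$, in particular $R^*\neq R$. I will show $R^n\to R^*$, which contradicts $R^n\to R$ because the order topology on $\mathcal{R}^{rsc}$ is Hausdorff by Theorem \ref{thm:lin_cont}. Fix a basic open neighborhood $(R',R'')\ni R^*$. The trichotomy applied at the limits yields $(t^2,q^2)\,P'\,(t^1,q^1)$ and $(t^1,q^1)\,P''\,(t^2,q^2)$. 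Continuity of classical preferences makes the strict-preference relation open in $\mathbb{Z}\times\mathbb{Z}$, so these strict preferences persist for the corresponding sequence terms at all large $n$. Invoking the trichotomy in the reverse direction with $R^n$ in the role of $R^*_{(t^{1n},q^{1n}),(t^{2n},q^{2n})}$ gives $R'\prec R^n\prec R''$ for all large $n$, proving $R^n\to R^*$.

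Case 2 is the degenerate limit with $t^1=t^2$ and $q^1>q^2$; the other degenerate possibility $t^1>t^2$, $q^1=q^2$ is excluded, since $t$-monotonicity would give $(t^2,q^2)\,P\,(t^1,q^1)$ under $R$. No preference in $\mathcal{R}^{rsc}$ makes the two limit bundles indifferent, so Case 1's clean argument is unavailable. Instead, $q$-monotonicity gives $(t^1,q^1)\,P'\,(t^2,q^2)$ under every $R'\in\mathcal{R}^{rsc}$; openness of strict preference together with convergence of the bundles yields $(t^{1n},q^{1n})\,P'\,(t^{2n},q^{2n})$ for all large $n$, and the trichotomy then forces $R^n\prec R'$ eventually, for every fixed $R'$. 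Pick some $R'$ with $R'\prec R$, possible since $\mathcal{R}^{rsc}$ is homeomorphic to $\mathbb{R}$ and hence has no bottom by Theorem \ref{thm:lin_cont}; then $R^n\prec R'$ for all large $n$. On the other hand $R^n\to R$ and $R'\prec R$, so taking any $R''\succ R$ (which exists since $\mathcal{R}^{rsc}$ has no top), the basic neighborhood $(R',R'')$ contains $R^n$ eventually, giving $R^n\succ R'$. This contradicts non-reflexivity of $\prec$.

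The main obstacle is Case 2, where no limiting preference $R^*\in\mathcal{R}^{rsc}$ is available to target via the clean convergence argument of Case 1; it is resolved by exploiting the linear-continuum structure of $\mathcal{R}^{rsc}$ (absence of endpoints, Theorem \ref{thm:lin_cont}) to squeeze $R^n$ simultaneously above and below a single $R'\prec R$. The trichotomy lemma is the bridge between the order $\prec$ and strict preference on diagonal pairs and must be invoked in both directions throughout.
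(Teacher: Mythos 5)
Your argument is correct in its essentials but follows a genuinely different route from the paper. The paper represents each preference $R$ by the continuous function $f_R$ that sends a bundle to the $t$ with $(t',q')\,I\,(t,1)$, shows $f_{R^n}\rightarrow f_R$ pointwise via richness and single-crossing, upgrades this to uniform convergence on compacta by a Dini-type argument (Theorem $7.13$ of Rudin, exploiting the monotone structure induced by the order $\prec$), and then passes the equality $f_{R^n}(t^{1n},q^{1n})=f_{R^n}(t^{2n},q^{2n})$ to the limit. You avoid representations entirely: you argue by contradiction, using the equivalence between the order $\prec$ and strict preference over diagonal pairs (Lemma \ref{lemma:preference_preserve} together with its converse via comparability of $\prec$), the joint openness of strict preference for continuous complete preferences on the connected space $\mathbb{Z}$, and the Hausdorff property of the order topology, to show that $R^n$ would have to converge to the wrong limit (the unique $R^*$ making the limit bundles indifferent) or eventually leave every neighborhood of $R$. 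Your route is more elementary --- no function representation, no uniform convergence --- and isolates exactly how single-crossing forces the conclusion; the paper's route is heavier but yields the reusable uniform-convergence fact as a by-product. Note also that Lemma \ref{lemma:preference_preserve} is stated later in the paper but proved independently of the present lemma, so there is no circularity in invoking it.

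There is one bookkeeping flaw to repair. You impose two normalizations that cannot both be achieved by the single relabeling $1\leftrightarrow 2$: first $(t^1,q^1)\,P\,(t^2,q^2)$, and then, after extracting a subsequence of fixed orientation, $(t^{2n},q^{2n})<(t^{1n},q^{1n})$. If the surviving orientation is instead $(t^{1n},q^{1n})<(t^{2n},q^{2n})$, the limits satisfy $t^1\leq t^2$ and $q^1\leq q^2$, and two configurations compatible with $(t^1,q^1)\,P\,(t^2,q^2)$ escape your Cases $1$ and $2$: the strictly diagonal limit with $(t^1,q^1)<(t^2,q^2)$, where the trichotomy gives $R\prec R^*$ rather than $R^*\prec R$; and the degenerate limit $t^1<t^2$, $q^1=q^2$, where money-monotonicity makes every preference in $\mathcal{R}^{rsc}$ prefer bundle $1$, the sequence terms have bundle $1$ as the smaller element of each diagonal pair, and you must squeeze $R^n$ above every fixed $R'\succ R$ rather than below every $R'\prec R$. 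Both are dispatched by the mirror images of your existing arguments, so nothing actually fails, but as written the case analysis is incomplete. A minor further point: the joint openness of the strict preference relation in $\mathbb{Z}\times\mathbb{Z}$ deserves a line of justification (it follows from completeness, transitivity, continuity of $R'$, and connectedness of $\mathbb{Z}$), since the definition of continuity in the paper only gives closedness of upper and lower contour sets one argument at a time.
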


\begin{proof} See the Appendix.

\end{proof}

\begin{remark}\rm In this remark we make some observations about the topology. 
First, the fact that $\mathcal{R}^{rsc}$ is a linear continuum is very important. 
Being a linear continuum it has other important properties in the order topology such as closed intervals are compact. Being a linear continuum it has the least upper bound property so that infimum and supremum are well defined. We note that we use infimum and supremum of subsets of preferences in our proofs.
The classical property of preferences and the single-crossing property 
together entail that $\mathcal{R}^{rsc}$ is a linear continuum and also provide the order topology. This order topology is metrizable.  It is worth noting that the single-crossing property by itself does not give the metric on the preferences in our model. It is important that when two indifference curves intersect each other at one bundle, then the way they intersect at that bundle they intersect at all other bundles in the similar way. The classical properties of preferences ensure that intersection at all bundles are similar so that the order is well defined. This fact follows from \citep{Goswami}. Once the order is well defined, the richness entails that any classical single-crossing domain is a linear continuum. 
Not only that the order topology, and hence the metric that generates the topology,
let us define continuity of correspondences, it is very crucial in the proof of Lemma \ref{lemma:preference_conv} and the latter is very important when proving that the constraint set of the optimization program to find the optimal mechanism is compact. In a nutshell, the topology let us study both strategy-proofness and 
optimal mechanisms in more general way without resorting to parametric representations of the preferences.         
The proof of the proposition in \citep{Goswami} and the proof of Theorem \ref{thm:lin_cont} are simple. We further note that $\mathcal{R}^{rsc}$ is nor a subset of any vector space, i.e., we do not define addition or multiplication operation on $\mathcal{R}^{rsc}$. Also we do not assume any smooth structure on $\mathcal{R}^{rsc}$. An important element of our arguments in our proofs is the ordinal feature of 
single-crossing. To the best of our information this approach is novel.                       
\end{remark}

\noindent In the next subsection we provide examples of rich single-crossing domains.   
	
\subsubsection{\bf Examples of $\mathcal{R}^{rsc}$} 	

\noindent The first example is of quasilinear preferences.    

\begin{example}\rm ( {\bf quasilinear preference linear in parameter})  Consider the preference given by $u(t,q;\theta)=\theta q-t, \theta>0$. The indifference curves of this preference are linear and the slope of an indifference curve is given by $\frac{1}{\theta}$. Thus indifference curves of two distinct preferences can intersect at most once. Since $\theta\in ]0,\infty[$, $\{u(t,q;\theta)=\theta q-t\mid \theta>0\}$ is a rich single-crossing domain. In this model $q$ maybe interpreted as the probability of win and $t$ as the payment that needs to be made to obtain the object with probability $q$. 
If $t$ is the payment that needs to be made irrespective of whether the buyer wins, then the expected payment is $qt+(1-q)t=t$. Thus a quasilinear preference is  a special case of preferences with expected payments. A very important model in which probability of win appears non-trivially with payment 
is $\theta q-qt$. This model is not quasilinear in $t$, also does not satisfy the monotonicity. 
We study this model as an example of restricted classical preferences in Section \ref{sec:non_mon}. We call this model Myersonian to distinguish it from the quasilinear model. We note that 
the Myersonian model allows for lottery over payments in a non-trivial way: with probability 
$q$ the buyer pays $t$ and with probability $1-q$ the buyer pays $0$. Myersonian model is a special case of the single-crossing restricted classical preferences that potentially allow for a general class  lottery over payments since these preferences do not require monotonicity to hold when $t=0$. We study these preferences in Section \ref{sec:non_mon}.

\label{ex:qlin}
\end{example}	
  
\begin{example}\rm({\bf A model with positive income effect}) Consider $\{u(t,q;\theta)=\theta \sqrt{q}-t^{2}\mid \theta>0\}$.
To see that this is a  single-crossing domain note that for any bundle with positive $t$ and positive $q$, the slope of an indifference curve is given by $\frac{4t\sqrt{q}}{\theta}$. Slopes are unequal for any two distinct values of $\theta$, therefore the single-crossing property is satisfied. The domain is rich because $\theta\in ]0,\infty[$.                
This domain satisfies the notion of positive income effect introduced in \citep{Mishra1}.
Let $\theta \sqrt{q^*}-t_1^{'2}=\theta \sqrt{q^{**}}-t_1^{''2}$, where $q^{*}<q^{**}, t_1'<t_1''$.
Also let $\theta \sqrt{q^*}-t_2^{'2}=\theta \sqrt{q^{**}}-t_2^{''2}$, where $t_2'<t_2''$. 
Also let $t_1'<t_2',t_1''<t_2''$.
 Then $\theta [\sqrt{q^{**}}-\sqrt{q^{**}}]=t_1^{''2}-t_1^{'2}=t_2^{''2}-t_2^{'2}$. 
 Thus $t_2''^2-t_1''^2=t_2'^2-t_1'^2$. Hence $[t_2''-t_1''][t_2''+t_1'']=[t_2'-t_1'][t_2'+t_1']$. 
 Now $[t_2''+t_1'']>[t_2'+t_1']$. Thus $t_2''-t_1''< t_2'-t_1'$. This entails
 $t_2''-t_2'<t_1''-t_1'$, which is required for positive income effect.     
\label{ex:index}
\end{example}

\begin{example}\rm ({\bf A model when the pay-off relevant parameter concerns with payment})
Consider  $\{u(t,q;\theta)=q - \theta t^2| \theta>0\}$. By an argument analogous to the one applied earlier this is a rich single-crossing domain. Let $q$ denote the probability of winning the object. Cost of paying for the object increases as $\theta$ increases. We may interpret this situation as higher $\theta$ means lower willingness to pay for the object. 
\label{ex:payment}
\end{example}	

\noindent The examples above assume transformations of $t$ and $q$ to have the same curvature.    
This need not be the case. For example a strictly increasing transformation of $q$ maybe convex around $q=0$ such that     
low values of $q$ are considered almost equal to $0$, and at the same time the transformation maybe 
concave around $q=1$ such that high values of $q$ are considered almost equal to $1$.    
Such a transformation of $q$ is an abstract example  of the idea of probability weighting functions discussed in \citep{Kahn1}.     
The preferences in all examples discussed so far are described by one pay-off relevant parameter, i.e., a parameter that appears in the utility representations. In all such situations we can define an order preserving homeomorphism with the positive real numbers naturally. For example, the preferences in Example \ref{ex:qlin} define a function $u(t,q;\theta)\mapsto \theta$. 
The set of preferences in Example \ref{ex:qlin} is a one dimensional topological manifold as well as a one dimensional parametric class. 
Next we give an example that demonstrates that although $\mathcal{R}^{rsc}$ are one dimensional topological manifolds they are not one dimensional parametric classes. Thus from the perspective of pay-off relevant parameters $\mathcal{R}^{rsc}$ can be multidimensional, yet topologically they are one dimensional. The pay-off relevant parameters are called types, and hence our characterization of strategy-proof mechanisms is a study of mechanism design in multidimensional types spaces. The geometry of strategy-proofness is thus a result of a more general structure of preferences, i.e., the single-crossing property, and for such a geometric understanding we may not focus on parametric representations of preferences.                                                   

\begin{example}\rm ({\bf A model of multidimensional types}) Consider the following sets: $U=\{u(t,q)=\theta\sqrt{q}-t^{2}|\theta\in ]0,2]\}$, and $V=\{v(t,q)=2\sqrt{q}-\alpha t^{2}|\alpha\in ]0,1]\}$. We show that $U\cup V$ is single-crossing and rich. 
We show that $U\cup V$ is a single crossing domain. In the interior of $\mathbb{Z}$ slopes of indifference curves of preferences  from $U$ are given by $\frac{4t\sqrt{q}}{\theta}$. 
If $\frac{4t\sqrt{q}}{\theta'}=\frac{4t\sqrt{q}}{\theta''}$, then $\theta'=\theta''$. Slopes of indifference curves of  preferences from $V$ is given by $2\alpha t\sqrt{q}$. Analogously, if $2\alpha^{'}t\sqrt{q}=2\alpha^{''}t\sqrt{q}$, then $\alpha^{'}=\alpha^{''}$. Hence, within $U$ and $V$ preferences satisfy the single-crossing property. We need to show that the single-crossing property holds across $U$ and $V$. If single-crossing did not hold across $U$ and $V$, then for some $(t,q)$ in the interior of $\mathbb{Z}$ and for some  for some $\theta,\alpha$ we have $\frac{4t\sqrt{q}}{\theta}=2\alpha t\sqrt{q}$. 
This implies $\alpha=\frac{2}{\theta}$. Since $\theta\in ]0,2]$, we must have $\alpha\geq 1$. Since   $\alpha\in ]0,1]$, $\alpha=\frac{2}{\theta}$ holds for $\theta=2$ and $\alpha=1$. However, for these values of $\theta=2$ and $\alpha=1$ we obtain a only one preference which is $2\sqrt{q}-t^{2}$, i.e.,  $U \cap V=\{2\sqrt{q}-t^{2}\}$.  
Thus, $U \cup V$ satisfies the single crossing property.

We show that $U\cup V$ is rich.  
To see this consider expanding $U$ and $V$, i.e., $U^{*}=\{u(t,q)\mid \theta\sqrt{q}-t^{2}|\theta\in ]0,\infty[\}$ and $V^{*}=\{v(t,q)\mid 2\sqrt{q}-\alpha t^{2}|\alpha\in ]0,\infty[\}$.
Consider $z=(t,q),z'=(t',q')$ diagonal such that $q<q^{'}$, and $t<t'$. Set $\theta=\frac{t^{'2}-t^{2}}{\sqrt{q}'-\sqrt{q}}$, and $\alpha=\frac{2\sqrt{q}'-2\sqrt{q}}{t^{'2}-t^{2}}$, thus $U^{*}$ and $V^{*}$ are rich.  To show that $U\cup V$ is rich we need to show that $z,z'$ lie on either an indifference curve from $U$ or from $V$. Since $U\cup V$ is single-crossing $z,z'$ can lie on an indifference curves of preferences from either  $U$ or $V$ but not both other that the preference in the intersection. To complete the proof that $U\cup V$ is rich, we show that if $z,z'$ do not lie on an indifference curve from a preference from $U$, then  
$z, z^{'}$ do not lie on an indifference curve of a preference from $2\sqrt{q}-\alpha t^{2}, 1<\alpha$.
Let by way of contradiction, $1<\alpha=\frac{2\sqrt{q}'-2\sqrt{q}}{t^{'2}-t^{2}}$. Then, $\frac{t^{'2}-t^{2}}{ 2\sqrt{q}'-2\sqrt{q} }<1$, i.e., $\frac{t^{2}-t^{2}}{ \sqrt{q}'-\sqrt{q}  }<2$. This means there exists $\theta<2$ such that $z,z'$ are in the same indifference curve of a preference from $U$, this is a contradiction. Therefore, our claim is established, and given that $V^{*}$ is rich $z,z'$ lie on an indifference curve from $V$.    
Hence, $U\cup V$ is rich. For high enough benefits, i.e., when  $\theta=2$, the utility functions with $\alpha<1$ describes that disutility from 
per unit cost is less compared to the situation when the benefit is not high enough i.e., $\theta<2$.

We do not have one dimensional pay-off relevant parametric representation of $U\cup V$.  Let us ask whether $2$ represents  $2\sqrt{q}-t^{2}$  or $2\sqrt{q}-\frac{1}{2}t^{2}$. Suppose we break the tie by  letting $2$ represent $2\sqrt{q}-t^{2}$ and $\frac{1}{2}$ represents $2\sqrt{q}-\frac{1}{2}t^{2}$. Then the question is how do we represent $\frac{1}{2}\sqrt{q}-t^{2}$. The only option available is $1$ since $1$ multiplies $t^2$. Then $1$ cannot represent $\sqrt{q}-t^{2}$. Thus we need two parameters to represent $U\cup V$. It is not critical that two parameters appear in the preferences in $U\cup V$, rather the critical point is that none of the parameters are redundant.
If valuations are required to be pay-off relevant parameters that appear in the utility functions, then the real number that is associated with a preference in $\mathcal{R}^{rsc}$ cannot be called a valuation.
First, as we have seen, a rich single-crossing domain being one dimensional topological manifold does not 
mean that it is also parametrically one dimensional. Also, since 
  any open interval is homeomorphic to the real line there is nothing special about the homeomorphism defined in Theorem \ref{thm:lin_cont}. For example $\mathcal{R}^{rsc}$ is homeomorphic to both $]0,1[$ and $]7, 8[$, and thus it is not clear which interval should represent valuation. 
We note that $U\cup V$ is not convex.{\footnote{A set of real valued functions is convex if for any two functions $u',u''$ from the set, the function $\lambda u'+(1-\lambda )u'', 0\leq \lambda \leq 1$ is in the set. The addition $\lambda u'+(1-\lambda )u''$ is defined pointwise.}}

\label{example:two-parameter}
\end{example}

\begin{remark}\rm The preferences in Example \ref{example:two-parameter} has a two parameter representation. This example can easily be extended to include higher dimensional parametric representations. Let $U=\{u(t,q)=\theta\sqrt{q}-t^2\mid \theta\in ]0,1]\}\cup V=\{v(t,q)=\sqrt{q}-\alpha t^2 \mid \alpha \in [\frac{1}{2},1[\}\cup W=\{w(t,q)=\delta\sqrt{q}-\frac{1}{2}t^2\mid \delta\in [1,\infty [\}$.  One possible parametric representation of $U \cup V\cup W$ is $\{(x,y,z)\mid x\in ]0,1], y=z=0\}\cup \{(x,y,z)\mid y\in [\frac{1}{2},1[, x=z=0\} \cup \{(x,y,z)\mid z\in [1,\infty[, x=y=0\}$. We can create further partition such that preferences are represented by more parameters. 
However, we notice that once we consider $\mathcal{R}^{rsc}$ as a topological manifold we do not need to consider complex parametric representations. Further, as we demonstrate later the topological property and the ordinal feature of the single-crossing property of classical preferences are enough to provide a characterization strategy-proofness and optimal mechanisms for $\mathcal{R}^{rsc}$.           
\end{remark}	           
       
\noindent Our examples above demonstrate that there are domains that are single-crossing and not quasilinear. The converse is also true. For example, $\{q^{\delta}-t\mid 0<\delta<1 \}$ is quasilinear and not single-crossing. For instance, at the bundle $(t=1,q=\frac{1}{8})$ the indifference curves of $q^{\frac{1}{3}}-t$ and $q^{\frac{2}{3}}-t$ are tangential. But we note that
the indirect utility or the value function for these preferences are not maximum of affine functions in $\delta$, thus we cannot study strategy-proofness by applying the standard proof techniques from \citep{myer}. The next example 
demonstrates that we can embed subsets of these parametric class of utility functions  in single-crossing preference domains.

\begin{example}\rm ({\bf Quasilinear preference non-linear in parameter}) The slope of an indifference curve for a preference represented by $q^{\delta}-t$
	is $\frac{1}{\delta q^{\delta-1}}$. Consider $U^{\delta}=\{q^{\delta}-t\mid \delta \in [\frac{1}{4},\frac{1}{3}]\}$. Let $f(\delta)=\delta q^{\delta-1}$. Then $f'(\delta)=q^{\delta}[\delta\ln(q)+1]$. Let $q_{\frac{1}{3}}$ be such that 
	$\frac{1}{3}(q_{\frac{1}{3}})^{\frac{1}{3}-1}[\frac{1}{3}\ln(q_{\frac{1}{3}})+1]=0$.  
	Since $\frac{1}{3}(q_{\frac{1}{3}})^{\frac{1}{3}-1}>0$, the equality implies 
	$\frac{1}{3}\ln(q_{\frac{1}{3}})=-1$. Now $\frac{1}{3}\ln(q_{\frac{1}{3}})=-1\iff \ln(q_{\frac{1}{3}})=-3$. Thus $\ln(q_{ \frac {1}{4} })=-4$, and $q_{\frac{1}{4}}<q_{\frac{1}{3}}$. 
	Then, let $q^{*}$ be such that $q_{ \frac {1}{3}   }   <q^*<1$. Thus, for every
	 $q\in [q^*,1]$, for all 
		$\delta\in [\frac{1}{4},\frac{1}{3}]$, $f'(\delta)>0$. Thus for any $q\in [q^*,1]$, 
		$\frac{1}{\delta'' q^{\delta''-1}}<\frac{1}{\delta' q^{\delta'-1}}$
if $\frac{1}{4}\leq \delta'<\delta''\leq \frac{1}{3}$. That is in $q\in [q^*,1]$ 
$U^{\delta}$ satisfies the single crossing property. Now $g:[\frac{1}{4},\frac{1}{3}]\rightarrow [\frac{1}{8}, \frac{1}{2}]$ be a strictly increasing function which is a bijection. We use this bijection to create a single-crossing domain in the following way. 

To begin with consider $\delta=\frac{1}{3}$. We have $g(\frac{1}{3})=\frac{1}{2}$. Consider a bundle $(t',q^*)$, where $q^*$ is defined in the preceding paragraph. Consider the indifference curves $\{(t,q^*)\mid q^{* \frac{1}{3}}-t= q^{* \frac{1}{3}}-t' \}$
and $\{(t,q^*)\mid \frac{1}{2}q^{*}-t= \frac{1}{2}q^{*}-t' \}$. Define a preference
$\overline{R}$ whose indifference curves for all $(t,q)$ with $q\geq q^*$ are given by 
the ones from $q^{\frac{1}{3}}-t$ and for $q\leq q^*$ are given by $\frac{1}{2}q-t$. 
The preference $\overline{R}$ is monotone and continuous since both the upper contour sets and and lower contour sets are closed in $\mathbb{Z}$. Analogously append indifference curves of $\delta$ and $g(\delta)$. Since the slope of indifference curves of $\theta q-t$ decreases in $\theta$ we have
obtain a compact single-crossing domain $[\underline{R},\overline{R}]$, where $\underline{R}$ is obtained $\delta=\frac{1}{4}$ and $\theta=\frac{1}{8}$. We note that at $(t,q^*)$
$q^{*\delta}-t= g(\delta)q^*-t $ is not required to define the preferences. In fact, this equality does not hold in general. We can label the indifference curves of the new preference as we wish. The point is, $\overline{R}$ has a utility representation, however that utility function cannot be written in terms of $q^{\frac{1}{3}}-t$ and $\frac{1}{2}q-t$. But then this is another advantage of our framework. Since our analysis uses only the ordinal features of the single-crossing property of classical preferences and does not depend on functional form of the utility functions. Figure 0 depicts  
the preferences $\overline{R}$, i.e., the preference corresponding to $\delta=\frac{1}{3}$, $\theta=\frac{1}{2}$ and $\underline{R}$, i.e., the preference corresponding to $\delta=\frac{1}{4}$, $\theta=\frac{1}{8}$.   

   \begin{center}
   	\includegraphics[height=7cm, width=10cm]{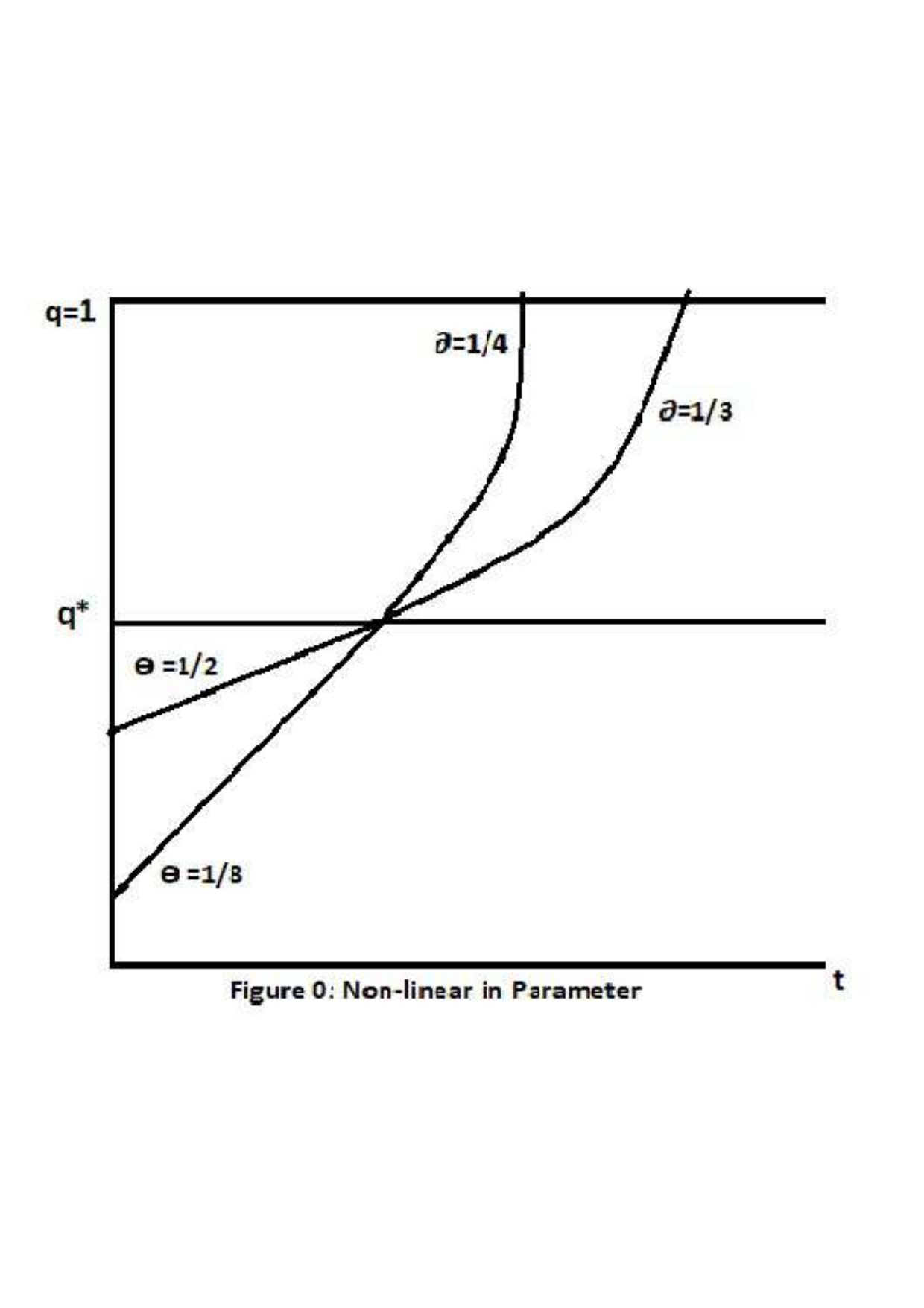}
   \end{center}   
     	
\label{example:delta}	
\end{example}

 If distinct parametric classes are interpreted as distinct behavioral patters, then our set up allows these distinct behavioral patterns to study within one unified setup. We note that in our set up the range of $q$ being $[0,1]$ is not crucial. For example if a seller does not want to sell a good with probability less than $q^*$ or does not want to sell shares below $q^*$, where $q^*$ is as defined in Example \ref{example:delta}, then also for $\delta\in [\frac{1}{4},\frac{1}{3}]$ we can carry the analysis  Now we proceed to study strategy-proof mechanisms.

\section{\bf Strategy-proof Mechanisms in $\mathcal{R}^{rsc}$}
\label{sec:sp}

Consider a domain $\mathcal{R}^{rsc}$. We study mechanisms or social choice function that maps reported 
preferences from $\mathcal{R}^{rsc}$ to a bundle. We define a strategy-proof next.

\begin{defn} [Strategy-proof Mechanisms]\rm A function $F:\mathcal{R}^{rsc}\rightarrow \mathbb{Z}$ is called a {\bf mechanism}. A mechanism is {\bf strategy-proof} if for every $R', R''\in \mathcal{R}^{rsc}$,
	$F(R')R'F(R'')$.   
\label{defn:sp}
\end{defn}

\noindent If $R'$ is the buyer's actual preference, and she reports $R''$, then $F(R')R'F(R'')$ means that the buyer has no incentive to misreport her preference since the bundle that she is allocated by $F$ at $R''$, i.e., $F(R'')$, does not make her better off relative to the bundle $F(R')$ which she is allocated at $R'$.  
We also consider a local version of the notion of strategy-proofness. 
The set $Rn(F)=\{z\in \mathbb{Z}\mid F(R)=z,~\text{for some}~R\in \mathcal{R}^{rsc}\}$ denotes the range of $F$. 

\begin{defn}\rm For a mechanism $F$ we call $Rn(F)$ to be {\bf ordered} if and only if for all $x',x''$ in $Rn(F)$ with  $x'\neq x''$ implies  $x'$ and  $x''$ are diagonal, i.e., either $x'<x''$ or $x''<x'$.     
\end{defn} 

\noindent We define the notion of local strategy-proofness next.

\begin{defn}\rm For the mechanism $F$ let $Rn(F)$ be ordered. Let $F(R')=x',F(R'')=x'' $ and $x'<x''$. Further, let for all $z\in Rn(F)$ with $z\neq x',z\neq x''$ either $z<x'$ or $x''<z$. 
In this case we call $x', x''$ to be {\bf adjacent}. We say $F$ is {\bf locally strategy-proof in range}
if and only if for all adjacent bundles $F(R'),F(R'')$, $F$ satisfies $(i)$ $F(R')R'F(R'')$ and $(ii)$ $F(R'')R''F(R')$.    	 
\label{defn:sp_local}
\end{defn}

\noindent Local strategy-proofness in range requires the buyer to be unable to obtain a better adjacent bundle by misreporting her preference.
We show that if $F$ is strategy-proof, then $Rn(F)$ is an ordered set. Therefore,  if $F$ is strategy-proof, then $F$ is locally strategy-proof in range.  
A different notion of local strategy-proofness is often used in the studies of the implications of local strategy-proofness for strategy-proofness. 
\citep{Carroll} is the first paper to define this notion by means of adjacent preferences. Since our paper is not about characterizations of local strategy-proof mechanisms, we do not discuss it further.
Further, we do not assume that $F$ is local strategy-proof in range, rather it is a consequence of monotonicity of $F$ and continuity of the corresponding indirect preference correspondence defined below.        
The two component functions of $F$ are also denoted by $t,q$, i.e.,  
$F(R)=(t(R),q(R))$. The following is the standard equivalent formation of the notion of an ordered range in terms of monotonicity of $F$.

\begin{defn}\rm 
A SCF $F:\mathcal{R}^{rsc}\rightarrow \mathbb{Z}$ is \textbf{monotone with respect to the order relation} $\prec$ on $\mathcal{R}^{rsc}$ or simply \textbf{monotone} if for every $R', R''\in \mathcal{R}^{rsc}$,
[$R'\prec R''\iff F(R') \leq F(R'')$].
\label{defn:mon}
\end{defn}

\noindent The following lemma shows that if $F$ is strategy-proof, then $F$ is monotone, i.e.,
$Rn(F)$ is an ordered set for strategy-proof mechanisms.       

\begin{lemma}\rm 
 Let  $F:\mathcal{R}^{rsc}\rightarrow \mathbb{Z}$ be a strategy proof SCF. Then $F$ is monotone.
 \label{lemma:mon}
\end{lemma}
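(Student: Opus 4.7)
The plan is to deduce monotonicity of $F$ from strategy-proofness in two stages: first show that any two distinct values of $F$ are diagonal, and then check that the assignment respects the order $\prec$. For the first stage, given $F(R')=(t',q')\neq(t'',q'')=F(R'')$, I would combine money-monotonicity, $q$-monotonicity, and strategy-proofness to rule out the three non-diagonal configurations. If $t'=t''$ and (WLOG) $q'<q''$, then $q$-monotonicity gives $F(R'')P'F(R')$, violating $F(R')R'F(R'')$; the case $q'=q''$ is symmetric; and if $t'<t''$ but $q'>q''$, then $F(R')$ dominates $F(R'')$ in both coordinates, so $F(R')P''F(R'')$ under any classical preference, which contradicts $F(R'')R''F(R')$.

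For the second stage, assume $R'\prec R''$ and suppose toward contradiction that the diagonal ordering is reversed, i.e.\ $F(R'')<F(R')$. Then $F(R'')\in\square(F(R'))$, and strategy-proofness $F(R'')R''F(R')$ puts $F(R'')\in\square(F(R'))\cap UC(R'',F(R'))$. The cut-from-above condition at $z=F(R')$ gives $F(R'')\in UC(R',F(R'))$, i.e.\ $F(R'')R'F(R')$; combined with $F(R')R'F(R'')$ from strategy-proofness, this forces $F(R')I'F(R'')$, so $F(R'')\in IC(R',F(R'))$. By single-crossing, $IC(R',F(R'))\cap IC(R'',F(R'))=\{F(R')\}$, so together with the cut inclusion the curve $IC(R'',F(R'))$ lies strictly above $IC(R',F(R'))$ at every $t<t'$. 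Hence the point $w$ of $IC(R'',F(R'))$ at horizontal coordinate $t''$ has $q$-coordinate strictly greater than $q''$. Then $wI''F(R')$ while $wP''F(R'')$ by $q$-monotonicity, so transitivity yields $F(R')P''F(R'')$, contradicting $F(R'')R''F(R')$. The converse implication (deducing $R'\prec R''$ from $F(R')<F(R'')$) then follows from comparability of the linear order $\prec$ and the forward implication: neither $R'=R''$ nor $R''\prec R'$ is compatible with $F(R')<F(R'')$.

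The hard step is the second use of the cut-from-above condition, since the definition only yields a non-strict inclusion $\square(z)\cap UC(R'',z)\subseteq\square(z)\cap UC(R',z)$ and gives information only in the southwest region $\square(z)$. The right pivot is to apply it at $z=F(R')$ (not at $F(R'')$) and combine it with the single-crossing property to upgrade the inclusion to a strict vertical gap between the two indifference curves along the column $t=t''$; this is what converts the chain $F(R')I'F(R'')$ obtained in stage two into the strict $R''$-preference $F(R')P''F(R'')$ needed to contradict strategy-proofness.
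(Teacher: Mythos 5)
Your proof is correct, but it is organized differently from the paper's. The paper evaluates both strategy-proofness constraints at the single pivot $F(R'')$: it places $F(R')$ in the lens $LC(R'',F(R''))\cap UC(R',F(R''))$ and observes that, because $R''$ cuts $R'$ from above and the two indifference curves through $F(R'')$ meet only there, this lens is contained in $\{z\mid z\leq F(R'')\}$, which yields $F(R')\leq F(R'')$ in one step with no contradiction hypothesis and no separate diagonality stage. You instead pivot at $F(R')$, first prove diagonality of the range by brute-force elimination of the non-diagonal configurations, then suppose the order is reversed, derive $F(R')I'F(R'')$, and manufacture a witness $w\in IC(R'',F(R'))$ above $F(R'')$ to upgrade the indifference to the strict preference $F(R')P''F(R'')$; this is in effect a hands-on proof of the instance of Lemma \ref{lemma:preference_preserve}$(i)$ that the argument needs. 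Both routes use the same ingredients (the cut-from-above inclusion, single-crossing, monotonicity of preferences), and yours is more explicit at the cost of length. One small point you leave unverified: the existence of the point $w$ of $IC(R'',F(R'))$ at horizontal coordinate $t''$, i.e.\ that this indifference curve does not terminate on the boundary $q=0$ at some $t_0>t''$. This is easily repaired: if it did, then for $t\in\,]t'',t_0[$ the bundle $(t,0)$ would lie in $\square(F(R'))\cap UC(R'',F(R'))$ but, by money-monotonicity and $(t'',q'')I'F(R')$, not in $UC(R',F(R'))$, contradicting the cut-from-above inclusion; so the witness always exists.
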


\begin{proof} See the Appendix. 
	
 \end{proof}	

 \noindent A strategy-proof mechanism induces an indirect utility function if an utility representation of a preference is given. In this paper we study strategy-proof mechanisms by analyzing  ordinal properties of preferences. Thus, instead of an indirect utility function we consider an indirect preference correspondence. This correspondence is defined next.

\begin{defn} [Indirect Preference Correspondence] \rm 
Consider a mechanism $F$. We call the set valued mapping, $V^{F}:\mathcal{R}^{rsc}\rightrightarrows \mathbb{Z}$ defined as $
V^{F}(R)=IC(R, F(R))=\{z\in \mathbb{Z}\mid z I F(R)\}$ {\bf indirect preference correspondence}. 
\end{defn}
 
\noindent For each preference $R$, $V^{F}$ picks the set of bundles $z$ that are indifferent to $F(R)$ under $R$. For $R',R''$ by $R'\precsim R''$ we mean either $R'=R''$ or $R'\prec R''$. 
Since $\mathcal{R}^{rsc}$ has the least upper bound property, any bounded monotone, i.e., either increasing or decreasing, sequence $\{R^{n}\}_{n=1}^{\infty}$ converges to its supremum if the sequence is increasing and converges to its infimum if the sequence is decreasing. The following notion of continuity of an indirect preference correspondence is crucial for the analysis of strategy-proof mechanisms.        

\begin{defn}[Continuous Indirect preference correspondence]\rm We call
$V^{F}:\mathcal{R}^{rsc}\rightrightarrows \mathbb{Z}$ {\bf continuous}, if for any $R$ and any monotone sequence $\{R^{n}\}_{n=1}^{\infty}$ converging to $R$: (a) the sequence $\{F(R^n)\}_{n=1}^{\infty}$ 
converges and, (b) $z=\lim_{n\rightarrow \infty}F(R^n)I F(R)$.
\label{defn:cont}
\end{defn}

\noindent Suppose $U$ is a utility representation of $R$. Further suppose $V^{F}(U)=U(F(U))$. Then the notion of continuity of $V^{F}$ becomes the standard notion of continuity of a function.
That is, given that the space of utility functions that represents the preferences is a metric space,
$V^{F}$ is continuous if $U^{n}\rightarrow U$, then $V^{F}(U^{n})\rightarrow V^{F}(U)$. 
Consider Example \ref{ex:qlin}. Let $F:]0,\infty[\rightarrow \mathbb{Z}$ be a mechanism. 
Then $V^{F}(\theta)=\theta q(\theta)-t(\theta)$ is a function, and thus instead of a continuous
indirect preference correspondence we have a continuous indirect utility function. We do not assume $V^{F}(\theta'')=V^{F}(\theta')+\int_{\theta'}^{\theta''}q(r)dr$ for $\theta'<\theta''$ as a sufficient condition in our characterization of strategy-proof mechanisms. This integral equation gives the revenue equivalence in \citep{myer} and is one of the sufficient conditions for strategy-proofness. 
We only require continuity of $V^{F}$ to study preferences in Example \ref{ex:qlin}. 
In particular, we do not require absolute continuity of $V^{F}$ in any closed bounded interval
to study strategy-proof mechanisms for quasilinear preferences. The relaxation of the integral equation as a sufficient condition provides an avenue to study the more general notion of single-crossing domains, see Theorem \ref{thm:implies_strtagey_proof_finite_range}.
Next we show that if $F$ is a strategy-proof mechanisms, then  the correspondence $V^{F}$ is continuous.
The following intermediary result is important.

\begin{lemma}\rm Let $z',z''\in \mathbb{Z}$ and $z'<z''$. Let $R\in \mathcal{R}^{rsc}$. 
	Then the following hold: $(i)$ if $z'I z''$, then $z'P^{*}z''$ for all $R^{*}\prec R$; and $z''P^{*}z'$ for all $R\prec R^{*}$,  
	$(ii)$ if $z'Pz''$, then $z'P^{*}z''$ for all $R^{*}\prec R$; and   
	if $z''Pz'$, then $z''P^{*}z'$ for all $R\prec R^{*}$, where $R^{*}\in \mathcal{R}^{rsc}$.   
	\label{lemma:preference_preserve}
\end{lemma}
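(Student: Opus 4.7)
The plan is to prove part (i) by a perturbation argument combining the single-crossing property, continuity, and the defining containment for $\prec$, and then to derive part (ii) from part (i) via richness.

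For part (i), fix $R \prec R^{*}$ with $z' I z''$ and $z' < z''$; the goal is $z'' P^{*} z'$. First, single-crossing rules out $z' I^{*} z''$, since otherwise $IC(R, z'') \cap IC(R^{*}, z'')$ would contain both $z'$ and $z''$, contradicting Definition~\ref{defn:single_crossing}. It therefore suffices to contradict $z' P^{*} z''$. Continuity of $R^{*}$ makes $\{w : w P^{*} z''\}$ open (the complement of the closed set $LC(R^{*}, z''))$, so for sufficiently small $\epsilon > 0$ the perturbation $w_{\epsilon} := (t' + \epsilon,\, q')$ still satisfies $w_{\epsilon} P^{*} z''$; shrinking $\epsilon$ further ensures $t' + \epsilon < t''$, so $w_{\epsilon} \in \square(z'')$. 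The containment $\square(z'') \cap UC(R^{*}, z'') \subseteq \square(z'') \cap UC(R, z'')$ supplied by $R \prec R^{*}$ then forces $w_{\epsilon} \, R \, z''$. But money-monotonicity of $R$ yields $z' P w_{\epsilon}$, and combining with $z' I z''$ gives $z'' P w_{\epsilon}$, a contradiction. The case $R^{*} \prec R$ is symmetric: with the opposite perturbation $w_{\epsilon} := (t',\, q' + \epsilon)$, $q$-monotonicity and $z' I z''$ place $w_\epsilon \in \square(z'') \cap UC(R, z'')$, so the reversed containment forces $w_{\epsilon} \in UC(R^{*}, z'')$; but openness of $\{w : z'' P^{*} w\}$ together with the supposed $z'' P^{*} z'$ gives $z'' P^{*} w_{\epsilon}$ for small $\epsilon$, again a contradiction.

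For part (ii), invoke richness to pick the unique $\bar R \in \mathcal{R}^{rsc}$ with $z' I^{\bar R} z''$. If $z' P z''$ under $R$, then $R = \bar R$ (which would give $z' I z''$) and $\bar R \prec R$ (which by part (i) would give $z'' P z'$) are both excluded, so $R \prec \bar R$. For any $R^{*} \prec R$ it follows that $R^{*} \prec \bar R$, and part (i) applied to $\bar R$ and $R^{*}$ yields $z' P^{*} z''$. The case $z'' P z'$ is handled symmetrically.

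The main obstacle is the perturbation step in part (i). The defining containment of $\prec$ is one-sided inside $\square(z'')$ and by itself does not compare $z'$ and $z''$ under $R^{*}$; the crux is to manufacture, using openness of the strict upper contour set (from continuity of $R^{*}$) together with monotonicity of $R$, a perturbed bundle $w_\epsilon$ that must simultaneously belong to $UC(R^{*}, z'')$ and lie outside $UC(R, z'')$, violating the containment. Once part (i) is secured, part (ii) follows from the trichotomy between $R$ and the unique $\bar R$ that makes $z'$ and $z''$ indifferent.
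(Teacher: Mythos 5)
Your proof is correct and follows the same overall route as the paper: part (ii) is derived from part (i) exactly as in the paper, by invoking richness to obtain the unique $\widetilde{R}$ with $z'\widetilde{I}z''$ and then locating $R$ relative to $\widetilde{R}$. For part (i), where the paper simply asserts that the claim is ``just a rewriting of the single-crossing condition by using the order,'' your perturbation argument --- using openness of strict upper contour sets, monotonicity, and the one-sided containment defining $\prec$ to manufacture a bundle that must lie in $UC(R^{*},z'')$ yet strictly below $z''$ under $R$ --- is a valid and careful justification of the step the paper leaves implicit.
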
	

\begin{proof} See the Appendix.

\end{proof}         

\noindent Lemma \ref{lemma:preference_preserve}, the second part of the lemma in particular, implies that the order on the set of preferences is preserved over diagonal bundles. The notions of single-crossing property defined in \citep{Saporiti} and \citep{Baisa2} are analogous to the second part of Lemma \ref{lemma:preference_preserve}.

\begin{lemma}\rm 
Let $F:\mathcal{R}^{rsc}\rightarrow \mathbb{Z}$ be strategy-proof. Then $V^F:\mathcal{R}^{rsc}\rightrightarrows \mathbb{Z}$ is continuous.
\label{lemma:cont_correspondence}
\end{lemma}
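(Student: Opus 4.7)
The plan is to handle monotone-increasing and monotone-decreasing sequences in parallel, and to verify clauses (a) and (b) of continuity in turn. Since $F$ is strategy-proof, Lemma \ref{lemma:mon} gives monotonicity of $F$, so for $R^{n}\uparrow R$ (respectively $R^{n}\downarrow R$) the sequence $\{F(R^{n})\}$ is itself diagonal-monotone and bounded above (respectively below) by $F(R)$. Both coordinates are then bounded monotone real sequences, so $F(R^{n})\to z$ for some $z\in\mathbb{Z}$ with $z\leq F(R)$ (respectively $F(R)\leq z$); this settles clause (a).

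For clause (b) the easy half is $F(R)\,R\,z$: strategy-proofness applied with truth $R$ and misreport $R^{n}$ gives $F(R)\,R\,F(R^{n})$ for every $n$, and since the lower-contour set $LC(R,F(R))$ is closed by the continuity part of classicality and contains the entire sequence, it contains the limit $z$ as well.

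For the reverse direction $z\,R\,F(R)$ I would argue by contradiction, supposing $F(R)\,P\,z$. In the increasing case define $y_{n}=(\mu_{n},q(F(R)))$ as the unique intersection of $IC(R^{n},F(R^{n}))$ with the horizontal slice $\{q=q(F(R))\}$; in the decreasing case use the vertical slice $\{t=t(F(R))\}$ and $y_{n}=(t(F(R)),\nu_{n})$. A brief continuity-plus-monotonicity argument, driven by strategy-proofness $F(R^{n})\,R^{n}\,F(R)$, shows $y_{n}$ exists in $\mathbb{Z}$. Transitivity of $y_{n}\,I^{n}\,F(R^{n})$ with $F(R^{n})\,R^{n}\,F(R)$ yields $y_{n}\,R^{n}\,F(R)$, and money-monotonicity (respectively $q$-monotonicity) at the common coordinate of $y_{n}$ and $F(R)$ traps $\mu_{n}\in[t(F(R^{n})),t(F(R))]$ (respectively $\nu_{n}\in[q(F(R)),q(F(R^{n}))]$). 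These are bounded intervals, so I extract a convergent subsequence with limit $\mu^{*}$ (respectively $\nu^{*}$).

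Applying Lemma \ref{lemma:preference_conv} to $y_{n_{k}}\,I^{n_{k}}\,F(R^{n_{k}})$ with $R^{n_{k}}\to R$, $y_{n_{k}}\to(\mu^{*},q(F(R)))$ and $F(R^{n_{k}})\to z$ transports the indifference to the limit $(\mu^{*},q(F(R)))\,I\,z$ under $R$. Either $\mu^{*}=t(F(R))$, whence $(\mu^{*},q(F(R)))=F(R)$ and so $F(R)\,I\,z$ contradicts $F(R)\,P\,z$; or $\mu^{*}<t(F(R))$, whence money-monotonicity of $R$ gives $(\mu^{*},q(F(R)))\,P\,F(R)$ and therefore $z\,P\,F(R)$, which contradicts the easy-half conclusion $F(R)\,R\,z$. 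The decreasing case is symmetric with $\nu^{*}$ compared against $q(F(R))$ under $q$-monotonicity. Hence $z\,R\,F(R)$, and combined with the easy half, $z\,I\,F(R)$. The main obstacle is this contradiction step: precisely constructing the auxiliary bundle $y_{n}$ so that it lies in $\mathbb{Z}$ and so that its free coordinate is confined to a bounded interval permitting passage to a convergent subsequence—both hinging on strategy-proofness of $F$ together with strict monotonicity of classical preferences, and on Lemma \ref{lemma:preference_conv} to convert $R^{n}$-level indifferences into an $R$-level indifference at the limit.
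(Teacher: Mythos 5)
Your proof is correct, and clause (a) together with the ``easy half'' of clause (b) (closedness of $LC(R,F(R))$ plus $F(R)\,R\,F(R^{n})$) matches the paper's treatment of the case $z\,P\,F(R)$ in substance. For the hard case $F(R)\,P\,z$, however, you take a genuinely different route. The paper stays inside the single-crossing machinery: it uses continuity of $R$ to place a rectangle $[t,t']\times[q,q']$ with corner $z$ inside the strict lower contour set of $F(R)$, invokes richness to manufacture an auxiliary preference $R^{*}$ with $F(R)\,I^{*}\,(t,q')$, and then uses Lemma \ref{lemma:preference_preserve} and the single-crossing property to propagate $F(R)\,P^{n}\,(t,q')\,P^{n}\,F(R^{n})$ for some $R^{n}\prec R^{*}$ whose image lands in the rectangle, contradicting strategy-proofness directly. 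You instead build the auxiliary bundles $y_{n}$ on the indifference curves $IC(R^{n},F(R^{n}))$ along a fixed slice through $F(R)$ --- and your existence argument is sound, since $q$-monotonicity (resp.\ money-monotonicity) at one endpoint and the strategy-proofness inequality $F(R^{n})\,R^{n}\,F(R)$ at the other, together with connectedness of the slice and closedness of the contour sets, pin $y_{n}$ into a compact interval --- then extract a convergent subsequence and invoke Lemma \ref{lemma:preference_conv} to transport the indifference to the limit, finishing with monotonicity of $R$. The trade-off: the paper's argument needs only richness and the order-preservation lemma, and never passes indifferences to a limit; yours outsources the analytic weight to Lemma \ref{lemma:preference_conv} (whose proof in the paper rests on uniform convergence of the representing functions $f_{R}$), but in exchange handles the increasing and decreasing cases symmetrically and avoids constructing any auxiliary preference $R^{*}$. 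Since Lemma \ref{lemma:preference_conv} is established in the preliminaries independently of any strategy-proofness result, there is no circularity; the only cosmetic gap is that you should note explicitly that under the hypothesis $F(R)\,P\,z$ one has $z\neq F(R)$, hence $z<F(R)$ (resp.\ $F(R)<z$) strictly in both coordinates, which is what guarantees your slice actually separates $F(R^{n})$ from $F(R)$.
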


\begin{proof} See the Appendix. 
	
\end{proof}

\noindent We can summarize the observations from Lemma \ref{lemma:cont_correspondence} and Lemma \ref{lemma:mon} in the following theorem.

\begin{theorem}\rm Let $F:\mathcal{R}^{rsc}\rightarrow \mathbb{Z}$ be strategy-proof. Then $F$ is monotone, and $V^{F}$ is continuous. Further, $F$ is locally strategy-proof in range.    
\label{thm:sp_implies}	     
\end{theorem}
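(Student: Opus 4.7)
The plan is to assemble the theorem essentially by repackaging the two preceding lemmas, plus a short deduction for the local clause. First, I would invoke Lemma \ref{lemma:mon} directly on the hypothesis that $F$ is strategy-proof to conclude that $F$ is monotone in the sense of Definition \ref{defn:mon}. Next, I would invoke Lemma \ref{lemma:cont_correspondence} on the same hypothesis to conclude that the indirect preference correspondence $V^F$ is continuous in the sense of Definition \ref{defn:cont}. These two invocations already deliver the first two conclusions of the theorem verbatim, so no additional work is needed there.

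For the third conclusion, I would first note that monotonicity of $F$ immediately implies that $Rn(F)$ is ordered: if $F(R'), F(R'') \in Rn(F)$ are distinct, then (by comparability of $\prec$) either $R' \prec R''$ or $R'' \prec R'$, and Definition \ref{defn:mon} then forces $F(R') < F(R'')$ or $F(R'') < F(R')$, i.e.\ the two bundles are diagonal. Hence the notion of adjacency from Definition \ref{defn:sp_local} is well-defined on $Rn(F)$. Now take any adjacent pair $F(R'), F(R'')$. Strategy-proofness of $F$, applied with true preference $R'$ and alternative report $R''$, gives $F(R')\,R'\,F(R'')$; applied with true preference $R''$ and alternative report $R'$, it gives $F(R'')\,R''\,F(R')$. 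These are exactly clauses $(i)$ and $(ii)$ of Definition \ref{defn:sp_local}, so $F$ is locally strategy-proof in range.

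Because both substantive steps are already done in Lemma \ref{lemma:mon} and Lemma \ref{lemma:cont_correspondence}, there is no genuine obstacle here; the only content is the observation that monotonicity of $F$ automatically yields an ordered range so that \emph{local} strategy-proofness in range is meaningful as a statement, after which global strategy-proofness trivially implies its local-in-range specialization. The theorem therefore serves primarily as a consolidated \emph{necessity} statement paving the way for the converse direction developed in Theorem \ref{thm:implies_strtagey_proof_finite_range}.
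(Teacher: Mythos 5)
Your proposal is correct and matches the paper's intent exactly: the paper itself presents this theorem as a consolidation of Lemma \ref{lemma:mon} and Lemma \ref{lemma:cont_correspondence}, with the local clause following just as you describe (monotonicity makes the range ordered so adjacency is well-defined, and global strategy-proofness then specializes to the two required inequalities). No gaps.
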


\noindent To prove the converse of Theorem \ref{thm:sp_implies} we require to put some restrictions on the range of $F$.  
To prove the converse of Theorem \ref{thm:sp_implies} we assume that $Rn(F)$ is at most countable with finitely many limit points. First we prove the result for the finite case here, and the result for finitely  many limit points is included in Section \ref{sec:count}.

Next we give  two examples that demonstrate that our two axioms are independent.       
In the next example we construct mechanism where $V^F$ is continuous and $F$ is not monotone.  

\begin{example}\rm Consider the quasilinear preferences $\{u(t,q;\theta)=\theta q-t\mid \theta >0\}$. 
	Let $\underline{\theta}-t^1=0$, and $\overline{\theta}-t^2=0$. Let $\underline{\theta}<\overline{\theta}$. Thus, $t^1<t^2$. Let $F(\theta)=(t^2,1)$ if $\overline{\theta}\leq \theta$, $F(\theta)=(0,0)$, if $\theta\in ]\underline{\theta},\overline{\theta}[$, $F(\theta)=(t^1,1)$ if $\theta<\underline{\theta}$.
	Then $V^{F}(\theta)=\theta-t^2$ if $\overline{\theta}\leq \theta$, $V^{F}(\theta)=0$ if $\theta\in ]\underline{\theta},\overline{\theta}[$ and  $V^{F}(\theta)=\theta-t^1$ if $\theta\leq \underline{\theta}$. Then $V^{F}$ is continuous, and $F$ is not monotone. 
	\label{ex:non_mon} 
\end{example}

\noindent In the next example we construct mechanism where $V^F$ is not continuous and $F$ is monotone.  

\begin{example}\rm Consider the quasilinear preferences $\{u(t,q;\theta)=\theta q-t\mid \theta >0\}$. 
	Let $(t^1,q^1)<(t^2,q^2)$. Let $\theta q^1-t^1=\theta q^2-t^2$. 
	Let $\theta^*<\theta$. 
	Let $F(\theta)=(t^1,q^1)$ if $\theta<\theta^*$ and $F(\theta)=(t^2,q^2)$ if $\theta^*\leq \theta$. 
	Thus, $F$ is monotone. To see that $V^F$ is not continuous consider a sequence $\theta^n\uparrow \theta^*$. $\lim_{n\rightarrow \infty} F(\theta^n)=(t^1,q^1)$. Thus, $\lim_{n\rightarrow \infty}\theta^n q^1-t^1=\theta^* q^1-t^1> \theta^* q^2-t^2$. 
	Hence $V^F$ is not continuous.   	
	
	\label{ex:non_cont}	
\end{example}

\noindent Before we proceed to study strategy-proof mechanisms we make observations about a few fundamental differences between our research and \citep{Laffont1}  and \citep{Tian}. 
These are parametric models of utility functions. 
In \citep{Laffont1} differentiating with respect $\theta$ is well defined and the single-crossing condition in \citep{Laffont1} uses differentiability of the utility functions. The single-crossing condition in \citep{Tian} is analogous to the second part of Lemma \ref{lemma:preference_preserve}. The implementability condition is defined by the operator ``$\max$''.  Our approach is ordinal. To begin with the single-crossing condition in our research is defined in terms of intersection of indifference sets and not by using ordering over parameters. In fact we derive an order on the set of preferences by using the classical properties which follows from \citep{Goswami}. Our axioms about monotonicity of mechanism and continuity of the indirect preference correspondence are ordinal. The mechanism as a differentiable function of $\mathcal{R}^{rsc}$ is not defined in our paper. 
Topologically, $\mathcal{R}^{rsc}$ is a one dimensional manifold, in fact it is globally Euclidean, but that does not mean that preferences are one dimensional type/parametric  spaces. We have given various examples above to clarify this point.          
Although our ordinal approach to study strategy-proofness maybe considered more general than an approach that crucially uses utility representations, it does not mean these papers are special cases of ours. In fact, neither this paper is a special case of the other two papers nor the other two papers are special cases of our research. 
The objectives of these papers are very different and they are incomparable with ours.  \citep{Laffont1} study differential allocation rules and in \citep{Tian} study cycle monotonicity. In the latter utility numbers are important. Our objective is different. We want to study the geometry of strategy-proofness ensued from the single-crossing property and find a way to compute the optimal mechanism by using this geometry which does not depend on utility representations. We have found one, which uses only finitely many incentive constraints, and given that
computations on abstract topological spaces is a very active area of research our optimization program may turn out to be useful.

\subsection{\bf Analysis of Strategy-proofness when $Rn(F)$ is Finite}              

\begin{theorem}\rm Let $F: \mathcal{R}^{rsc}\rightarrow \mathbb{Z}$ be monotone, and $V^F$ be continuous. Further, let $Rn(F)$ be finite. Then $F$ is strategy-proof.
\label{thm:implies_strtagey_proof_finite_range}	
\end{theorem}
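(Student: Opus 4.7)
The plan is to exploit the finiteness of the range by reducing strategy-proofness to a finite check at ``transition'' preferences between consecutive bundles of $Rn(F)$. First I would enumerate $Rn(F) = \{z_1, \ldots, z_k\}$ with $z_1 < z_2 < \cdots < z_k$ in the diagonal order; monotonicity of $F$ together with Lemma~\ref{lemma:mon} makes this order well defined and forces the preimage sets $A_i = F^{-1}(z_i)$ to partition $\mathcal{R}^{rsc}$ so that every element of $A_i$ is $\prec$ every element of $A_{i+1}$.

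Next, for each $i = 1, \ldots, k-1$ I would produce a transition preference $R_i^{\ast}$ between $A_i$ and $A_{i+1}$. Because $\mathcal{R}^{rsc}$ is a linear continuum (Theorem~\ref{thm:lin_cont}), the set $A_i$ is bounded above by any element of $A_{i+1}$, so $R_i^{\ast} = \sup_{\prec} A_i$ exists. A short density argument (any $R$ with $R_i^{\ast} \prec R \prec \inf_{\prec} A_{i+1}$ would have to sit in some $A_j$, but monotonicity plus the order structure rules out every $j$) forces $R_i^{\ast} = \inf_{\prec} A_{i+1}$. By Lemma~\ref{lemma:closure}, $R_i^{\ast} \in cl(A_i) \cap cl(A_{i+1})$, and it lies in exactly one of the two disjoint sets. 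From the side to which $R_i^{\ast}$ does not belong I would pick a monotone sequence $\{R^n\}$ converging to $R_i^{\ast}$ along which $F$ is constantly $z_i$ or constantly $z_{i+1}$; continuity of $V^F$ then forces $z_i\, I^{\ast}\, z_{i+1}$ under $R_i^{\ast}$.

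With this key indifference in hand, I would propagate it using Lemma~\ref{lemma:preference_preserve}. Fix any $R \in A_j$; the goal is $z_j\, R\, z_i$ for every $i \neq j$. Since $z_i$ and $z_{i+1}$ are diagonal and indifferent under $R_i^{\ast}$, Lemma~\ref{lemma:preference_preserve}(i) yields $z_{i+1}\, P^{\sharp}\, z_i$ for all $R^{\sharp} \succ R_i^{\ast}$ and $z_i\, P^{\sharp}\, z_{i+1}$ for all $R^{\sharp} \prec R_i^{\ast}$ (with indifference precisely at $R_i^{\ast}$). Because $R \in A_j$ satisfies $R_{j-1}^{\ast} \precsim R \precsim R_j^{\ast}$, this immediately gives $z_j\, R\, z_{j-1}$ and $z_j\, R\, z_{j+1}$. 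Iterating along the chains $z_j\, R\, z_{j+1}\, R\, z_{j+2}\, R\, \cdots$ and $z_j\, R\, z_{j-1}\, R\, z_{j-2}\, R\, \cdots$, and invoking transitivity of $R$, delivers $z_j\, R\, z_i$ for every $i$. Hence $F(R)\, R\, F(R')$ for every $R' \in \mathcal{R}^{rsc}$.

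The main obstacle is the boundary analysis at each $R_i^{\ast}$: one must carefully show that $\sup_{\prec} A_i = \inf_{\prec} A_{i+1}$ (ruling out any preference caught between them), decide on which side $R_i^{\ast}$ lands, and then approximate it from the opposite side so that the sequence $\{F(R^n)\}$ is constant on one of $z_i, z_{i+1}$ and continuity of $V^F$ can be applied to conclude $z_i\, I^{\ast}\, z_{i+1}$. Once this single geometric fact is secured at each of the $k-1$ transitions, the rest of the argument is ordering bookkeeping via Lemma~\ref{lemma:preference_preserve} and transitivity of $R$.
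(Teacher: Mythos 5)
Your proposal is correct and follows essentially the same route as the paper's proof: identify the transition preference $R_i^{\ast}=\sup_{\prec}A_i=\inf_{\prec}A_{i+1}$ between consecutive preimage intervals, use monotonicity to pin $F(R_i^{\ast})\in\{z_i,z_{i+1}\}$, apply continuity of $V^F$ along a monotone sequence from the opposite side to force $z_i\,I^{\ast}\,z_{i+1}$, and then propagate to the whole range via Lemma~\ref{lemma:preference_preserve} and transitivity. The only difference is organizational --- the paper first establishes local strategy-proofness in range by a contradiction argument (Claim~\ref{claim:ab}) and then derives the indifference (Claim~\ref{claim:indiff}), whereas you obtain the indifference at $R_i^{\ast}$ directly and get local strategy-proofness as a byproduct, which is a mild streamlining of the same argument.
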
	

\begin{proof} First we show that  monotonicity of $F$ and continuity of $V^{F}$ imply that $F$ is locally strategy-proof in range. Then using the single-crossing property strategy-proofness is extended to $Rn(F)$.   See the Appendix for details.  
	
\end{proof}

\noindent The proof of Theorem \ref{thm:implies_strtagey_proof_finite_range} that we present is constructive, i.e., the proof reveals how any strategy-proof mechanism looks like. We consider an example of such a rule to illustrate the geometry. Suppose $\#Rn(F)=3$, $\#$ denotes the number of elements in a set. Let $Rn(F)=\{a,b,c\}$, and let $a<b<c$. The proof of Theorem \ref{thm:implies_strtagey_proof_finite_range} shows that if a $F$ is monotone and $V^{F}$ is continuous, then $F$ must be defined as follows.

$$F(R) = \begin{cases}
a, & \text{if $R \prec R_{1}$;}\\
\text{either}~ a~\text{or}~ b, & \text{if $R=R_1$}\\
b & \text{if $R_1\prec R \prec R_2$}\\
\text{either}~b~\text{or}~c & \text{if $R=R_2$,}\\
c & \text{if $R_2\prec R$}
\end{cases}$$

\noindent where $aI_1b$ and $bI_2c$. To see that $F$ is strategy-proof, without loss of generality consider $R$ such $F(R)=b$. By Lemma  \ref{lemma:preference_preserve}, $bPa$ and  
$bPc$. In Figure $1$ we depict this mechanism geometrically. The indifference curves that are drawn in Figure $1$ are convex for the sake of simplicity. We do not require convexity of preferences in our proofs.  

\begin{center}
\includegraphics[height=8.5cm, width=12cm]{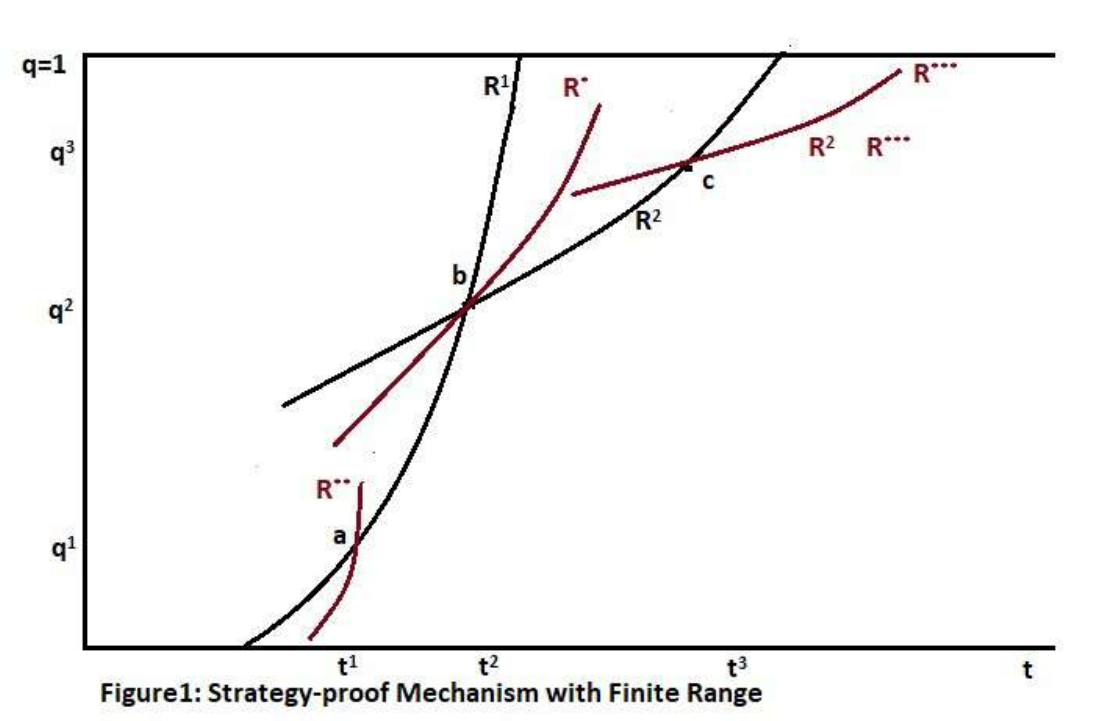}
\end{center}

\noindent In Figure $1$, $a=(t^1,q^1), b=(t^2,q^2),c=(t^3,q^3)$, $R^{**}\prec R^{1}, R^1\prec R^{*}\prec R^{2}$ and $R^{2}\prec R^{***}$. Another possible representation of $F$ with three elements is when all three bundles fall on a single indifference cure, 
i.e., $aIbIc$ for some $R$. By the single-crossing property $F(R)=b$.  
The crucial aspect of our constructive proof is that given a set of the bundles, i.e.,
given their ``coordinates'', monotonicity of $F$ and continuity of $V^{F}$ fix the preferences under which at least two, and maximum three of these, bundles are indifferent. By the single-crossing property these preferences are unique. 
Once these special preferences are obtained the only task that remains is to fix the allocations for preferences in between any two such special preferences. The special preferences in Figure $1$ are $R^1$ and $R^2$. Further, we can easily read off the formula to compute the expected revenue from this geometry. Let $\mu$ be a probability measure that is defined on the intervals from $\mathcal{R}^{rsc}$. 
The formal definition of this measure is provided in Section \ref{sec:opt}. 
For example, the expected revenue from the mechanism in Figure $1$ is given by
$t^1\mu(]-\infty,R^{1}])+t^2\mu(]R^{1},R^{2}])+t^{3}\mu([R^{2},\infty[)$, where $\mu(\{R\})=0$ for all $R\in \mathcal{R}^{rsc}$. If the revenue is of the form $t^1q^1$, then replace $t^1$ by $t^1q^1$ and so on.  
Next we make a remark about ``monotonicity'' condition that is used characterization of strategy-proof mechanisms. 

\begin{remark}\rm \citep{Mishra1} study dominant strategy incentive compatible (DSIC) mechanisms and uses  a generalized version of the weak monotonicity condition studied in \citep{Bik}. The latter considers quasilinear preferences. We try to see how the monotonicity condition in \citep{Mishra1} looks like in the context of our model.
Let $F(\widehat{R})=(t',q')$, and $\widehat{R}\prec R$. Let $F(R)=(t'',q'')$. Let $q''> q'$. Then,  $t''>t'$ and $(t'',q'')I(t',q')$. According to the the notations in \citep{Mishra1} 
$z=(t(\widehat{R}), q(\widehat{R}))$, $a=q''$, $V^{R}(q'',z)=t''$. Thus if $(t',q'), (t'',q'')$ are adjacent bundles $R$ is one of the special preference that we identify in our characterization. 
\citep{Mishra1} provide an extensive literature review on the ``monotonicity'' condition used in the characterization of DSIC mechanisms for quasilinear preferences.        	     
	
\label{remark:mon}	
\end{remark}

\noindent We end this section with the definition of an individually rational mechanism.

\begin{defn}\rm A mechanism $F:\mathcal{R}^{rsc}\rightarrow \mathbb{Z}$ is {\bf individually rational}
if and only if for all $R$, and for all $t$, $F(R)R(t,0)$.     
\end{defn}	   

\noindent The bundle allocated by the mechanism $F$ does not make the buyer worse off relative to any bundle
that has $q=0$. By monotonicity of $R$, $F(R)R(0,0)\implies F(R)R(t,0)$. Thus without loss of generality we shall consider a mechanism to be individually rational if $F(R)R(0,0)$.

\subsection{\bf Optimal Mechanisms when $Rn(F)$ is Finite}
\label{sec:opt}

We study optimal mechanisms for intervals $[\underline{R},\overline{R}]$ as a subspace of $\mathcal{R}^{rsc}$ in the order topology. In the proof of Theorem \ref{thm:implies_strtagey_proof_finite_range}, Lemma \ref{lemma:mon} and Lemma \ref{lemma:cont_correspondence} we have made observations that these results hold for 
closed intervals as well.   
Let the Borel sigma-algebra on $[\underline{R},\overline{R}]$ generated by the subspace order topology    
be denoted by ${\cal B}$. Let $\mu$ denote a probability measure on ${\cal B}$. Thus, $([\underline{R},\overline{R}],{\cal B}, \mu)$ is a probability space. We assume $\mu $ to be $(i)$ ({\bf non-trivial}) for any non-empty and non-singleton interval $A\subseteq [\underline{R},\overline{R}]$ that is  $\mu(A)>0$ ; and $(ii)$ ({\bf continuous}) for any $A\in {\cal B}$ with $\mu(A)>0$ and any $c\in \Re$ with $0<c<\mu(A)$, there exists $B\in {\cal B}$ with $B\subseteq A$ and $\mu(B)=c$. Continuity of $\mu$ implies $\mu$ is non-atomic, in particular $\mu(\{R\})=0$ for all $R\in [\underline{R},\overline{R}]$. We assume the real line $\Re$ to be endowed with the standard Borel
sigma algebra obtained from the Euclidean metric space. We denote this sigma-algebra by $B(\Re)$.
We show that if $F$ is strategy-proof, then the component functions $t, q $ are ${\cal B}/B(\Re)$ measurable.

\begin{lemma}\rm Let $F:[\underline{R},\overline{R}]\rightarrow Rn(F)$ be monotone and 
	$Rn(F)$ be finite. Then the component functions of $F$, $t,q$ are $\mathcal{B}/B(\Re)$measurable.      
\label{lemma:measurable}
\end{lemma}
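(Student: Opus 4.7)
(Proposal) The plan is to exploit the monotonicity of $F$ together with the finiteness of $Rn(F)$ to show that $t$ and $q$ are simple functions whose level sets are order-intervals in $[\underline{R},\overline{R}]$, and hence belong to $\mathcal{B}$.

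First I would enumerate the range. Write $Rn(F) = \{z_1,\ldots,z_k\}$ with $z_1 < z_2 < \cdots < z_k$; this enumeration is well-defined because monotonicity of $F$ together with Lemma \ref{lemma:mon} guarantees that any two distinct elements of $Rn(F)$ are diagonal. Write $z_i = (t_i,q_i)$, so $t_1 < \cdots < t_k$ and $q_1 < \cdots < q_k$. Then $t$ takes only the finitely many values $t_1,\ldots,t_k$ and $q$ takes only the finitely many values $q_1,\ldots,q_k$, so both are simple functions. It therefore suffices to show that each level set $A_i := F^{-1}(\{z_i\})$ lies in $\mathcal{B}$, since the preimage of any Borel set of $\Re$ under $t$ (respectively $q$) is a finite union of such $A_i$'s.

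Next I would argue that each $A_i$ is an order-interval in $[\underline{R},\overline{R}]$. By monotonicity of $F$, whenever $R' \prec R''$ and $F(R') = F(R'')$, every $R$ with $R' \precsim R \precsim R''$ must also satisfy $F(R) = F(R')$; otherwise $F(R)$ would be diagonal to $F(R')=F(R'')$, contradicting that $F(R)$ lies strictly between $F(R')$ and $F(R'')$ in the order on $Rn(F)$. Therefore $A_i$ is order-convex. Let $\alpha_i = \inf_{\prec} A_i$ and $\beta_i = \sup_{\prec} A_i$ (these exist because $[\underline{R},\overline{R}]$ is a compact linear continuum by Theorem \ref{thm:lin_cont}). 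Then $A_i$ is one of the four intervals with endpoints $\alpha_i,\beta_i$, differing only in whether the endpoints are included. Each such interval lies in $\mathcal{B}$ because open intervals are basic open sets in the order topology, and singletons $\{\alpha_i\},\{\beta_i\}$ are closed (since the order topology is Hausdorff) and hence Borel.

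The main point that requires a little care, and is the only potential obstacle, is the order-convexity argument above: it uses both monotonicity of $F$ and the fact that any two distinct elements of $Rn(F)$ are comparable in the diagonal order, so that the contradiction ``$F(R)$ strictly between $F(R')$ and $F(R'')$'' is meaningful. Once this is in place, the preimages
\[
t^{-1}(B) = \bigcup_{i:\,t_i \in B} A_i, \qquad q^{-1}(B) = \bigcup_{i:\,q_i \in B} A_i
\]
are finite unions of elements of $\mathcal{B}$ for every $B \in B(\Re)$, establishing $\mathcal{B}/B(\Re)$-measurability of both component functions.
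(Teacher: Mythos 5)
Your proposal is correct and follows essentially the same route as the paper: the paper's proof also observes that each level set $t^{-1}(\{t^i\})$ is an interval in the order topology (hence in $\mathcal{B}$) and writes $t^{-1}(B)$ as a finite union of such sets. You simply spell out the order-convexity step and the endpoint bookkeeping that the paper leaves implicit.
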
	

\begin{proof}  See the Appendix.          	
\end{proof}

\noindent Now we can define expected revenue of the seller.

\begin{defn}\rm Let $[\underline{R},\overline{R}]$ and $\mu$ be given. Let $Rn(F)$ be finite. The expected revenue form $F$ is $\int_{[\underline{R},\overline{R}]}t(R)d\mu(R)$.
We denote the expected revenue from $F$ by $E[F]$. 
Since we fix a probability measure we do not mention $\mu$.       	      
\end{defn}

\noindent The important aspect of the optimal mechanism is the constraint set. Thus without loss of generality we shall assume that revenue of the seller is given by $t$.   
We define optimal mechanism next. 

\begin{defn}\rm Let $[\underline{R},\overline{R}]$ and $\mu$ be given. Let 
	\[\mathcal{F}=\{F: [\underline{R},\overline{R}]\rightarrow \mathbb{Z}\mid F~\text {is strategy-proof, indvidually rational, has finite}~Rn(F)\}.\]
Then $F^{*}\in \mathcal{F}$ is {\bf optimal} if and only if for all $G\in \mathcal{F}$, $E[F^{*}]\geq E[G]$. 
    	
\label{defn:optimal} 
\end{defn}

\noindent Throughout our discussion of optimal mechanisms we assume $\mu$ to be continuous and non-trivial.      	
                   
\begin{theorem}\rm If $\mu$ is non-trivial and continuous, then 
there is $\overline{T}$ such that for all $F\in \mathcal{F}$, $0\leq E(F)\leq \overline{T}$. 	
\label{thm:optimal}
\end{theorem}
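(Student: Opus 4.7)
The plan is to bound the payment $t(R)$ uniformly—across both the mechanisms $F\in\mathcal{F}$ and the reports $R\in[\underline{R},\overline{R}]$—by a single constant that depends only on $\overline{R}$, and then integrate.

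First I would invoke strategy-proofness only indirectly, through Lemma \ref{lemma:mon}: any $F\in\mathcal{F}$ is monotone in the order $\prec$, so the payment component $t$ is nondecreasing on $[\underline{R},\overline{R}]$. Hence for every $R\in[\underline{R},\overline{R}]$,
\[ t(R)\;\leq\;t(\overline{R}), \]
and it suffices to bound $t(\overline{R})$ in a manner independent of $F$.

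Next I would use individual rationality at $\overline{R}$: $F(\overline{R})=(t(\overline{R}),q(\overline{R}))\,\overline{R}\,(0,0)$. Because $q(\overline{R})\leq 1$, $q$-monotonicity of $\overline{R}$ gives $(t(\overline{R}),1)\,\overline{R}\,(t(\overline{R}),q(\overline{R}))$, and transitivity yields $(t(\overline{R}),1)\,\overline{R}\,(0,0)$. By money-monotonicity and continuity of $\overline{R}$ the set
\[ \bigl\{\,t\in\Re_{+}\mid (t,1)\,\overline{R}\,(0,0)\,\bigr\} \]
is a closed interval $[0,\overline{T}]$, where $\overline{T}$ is the unique payment satisfying $(\overline{T},1)\,I_{\overline{R}}\,(0,0)$. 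Therefore $t(\overline{R})\leq\overline{T}$, and combining with the monotonicity step,
\[ t(R)\;\leq\;\overline{T}\qquad\text{for every }R\in[\underline{R},\overline{R}]\text{ and every }F\in\mathcal{F}. \]

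Finally, by Lemma \ref{lemma:measurable}, $t$ is $\mathcal{B}/B(\Re)$-measurable, so
\[ 0\;\leq\; E[F]\;=\;\int_{[\underline{R},\overline{R}]}t(R)\,d\mu(R)\;\leq\;\overline{T}\cdot\mu([\underline{R},\overline{R}])\;=\;\overline{T}. \]

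The only delicate point—and what I expect to be the single real obstacle—is guaranteeing that $\overline{T}<\infty$, i.e., that the indifference curve of $\overline{R}$ through the origin actually meets the level $q=1$ at a finite payment. For every $\mathcal{R}^{rsc}$ constructed in Section \ref{sec:pre} (quasilinear, the $\theta\sqrt{q}-t^{2}$ family, etc.) this is automatic, and it functions as the natural regularity condition on the endpoint $\overline{R}$; I would either treat it as implicit in the standing examples or state it as a mild hypothesis on $\overline{R}$. Observe that strategy-proofness enters only to produce monotonicity (Lemma \ref{lemma:mon}), while individual rationality does the real work of converting $\overline{R}$'s origin-indifference curve into a uniform ceiling on revenue; the non-triviality and continuity of $\mu$ play no role in the upper bound and will be used only in the subsequent existence theorem.
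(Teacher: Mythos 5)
Your proof is correct and is essentially the paper's argument: the paper likewise defines $\overline{T}$ by $(\overline{T},1)\,\overline{I}\,(0,0)$ and uses individual rationality to cap every payment by $\overline{T}$ before integrating over the probability measure (it applies individual rationality at each $R$ together with the single-crossing property to bound the indifference curve of $R$ through $(0,0)$, whereas you route through Lemma \ref{lemma:mon} to reduce everything to the top preference $\overline{R}$ --- an immaterial difference). Your caveat that $\overline{T}<\infty$ requires the indifference curve of $\overline{R}$ through the origin to meet $q=1$ at a finite payment is well taken, but the paper tacitly makes exactly this assumption for every $R\in\mathcal{R}^{rsc}$ (each indifference class is identified with a bundle $(t,1)$ in the proof of Lemma \ref{lemma:preference_conv}), so your proof introduces no gap relative to the paper's own standing hypotheses.
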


\begin{proof} We first show that for range with maximum $l$ number of bundles, where $l$ is an arbitrary positive integer, there exists a mechanism that maximizes expected revenue. The optimal mechanism may have less than $l$ number of bundles in the range.

Let $\mathcal{F}_{l}=\{F_l\mid F_l\in \mathcal{F}, \#Rn(F_l)\leq l\}$, $E(F_{l})$ denotes the expected revenue of a mechanism from $\mathcal{F}_l$.               
Then the optimal mechanism in $\mathcal{F}_{l}$ is denoted by $F^{*}_{l}$. 
By individual rationality $F_l(\underline{R})R(0,0)$. Further, $F_l$ such that $F_l(R)=(0,0)$ for all $R$ is not optimal.    
Without loss of generality we assume that $\#Rn(F_l)\geq 2$. 
If $Rn(F_l)=\{(t^1,q^1)\mid t^1>0\}$, i.e., $\#Rn(F_l)=1$, then by individual rationality we can set $F(\underline{R})=(0,q^{0})$
where $(0,q^{0})\overline{I}(t^1,q^1)$. Since $\mu$ is continuous such readjustment does not change the expected revenue. If $Rn(F_l)=\{(0,q^1)\}$, then let $F_l(\overline{R})=(t^1,1)$ where $(0,q^1)\overline{R} (t^1,1)$. Again by continuity of $\mu$ such readjustment does not change the revenue. 
Further, we do not consider a bundle $(t,q)$ in the range with $0<t^1<\overline{T}, 0<q<1$ such that the bundle is allocated at only for one single $R$. This is because $\mu(\{R\})=0$.

We show that   $F^{*}_l $ exists. In particular, we show that $E(F^{*}_{l})$ is given by 
Equation (\ref{eqn_optimization_problem}). In the following let $l\geq 2$.   
\begin{equation}
	E(F^{*}_{l})=\max_{t^{k},q^{k},R^{k}, k=0,\ldots,l-1} \sum_{k=0}^{l-1}~~t^{k}\mu([R^{k},R^{k+1}])=\max_{F_l\in \mathcal{F}_l}E(F_l)
\label{eqn_optimization_problem}
\end{equation}

	\noindent s.t.
	$R^{0}= \underline{R}, R^{l}=\overline{R}$, $(t^0,q^0)R^{0}(0,0)$, 
	$(t^{k},q^{k})I^{k}(t^{k-1},q^{k-1}), k=1\ldots,l-1$
	$R^{k-1}\precsim R^{k}$  for $k=1,\ldots,l$. 
	$0\leq t^{0}, t^{k-1}\leq t^{k}$ and $0\leq q^{k-1}\leq q^{k}\leq 1$ for $k=1,\ldots,l-1$. 
	\label{lemma:sp_ir_1}

\medskip

\noindent  The objective function in the optimization problem can be identified with the $F_l$  described by the following procedure: $F_l(R^{0})=(t^0,q^{0})$, $F_l(R^{1})=(t^{1},q^{1}),\ldots, F_l(\overline{R})=(t^{l-1},q^{l-1})$. 
Further, $F_l(R)=(t^0,q^{0})$ for all $R\in [\underline{R},R^{1}[$,  $F_l(R)=(t^{1},q^{1})$ for all $R\in [R^{1},R^{2}[$, $F_l(R)=(t^{2},q^{2})$ for all $R\in [R^{2},R^{3}[\ldots$
and $F_l(R)=(t^{l-1},q^{l-1})$ for all $R\in [R^{l-1},\overline{R}]$.
By the single-crossing property $F_l$ is strategy-proof and individually rational. 
In other words, if we solve the maximization problem in Equation (\ref{eqn_optimization_problem}) subject to the constraints, then the maximum obtained can indeed be ascribed to a strategy-proof and individually rational mechanism. Since a strategy-proof mechanism with range $l$ or less is completely determined by the special preference $R^1,\ldots, R^{l-1}$ the sum in Equation (\ref{eqn_optimization_problem}) corresponds is the expected revenue from the mechanism. 

\begin{remark}\rm 
In other words the geometry of strategy-proof mechanisms that follows from the single-crossing property provides us a general way of computing the optimal mechanism. If we are looking for an optimal mechanism that can have at most $l$ bundles, then the number of constraints is finite. This happens because of the single-crossing property since the single crossing property allows us to formulate the optimization problem in terms of the  special preferences and thus we do not need to consider all the incentive constraints. Although it is clear but for the sake of completeness we note that the objective function 
does not searches among preferences, i.e., the optimization program looks for the special preferences, and these preferences do not come from a vector space since we have not imposed any such restriction in this paper.                            
\end{remark}

\noindent Next we show that the optimization problem has a solution. We prove this result in the following lemma.   
The upper bound $\overline{T}$ is given by $(0,0)\overline{I}(\overline{T},1)$. 
One of the steps in the proof of Lemma \ref{lemma:soln_exists} shows that the constraint set of this optimization problem is compact. The other step  shows that the objective function is continuous. This implies that the optimal solution exists
and is a strategy-proof and individually rational mechanism.

\begin{lemma}[{\bf Optimal}]\rm Let $\mu$ be non-trivial and continuous. The optimization problem in Equation (\ref{eqn_optimization_problem}) has a solution.
\label{lemma:soln_exists}
\end{lemma}

\noindent See the Appendix for the rest of the proof.

\end{proof}

\noindent If the $\sup\{E(F)\mid F\in \mathcal{F}\}$ is attained, then $F^{*}$ exists.  Solving an optimization problem analytically maybe hard. However, our results provide a general form of the optimization problem that needs to be solved such that only a finite number of incentive constraints are required to be considered. 
The nature of solution to the optimization problem depends on more specific form of domains and the measure $\mu$. The optimization problem in Equation (\ref{eqn_optimization_problem}) uses only the ordinal feature of preferences, i.e., only $R$ appears in the objective function and not its utility representation. Our result brings out the general nature of the problem of finding the optimal mechanism for rich single crossing preference domains irrespective of how preferences are represented. As an application of our result we solve the optimal mechanism problem for quasilinear preferences next. At the end of Section \ref{subsec:qlin} we also how the optimal mechanism can be extended to scenarios where the number of buyers is more than one.

\subsubsection{\bf Optimal Mechanism with quasilinear preferences}  
\label{subsec:qlin}            
 
The preferences in Example \ref{ex:qlin} are given by $\theta q-t$. Where $q$ denotes a share of a good, and $t$ is the payment that is made by the agent for the share $q$.
That is, the buyer pays $t$, and the seller receives $t$.  Let $\theta\in [\underline{\theta},\overline{\theta}]$. Also let $\underline{\theta}>0$ and $\overline{\theta}<\infty$. 
We can identify the preference $\theta q-t$ by the pay-off relevant parameter $\theta$.
That is the order on the preferences follow the natural order in $[\underline{\theta},\overline{\theta}]$.         
Further, let $\theta$ follow the distribution $\Gamma$, and $\gamma$ is the continuous density. In the next proposition we show that 
if $\Gamma$ satisfies increasing hazard rate, then the optimal mechanism for every 
$l\geq 2$
is deterministic.
A deterministic mechanism has two bundles in its range, one is of the kind 
$(t,1), t>0$ and the other is $(0,0)$. 
In Proposition \ref{prop:qlin} we assume that $\Gamma$ admits increasing hazard rate.    
We keep the proof of Proposition \ref{prop:qlin}
in the text since the proof brings out the interaction between increasing hazard rate and slopes 
of indifference curves which entails the deterministic mechanism. We wish to clarify that the by no means we are claiming that \citep{myer} does not study a model without monotone hazard rate. In fact, in our general model the probability measure is quite general and thus trivially we also do not have any assumption concerning monotone hazard rate. Proposition \ref{prop:qlin} demonstrates that we can study  highly applied results in mechanism design  theory by using our framework as well.

\begin{prop}\rm Let $\mathcal{F}=\{F\mid F:[\underline{\theta},\overline{\theta}]\rightarrow \mathbb{Z}, ~\text{strategy-proof, indvidually ration}$
				$\text{al, has finite}~Rn(F)\}$. 
	Let $\frac{\gamma(\theta')}{1-\Gamma(\theta')}\leq \frac{\gamma(\theta'')}{1-\Gamma(\theta'')}$ if $\theta'<\theta''$. Then for all $l$, $F_{l}^{*}$ is a deterministic mechanism.
		     	
		\label{prop:qlin}   
	\end{prop}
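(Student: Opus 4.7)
The plan is to specialize the general optimization program in Equation (\ref{eqn_optimization_problem}) to the quasilinear domain and to show that, after an Abel-summation rewriting of the revenue, the resulting problem reduces to a linear program whose optimum is supported on a single jump of the allocation. I will parametrize an element of $\mathcal{F}_l$ by cutoffs $\underline{\theta} = \theta^0 \leq \theta^1 \leq \cdots \leq \theta^{l-1} \leq \theta^l = \overline{\theta}$ and allocations $0 \leq q^0 \leq \cdots \leq q^{l-1} \leq 1$, with bundle $(t^k,q^k)$ assigned to types in $[\theta^k,\theta^{k+1}]$. The indifference constraint $(t^k,q^k) I^k (t^{k-1},q^{k-1})$ becomes $t^k - t^{k-1} = \theta^k(q^k - q^{k-1})$ for $k \geq 1$, and binding individual rationality at the lowest type yields $t^0 = \underline{\theta}\,q^0$.

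Using summation by parts together with $\Gamma(\underline{\theta})=0$, I will rewrite the expected revenue as
\[
E(F_l) \;=\; \underline{\theta}\,q^0 \;+\; \sum_{j=1}^{l-1} \theta^j\bigl(1-\Gamma(\theta^j)\bigr)(q^j - q^{j-1}).
\]
Introducing the nonnegative increments $\Delta q^0 := q^0$ and $\Delta q^j := q^j - q^{j-1}$ for $j \geq 1$, the monotonicity and range constraints collapse to $\Delta q^j \geq 0$ with $\sum_{j=0}^{l-1} \Delta q^j \leq 1$, while the objective becomes the linear functional $\underline{\theta}\,\Delta q^0 + \sum_{j=1}^{l-1} f(\theta^j)\,\Delta q^j$, where $f(\theta) := \theta(1-\Gamma(\theta))$. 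For any fixed cutoffs this is a linear program over a simplex, so its optimum places the entire unit mass on a single coordinate, namely the one with the largest coefficient.

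Optimizing next over the cutoffs amounts to maximizing $f$ on $[\underline{\theta},\overline{\theta}]$. The IHR hypothesis implies that $\theta - (1-\Gamma(\theta))/\gamma(\theta)$ is strictly increasing, so $f'(\theta) = 1 - \Gamma(\theta) - \theta\gamma(\theta)$ changes sign at most once; hence $f$ is strictly quasi-concave on $[\underline{\theta},\overline{\theta}]$ with a unique maximizer $\theta^*$, and $f(\theta^*) \geq f(\underline{\theta}) = \underline{\theta}$. Choosing any index $j^*$ with $\theta^{j^*}=\theta^*$ and loading $\Delta q^{j^*}=1$ forces $q^k = 0$ for $k<j^*$ and $q^k=1$ for $k\geq j^*$; the indifference relations then fix $t^k = 0$ for $k<j^*$ and $t^k = \theta^*$ for $k\geq j^*$. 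Hence the optimal range reduces to the two bundles $(0,0)$ and $(\theta^*,1)$, which is a deterministic mechanism.

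The step I anticipate being most delicate is the Abel-summation rewriting, since getting the telescoping and signs right is what exposes the linear-programming structure and isolates the role of the hazard rate condition in pinning down a bang-bang allocation. A secondary issue is the boundary configuration in which $\theta^*=\underline{\theta}$: there the ``two-bundle'' optimum degenerates into the single bundle $(\underline{\theta},1)$ served to every type, and one must either invoke a mild non-degeneracy hypothesis such as $\underline{\theta}\gamma(\underline{\theta})<1$ or interpret the deterministic conclusion up to a $\mu$-null set, since $\mu$ is non-atomic.
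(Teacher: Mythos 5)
Your proof is correct, but it takes a genuinely different route from the paper's. The paper writes down the Lagrangian of the program in Equation (\ref{eqn_optimization_problem}), derives the first-order conditions, solves for the multipliers $\lambda_k=-[1-\Gamma(\theta^k)]$, and concludes that every interior cutoff would have to satisfy $\frac{\gamma(\theta^k)}{1-\Gamma(\theta^k)}=\frac{1}{\theta^k}$; since the left side is non-decreasing and the right side strictly decreasing, at most one cutoff can satisfy this, so a menu with two or more interior jumps is impossible. Your argument instead eliminates the transfers via the binding indifference and IR constraints, performs the summation by parts to get $E(F_l)=\sum_{j}f(\theta^j)\Delta q^j$ with $f(\theta)=\theta(1-\Gamma(\theta))$ (your exchange-of-summation computation is right: the inner sum telescopes to $1-\Gamma(\theta^j)$), and observes that for fixed cutoffs this is a linear functional over the simplex $\{\Delta q\geq 0,\ \sum\Delta q\leq 1\}$, hence maximized at a vertex. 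This buys two things the paper's proof does not: (i) it is a global argument, so you never need to justify that the optimum is a critical point of the Lagrangian (the paper's reliance on FOCs implicitly assumes a constraint qualification and an interior configuration); and (ii) it exposes that the bang-bang (deterministic) conclusion holds for \emph{any} fixed cutoffs irrespective of the hazard-rate condition --- IHR enters only in the second stage, to make $f$ unimodal and the optimal posted price $\theta^*$ unique. The only caveats are the ones you already flag yourself: ties among the coefficients $f(\theta^j)$ and the degenerate case $\theta^*=\underline{\theta}$, both of which the paper also handles only up to $\mu$-null sets (it explicitly discards bundles allocated to a single preference). So the proposal stands as a cleaner and slightly stronger argument for the same conclusion.
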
          
	
	\begin{proof} Without loss of generality we can assume that $Rn(F)$ has 	
		bundles $(0,0)$ and $(t^{l},1)$. Since $R^{l}=\overline{R}$, by individual rationality at $F(R^{l-1})$
		consider $(t, 1)$ such that $\theta^{l-1}q^{l}-t^{l}=\theta^{l-1}-t$. 
		Then for every $R\in ]R^{l-1},R^{l}]$, 
		set $F(R)=(t,1)$, call this payment $t^{l}$. 
		Considering $(t^l,1)$ is without loss of generality since we are analyzing optimal mechanisms. 
		We fix $q$, and show that the first order conditions imply that the optimal mechanism must have at most two bundles in its range. Consider the optimization problem so that that range can have at most $l$ bundles. By Lemma \ref{lemma:soln_exists} we know that optimum solution exists.    
		Let by way of contradiction  $t^{k-1}<t^{k}$ and $q^{k-1}<q^{k}, \theta^{k-1}<\theta^{k}$.
		Further, we do not have to consider a bundle $(t,q)$ in the range such that $0<t^1<\overline{T}, 0<q<1$ such that the bundle is allocated at only for one single $R$. This is because probability of $\{R\}$ occurring is zero.           
		
	Consider the Lagrange  $L= \sum_{k=0}^{l-1}~~t^{k}[\Gamma(\theta^{k+1})-\Gamma(\theta^{k})]$
		$-\lambda_{1}[\theta^{1}q^{1}-t^{1}]-\sum_{k=2}^{l-1} \lambda_{k} [\theta^{k}q^{k}-t^{k}-\theta^{k}q^{k-1}+t^{k-1}]$. Since $F$ is optimal $t^{0}=0=q^{0},q^{l-1}=1$. Also, $\Gamma(\theta^{l})=\Gamma(\overline{\theta})=1$ and $\Gamma(\theta^{0})=\Gamma(\underline{\theta})=0$. The first order conditions with respect to $\theta^{1}$ is:         
		
		\begin{equation}
		-t^{1}\gamma(\theta^{1})-\lambda_1 q^{1}=0\\
		\implies \frac{\gamma(\theta^{1})}{-\lambda_{1}}=\frac{q^{1}}{t^{1}}. 
		\label{eqn:theta_1_myer}
		\end{equation} 
		\noindent The first order condition with respect to $\theta^k$, $k=1,\ldots,l-1$ is:   
		\begin{equation}
		t^{k-1}\gamma(\theta^{k})-t^{k}\gamma(\theta^{k})-\lambda_{k}[q^{k}-q^{k-1}]=0
		\implies \frac{\gamma(\theta^{k})}{-\lambda_{k}}=\frac{q^{k}-q^{k-1}}{t^{k}-t^{k-1}}.
		\label{eqn:theta-k_myer}
		\end{equation} 
				\noindent The first order condition with respect to $t^{k}$ for $k=1,\ldots,l-2$ is
				\begin{equation}
		[\Gamma(\theta^{k+1})-\Gamma(\theta^{k})]+\lambda_{k}-\lambda_{k+1}=0.
		\label{eqn:t_k_myer}
		\end{equation}
				\noindent The first order condition with respect to $t^{l-1}$ is
				\begin{equation}
		[1-\Gamma(\theta^{l-1})]+\lambda_{l-1}=0.
		\label{eqn:t_l-1-kmyer}
		\end{equation}

		\noindent from equations (\ref{eqn:t_k_myer}) and (\ref{eqn:t_l-1-kmyer})  $\lambda_{k}=-[1-\Gamma(\theta^{k})]$ for $k=1,\ldots, l-1$. Then from equations (\ref{eqn:theta_1_myer}) and (\ref{eqn:theta-k_myer}) we obtain 
		$\frac{\gamma(\theta^{1})}{1-\Gamma(\theta^{1})}=\frac{q^{1}-0}{t^{1}-0}$, and 
		$\frac{\gamma(\theta^{k})}{1-\Gamma(\theta^{k})}=\frac{q^{k}-q^{k-1}}{t^{k}-t^{k-1}}=\frac{1}{\theta^{k}}$ for
		$k=2,\ldots,l-1$. Thus $\theta^{k-1}<\theta^{k}$ implies 
		$\frac{q^{k}-q^{k-1}}{t^{k}-t^{k-1}}<\frac{q^{k-1}-q^{k-2}}{t^{k-1}-t^{k-2}}$. 
		Since $\frac{\gamma(\theta')}{1-\Gamma(\theta')}\leq \frac{\gamma(\theta'')}{1-\Gamma(\theta'')}$ if $\theta'<\theta''$ we reach at a contradiction. Thus, the proof follows.

	\end{proof}
	
\noindent Since $\theta^{k}q^k-t^k=\theta^{k}q^{k-1}-t^{k-1}$, $\frac{q^{k}-q^{k-1}}{t^{k}-t^{k-1}}$ is the slope of the indifference curves of preferences that correspond to $\theta^{k}$. 
The optimal mechanism requires this slope to be equal to the hazard rate at $\theta^{k}$, i.e., 
$\frac{\gamma(\theta^{k})}{1-\Gamma(\theta^{k})}=\frac{q^{k}-q^{k-1}}{t^{k}-t^{k-1}}\equiv \frac{\text{incraese in winnig probability}}{\text{incraese in payment}}$

$\equiv\text{relative, i.e., relative to increase in payment, increase in winning probability}$.

\noindent The hazard rate at $\theta^{k}$ measures the conditional probability that the buyer's type fails to be below $\theta^{k}$. Thus increasing hazard rate in a sense implies that it is more likely that the buyer's type is high and not low. The single-crossing property implies that the
relative increase in the winning probability decreases as types increase. 
The proof of Proposition \ref{prop:qlin} shows that the equality between the hazard rate and the relative increase in the winning probability holds only for a deterministic mechanism.       
The next proposition pins down the the optimal mechanism $F^{*}$.

\begin{cor}\rm Let $\frac{\gamma(\theta')}{1-\Gamma(\theta')}\leq \frac{\gamma(\theta'')}{1-\Gamma(\theta'')}$ if $\theta'<\theta''$. Then $F^{*}$ exists. Further, $F^{*}$ is defined as follows:    		
	
	$$t^{*}(\theta) = \begin{cases}
	\theta^{*}, & \text{if $\theta>\theta^{*}$;}\\
	$0$, & \text{if $\theta\leq \theta^{*}$.}
	\end{cases}$$

	$$q^{*}(\theta) = \begin{cases}
		1, & \text{if $\theta>\theta^{*}$;}\\
		$0$, & \text{if $\theta\leq \theta^{*}$.}
		\end{cases}$$

		\label{cor:qlin_opt}

\noindent Where $t^{*}, \theta^*$ solve $\max _{t,\theta}t[1-\Gamma(\theta)]$ subject to $\theta-t=0$.  
	\end{cor}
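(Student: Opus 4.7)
The plan is to derive the corollary by combining Proposition \ref{prop:qlin} (which says that $F_l^*$ is deterministic for every $l\geq 2$) with the existence result Lemma \ref{lemma:soln_exists}, and then reducing the search for the optimum in $\mathcal{F}$ to a one-dimensional continuous optimization on the compact interval $[\underline{\theta},\overline{\theta}]$.

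First I would note that a deterministic mechanism has only two bundles in its range, namely $(0,0)$ and some $(t,1)$ with $t>0$. Hence the conclusion of Proposition \ref{prop:qlin} implies that for every $l\geq 2$, $Rn(F_l^{*})$ contains at most two bundles, so $F_l^{*}\in \mathcal{F}_2$ and therefore $E(F_l^{*})=E(F_2^{*})$. Since any $F\in\mathcal{F}$ has finite range, $F\in \mathcal{F}_l$ for some $l$, giving $E(F)\leq E(F_l^{*})=E(F_2^{*})$. Thus it suffices to identify $F_2^{*}$ and establish its existence.

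Next I would characterize any $F\in \mathcal{F}_2$ with $Rn(F)=\{(0,0),(t,1)\}$. By Theorem \ref{thm:sp_implies} and the structure exhibited after Theorem \ref{thm:implies_strtagey_proof_finite_range}, such a strategy-proof mechanism is pinned down by a single threshold type $\theta^*\in [\underline{\theta},\overline{\theta}]$ at which $(0,0)I^{\theta^*}(t,1)$, i.e., $\theta^*-t=0$, so $t=\theta^*$. Conversely, for any such $\theta^*$, the threshold rule that assigns $(0,0)$ for $\theta\le\theta^*$ and $(\theta^*,1)$ for $\theta>\theta^*$ is monotone, has a continuous $V^F$, satisfies individual rationality, and is therefore strategy-proof by Theorem \ref{thm:implies_strtagey_proof_finite_range}. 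Since $\mu(\{\theta^*\})=0$ by continuity of $\mu$, the expected revenue equals $\theta^*[1-\Gamma(\theta^*)]$, independent of the tie-breaking at $\theta^*$.

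It then remains to maximize $\varphi(\theta)=\theta[1-\Gamma(\theta)]$ over $\theta\in[\underline{\theta},\overline{\theta}]$. Since $\gamma$ is continuous, $\Gamma$ is continuous, hence $\varphi$ is continuous on the compact interval $[\underline{\theta},\overline{\theta}]$, and a maximizer $\theta^*$ exists. Setting $t^*=\theta^*$ gives the claimed closed form for $F^*$, and the displayed constrained program in the statement is exactly $\max_{t,\theta}t[1-\Gamma(\theta)]$ subject to $\theta-t=0$. The only subtlety — and the main point requiring care — is the reduction step: we must invoke Proposition \ref{prop:qlin} uniformly across all $l$ to argue that enlarging the permissible size of the range beyond two cannot increase revenue, so that the one-parameter family of threshold mechanisms already contains the optimum in the full class $\mathcal{F}$. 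Everything else is a direct compactness/continuity argument.
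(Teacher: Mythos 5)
Your proposal is correct and follows essentially the same route as the paper: invoke Proposition \ref{prop:qlin} to reduce the search to deterministic (two-bundle threshold) mechanisms, identify the threshold type with the payment via the indifference condition $\theta^{*}-t=0$, and obtain existence by maximizing the continuous function $\theta[1-\Gamma(\theta)]$ over the compact interval $[\underline{\theta},\overline{\theta}]$. The paper's own proof is just a terser statement of this same reduction-plus-existence argument, so no further comparison is needed.
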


	\begin{proof} From Proposition \ref{prop:qlin} it us enough to look for an optimal mechanism within the class of deterministic mechanisms. Then we note that a solution to the optimal problem exists.{\footnote{The constraints satisfy the non degenerate constraint qualification, NDCQ, condition as stated in Theorem $18.5$ in \citep{Simon}.}}     
		
	\end{proof}

\noindent Next we make a remark about extending our ideas to multi-buyer scenarios. 

\begin{remark}\rm We can consider two approaches to extend single-buyer optimal mechanisms to multi-buyer mechanisms. First approach is to consider the multi-buyer version of the optimization program described in Theorem \ref{thm:optimal}. This is straightforward. 
First, consider extending  the constraint set of the one buyer model. To do this we  stack $n$ copies of the variables in the constraint set from the one buyer problem and add a condition $\sum_{i=1}^{n}q_{i}^{k}\leq 1$, where the $k^{\text{th}}$ alternative in the range of a $n$-buyer mechanism is given by 
$(t_1^k,q_1^k, \ldots,  t_n^k,q_n^k)$.  
The objective function of the seller now have a sum of expected revenues from the $n$ buyers. Thus, the extension Theorem \ref{thm:optimal} follows immediately. 
If the number of buyers is large, then it may take a lot of computation time.

The other approach is to consider an axioms that extend one buyer mechanisms to multi-buyer mechanisms.        
Let without loss of generality there be two buyers. 
We demonstrate a way to extend the optimal mechanism obtained for the one buyer scenario to the multi-buyer scenarios for the quasilinear preferences first. 
Consider a two-buyer function $F:[\underline{\theta}, \overline{\theta}]\times [\underline{\theta}, \overline{\theta}]\rightarrow \mathbb{Z}\times \mathbb{Z}$, where the component functions for buyer $i$ are given by 
$t_i,q_i$. Consider the following axiom $F:[\underline{\theta}, \overline{\theta}]\times [\underline{\theta}, \overline{\theta}]\rightarrow \mathbb{Z}\times \mathbb{Z} $ is  {\bf lower-efficient} if  
$\theta_i<\theta_j \implies q_{i}(\theta_i,\theta_j)=0$. Further, let if $\theta_i=\theta_j$, then the tie be broken with equal probability
If $F$ satisfies lower-efficiency, and ties are broken with equal probability  
then $F$ is a mechanism i.e., for all $(\theta_1,\theta_2)$ $q_{1}(\theta_1,\theta_2)+q_{2}(\theta_1,\theta_2)\leq 1$. 
To see how it works, 
consider a function 
$(\theta_1,\theta_2)\mapsto F(\theta_1,\theta_2)=( q_1(\theta_1, \theta_2),t_1(\theta_1,\theta_2), q_2(\theta_1, \theta_2),t_2(\theta_1,\theta_2) )$. 
Further, let for any buyer $i$, for all $\theta_i, \theta_i', \theta_j$, $\theta_{i}q_{i}(\theta_i,\theta_j)+t_{i}(\theta_i,\theta_j)\geq 
\theta_{i}q_{i}(\theta_i',\theta_j)+t_{i}(\theta_i',\theta_j)$.
Just because $F$ satisfies these inequalities does not mean that $F$ is a mechanism. For $F$ to be a mechanism for all $(\theta_1,\theta_2)$, $q_{1}(\theta_1,\theta_2)+q_{2}(\theta_1,\theta_2)\leq 1$ is required. 
Along with $F$ satisfying these inequalities lower efficiency makes $F$ a strategy-proof mechanism.

Lower-efficiency says that 
the buyer with the lower valuation does not obtain the object, at the same time it does not say that 
the buyer with the higher valuation obtains the object. Thus, this axiom does not make a multi-buyer mechanism efficient. In case of selling a good, and in the context of selling shares to at most one buyer this axiom quite reasonable. 
An example of the latter is the airport privation auction of the New Delhi airport, 2003. The rights to carry out the development of the airport  to begin with was to be sold to one construction company. Depending on the context, we may find other axioms to extend the one buyer optimal auction to multi-buyer scenarios.    

To cross validate this axiom we show that it entails the optimal mechanism 
in \citep{myer}.        
Here $\theta_i$ is the preference of the $i^{\text{th}}$ buyer. Assume that both buyers' valuations are drawn independently from the same distribution with support $[\underline{\theta},\overline{\theta}]$.
Let the $\Gamma$ and $\gamma$ denote the distribution and density respectively. 
Let $(\theta_1, \theta_2)$ denote a profile of valuations.  
Consider
$\theta_i>\theta_j$. Thus, let by lower-efficiency $q_{i}(\theta_i,\theta_j)>0\implies \theta_i>\theta_j$. 
Thus, consider the interval $[\theta_j,\overline{\theta}]$ to study buyer $i$.
Assume that $\Gamma$ satisfies the increasing hazard rate.  
By Corollary \ref{cor:qlin_opt} the optimal way to sell the object to buyer $i$ is    

	$$t^{*}_i(\theta_i,\theta_j) = \begin{cases}
	\max \{\theta^{*}, \theta_j\}, & \text{if $\theta_i>\max \{\theta^{*}, \theta_j\}$;}\\
	$0$, & \text{if $\theta_i<\max \{\theta^{*}, \theta_j\}$.}
\end{cases}$$

$$q_i^{*}(\theta_i,\theta_j) = \begin{cases}
	1, & \text{if $\theta_i>\max \{\theta^{*}, \theta_j\}$;}\\
	$0$, & \text{if $\theta_i<\max \{\theta^{*}, \theta_j\}$.}
\end{cases}$$

\noindent The reasoning behind above mechanism is as follows. If there were no other buyer, then buyer $i$ would have obtained the object for all valuations above $\theta^{*} \in [\underline{\theta},\overline{\theta}]$. 
Consider the optimization problem in Corollary \ref{cor:qlin_opt}    
$\theta_i[1-\Gamma(\theta_{i})]$ where $\theta_{i}\in [\underline{\theta},\overline{\theta}]$. 
The critical point is given by 
$\theta_{i}=\frac{1-\Gamma(\theta_i)}{\gamma(\theta_{i})}$. The solution to this equation is $\theta^{*}$ that is obtained in Corollary \ref{cor:qlin_opt}.
We argue that $\theta_i[1-\Gamma(\theta)]$ is increasing up to $\theta^{*}$
and decreasing after $\theta^*$. Define $\psi(\theta_i)=\theta_i-\frac{1-\Gamma(\theta_i)}{\gamma(\theta_{i})}$. 
This function is called virtual valuation in \citep{myer}. 
We have seen $\psi(\theta^*)=\theta^*-\frac{1-\Gamma(\theta^*)}{\gamma(\theta^*)}=0$.
Since the hazard rate is increasing, $\psi$ is increasing. Thus if $\theta_i<\theta^{*}$, then 
$\theta_i<\frac{1-\Gamma(\theta_i)}{\gamma(\theta_i)}$.       
Now $\frac{d \theta_{i}[1-\Gamma(\theta_{i})]}{d\theta_{i}}>0$ if $\theta_{i}<\frac{1-\Gamma(\theta_{i})}{\gamma(\theta_{i})}$. Thus the derivative of $\theta_i[1-\Gamma(\theta_i)]$ is positive if $\theta_i<\theta^{*}$.  
Since $\psi$ is increasing and $\psi(\theta^*)=0$, if $\theta^*<\theta_i$, then 
$\frac{1-\Gamma(\theta_i)}{\gamma(\theta_)}<\theta_i$.
Further, $\frac{d \theta_{i}[1-\Gamma(\theta_{i})]}{d\theta_{i}}<0$ if $\theta_{i}>\frac{1-\Gamma(\theta_{i})}{\gamma(\theta_{i})}$.
Thus the derivative of $\theta_i[1-\Gamma(\theta_i)]$ is negative if $\theta_i>\theta^{*}$.
That the optimal mechanism does not charge more than $\max\{\theta^*,\theta_j\}$ from buyer $i$ if $\theta_i>\theta_j$.

Now note that the condition $\theta_i>\max\{\theta^*, \theta_j\}$ is same as the condition 
$\psi(\theta_i)>\psi(\theta_j)$ and $\psi(\theta_i)>0$, since $\psi$ is increasing and $\psi(\theta^*)=0$. Now if $\theta^{*}<\theta_{j}<\theta_i$, the mechanism fixes the payment to be
$\theta_j$ and not $\theta^{*}$, even though the revenue is sub-optimal compared to the one buyer scenario and the payment $\theta^*$ is feasible. This is because if the payment is $\theta^*$ or less than $\theta_j$, then for $\theta_i'$ such that $\theta^*<t_i(\theta_i,\theta_j)<\theta_i'<\theta_j$
we obtain $\theta_i'q_i(\theta_i',\theta_j)-t_i(\theta_i',\theta_j)=0<
\theta_i'q_i(\theta_i,\theta_j)-t_i(\theta_i,\theta_j)=\theta_i'-t_i(\theta_i,\theta_j)$. This contradicts strategy-proofness. 
The first equality follows by lower-efficiency.  
Thus the multi-buyer mechanism  described above is exactly the mechanism optimal mechanism if we had followed the techniques in \citep{myer}.
Hence, we also provide a geometric understanding of the optimal mechanisms in \citep{myer}.  
We note that the optimal mechanism for quasilinear preferences is an equilibrium outcome of a second price auction with reserve price $\psi^{-1}(0)$. However, note that the preference of buyer $i$ given by $q_i(\theta_i-t_i)$ is not quasilinear, where $q_it_i$ is expected payment. Further, this preference is not monotone everywhere and thus not classical according to Definition \ref{defn:classical}. Thus we define restricted classical preference in Section \ref{sec:non_mon}; and we refer this model as the 
``Myerson model'' to distinguish it from the model with quasilinear preferences.{\footnote{If the buyer has to pay $t$ whether she wins or not, then the payment $t$ in a quasi-linear preference is an expected payment.}} A similar argument shows that the optimal mechanism for these preferences is also deterministic.

The notion of lower-efficiency can be defined for any $\mathcal{R}^{rsc}$. For any  $R$, the payment $t$ such that $(0,0)I(t,1)$ may be interpreted as the maximum willingness to pay for the good. 
In case of preferences such as $\theta q-t$, the maximum willingness to pay is $\theta$, since 
$\theta -t=0=\theta 0-0$. That is, the maximum willing to pay appearing as a  parameter in the pay-off function is a special case. Since in our model the primitive is preference, and not maximum willingness to pay, we do not have be concerned with utility representations in terms of parametric classes.    
We say a two-buyer mechanism $F:[\underline{R}, \overline{R}]\times [\underline{R}, \overline{R}]\rightarrow \mathbb{Z}\times \mathbb{Z} $ is  {\bf lower-efficient} if  
$R_i\prec R_j \implies q_{i}(R_i,R_j)=0$. In the case of the $n$-buyer scenario lower-efficiency can be defined as: 
if $R_i\prec R_j$ for some $j\neq i$, then  $q_{i}(R_i,R_{-i})=0$, where $R_{-i}$
denote the profile where buyer  $i$  is not included.

Another axiom is related to priority over the set of agents. For example, let there be two buyers and the seller wants to sell shares first to buyer $1$ and if anything is left then it will be sold to buyer $2$. Consider the one buyer problem of finding the optimal mechanism for buyer $1$ whose preferences lie in $[\underline{R},\overline{R}]$. Let buyer $1$'s preference be $R_1$ and suppose $F_{1}(R_1,R_2)=(t_1,q_1)$ for all $R_2$. Now fix $R_1$. Then consider the one buyer problem of finding the optimal mechanism for buyer $2$ where $q\in [0,  1-q_1]$. This defines a strategy-proof mechanism. Such priority over the set of agents may arise from the difference in the reputations of the buyers to fulfill the contractual obligations such as making timely payments, see \citep{Houser} and \citep{Englemann} for theory and importance of reputation in auctions.                              

\end{remark}

\noindent Next we wish to make a remark about an important difference between our solution technique of finding the optimal mechanism and that in \citep{myer}. 

\begin{remark}\rm We ask: ``what if we apply techniques in \citep{myer} to solve the optimization problem for quasilinear utility for any given maximum finite range of a mechanism.   
In \citep{myer} strategy-proofness entail a equation that connects payment 
and probability of win. This equation is called revenue equivalence or the envelope condition for the model. The mathematical form in which payment appears in the utility function of a buyer in \citep{myer} is the seller's objective function as well, and thus by substituting the payment obtained from the envelope condition equation the optimization problem can be simplified. 
After the substitution, the proof involves constructing a mechanism that involves maximization of virtual valuation, and the mechanism is indeed strategy-proof and individually rational.

Our technique is different. Since we do not use revenue equivalence equation as a sufficient condition for strategy-proofness we cannot take the route in \citep{myer}. 
Rather we show that the constraint set is compact for any given maximum finite range of a mechanism. Given that the objective function is continuous for finite range, there is a solution. This technique does not depend on utility representations in terms parametric classes. The proof holds for a general class of preferences and probability measures defined on the space of the preferences. The proof technique does not use the virtual valuation as a tool. Our proof of the existence of an optimal mechanism entails an optimization program.    
For more complicated problems ensued from more complicated preferences, substituting the seller's revenue by the payment part of the utility function of the buyer may entail more complicated form of the objective function compared with \citep{myer}. 
For instance let $q$ denote probability of win and the buyer pays only when she wins. Thus the seller considers expected revenue $tq$. But let utility of the buyer be quasilinear, this can happen if the buyer is bounded rational in the sense that 
the buyer does not understand the rule of the auction well. Alternatively, the buyer may value 
gain and loss differently so that the buyer evaluates gain as $\theta q$ but loss by $-t$.       
The integral equation that follows due to strategy-proofness, i.e., the revenue equivalence equation, still holds. However, if we replace $t$ in the objective of the seller by using the revenue equivalence equation the objective function will have $q$  multiplied. This entails a more complicated form of the objective function than the one in \citep{myer}. \citep{Tian} provide further concerns with the general envelop conditions. We have provided different kinds of single-crossing preference domains that have  multidimensional parametric representations, non-quasilinear, and even if quasilinear the the indirect utility or the value function is not a maximum of a family of affine functions. Our approach provides an unified framework to study strategy-proofness and optimal mechanisms for these different domains                              
Since our solution technique guarantees that for finite range the optimal solution exits, even if an analytical solution maybe difficult to obtain approximation may be feasible and then we can simulate on the number of elements in the range. In general, it is possible  that the buyer and the seller evaluate payments differently so that there is a mismatch between what the seller receives and the buyer's utility from the payment. Since our technique does not involve substitution of constraint in the objective function, our approach of finding the optimal mechanism incorporates such complicated situations.                   
                      
\label{remark:myer_vs_single}
\end{remark}

     \section{\bf Analysis of Strategy-proofness when $Rn(F)$ is Countable with Finite Number of Limit Points} 
    \label{sec:count}

     \noindent In this section we extend the construction of strategy-proof mechanisms for $F$ with finite range to $F$s with ranges that contain finitely many limit points.
     We do this exercise to provide an insight about 
     the extent to which we can apply monotonicity of $F$ and continuity of $V^{F}$ to obtain 
     characterizations of strategy-proof mechanisms. First we given an example to demonstrate that if $Rn(F)$ is a continuum, then monotonicity of $F$ and continuity of $V^{F}$ do not imply that $F$ is strategy-proof. 
     
     \begin{example}\rm ({\bf A mechanism with continuum range which is not strategy-proof}) 
     	We consider 
     	a set of preferences given by $\{u(t,q;\theta)=\theta q-t, \theta \in [1,2]\}$. Consider a line segment that connects
     	$(0, 0)$ and $(\frac{1}{3},1)$. The equation of this line is $q = 3t$.
     	Let $F:[1,2]\rightarrow \mathbb{Z}$ be defined as $F(\theta)=(\frac{1}{3}\theta-\frac{1}{3}, \theta-1)$. 
     	Then $Rn(F)=\{(t,q)\mid q=3t, t\in [0,\frac{1}{3}]\}$ is a continuum. The mechanism $F$ is monotone and continuous. Further, $F$ is individually rational and not strategy-proof. 
     	To see that $F$ is individually rational, note that $u(F(\theta);\theta)=\theta(\theta-1)-(\frac{1}{3}\theta-\frac{1}{3})=\theta(\theta-1)-\frac{1}{3}(\theta-1)=(\theta-1)(\theta-\frac{1}{3})\geq 0$ since $\theta\geq 1$.
     	To see that $F$ is not strategy-proof note that the slope of indifference curves of any
     	preference in the domain of $F$ lies in $[ \frac{1}{2},1]$. This slope is smaller than the slope of the line segment $q=3t$. Thus the maximizations of utility for all preferences in the domain occur at $(t,q)=(\frac{1}{3},1)$. Now $V^{F}(\theta)=u(F(\theta);\theta)=(\theta-1)(\theta-\frac{1}{3})$ is continuous. Further, $\frac{dV(\theta)}{d\theta}=(\theta -\frac{1}{3})+(\theta- 1)\neq (\theta-1)=q(\theta)$. Also $\frac{d^{2}V(\theta)}{d\theta^{2}}=2>0$. 
     	Thus the indirect utility function is convex. 
     	The crucial point here is that the derivative of the indirect utility function
     	is not equal to the allocation probability. Therefore this example does not satisfy the integral condition discussed in \citep{myer}. Our analysis shows that situations in which the range of a mechanism is finite we do not need to assume the integral condition to characterize strategy-proof mechanisms.         
     	\label{ex:not_sp}
     \end{example}

     \noindent The continuum is an extreme of finiteness, and countable sets are in between these extremes. 
     Consider $Rn(F)$ which is closed and all bundles are limit points. Such sets are called perfect, and it is known that such sets are uncountable. Thus countable $Rn(F)$ which is closed and all bundles are limit points is not well defined. 
     However if $Rn(F)$ is countable and closed and the number of limit points is finite, then the set of strategy-proof mechanisms is not empty. Example \ref{example:finite_lm_pt} demonstrates this fact.

     \begin{example}\rm Consider $\{u(t,q;\theta)=\theta\sqrt{q}-t,\theta>0\}$. 
     	Consider $D=\{(t,q)\mid q=3t, t\in [\frac{1}{12},\frac{1}{3}]\}$. 
     	We note that $\arg\max_{(t,q)\in D} \frac{2}{3}\sqrt{q}-t=(\frac{1}{3},1)$. 
     	Also $\arg\max_{(t,q)\in D} \frac{1}{3}\sqrt{q}-t=(\frac{1}{12},\frac{1}{4})$.
     	Consider a sequence $\{\theta^{n}=\frac{2}{3}-\frac{1}{n}\}_{n=3}^{\infty}$. 
     	Then $\arg\max_{(t,q)\in D}\theta^{n}\sqrt{q}-t=( \frac{1}{3}((\frac{3}{2}(\frac{2}{3}-\frac{1}{n})))^{2}, ((\frac{3}{2}(\frac{2}{3}-\frac{1}{n})))^{2})$. 
     	By richness consider  $\theta$ such that $u((\frac{1}{12},\frac{1}{4});\theta)=u( \arg\max_{(t,q)\in D}\theta^{4}\sqrt{q}-t;\theta)$. Call the $\theta$ that solves this equality  $\theta^{*4}$.  
     	Set $F(\theta)=(\frac{1}{12}, \frac{1}{4})$ for all $\theta<\theta^{*4}$. 
     	Then consider $\theta^{*5}$ such that $u(\arg\max_{(t,q)\in D}\theta^{4}\sqrt{q}-t,\theta^{*5})=u(\arg\max_{(t,q)\in D}\theta^{5}\sqrt{q}-t,\theta^{*5})$. 
     	Set $F(\theta)=\arg\max_{(t,q)\in D}\theta^{4}\sqrt{q}-t$ for all $\theta \in [\theta^{*4},\theta^{*5}[$. 
     	In general set $F(\theta)=\arg\max_{(t,q)\in D}\theta^{n}\sqrt{q}-t$ for all $\theta \in [\theta^{*n},\theta^{*(n+1)}[, n\geq 4$. Also set $F(\theta)=(\frac{1}{3},1)$ for $\theta\geq \frac{2}{3}$.  
     	Since indifference curves are strictly convex, $F$ is a strategy-proof mechanism.{\footnote{This mechanism is not individually rational, but this is not a concern since we are discussing strategy-proofness.}}
     	This mechanism has a countable range, it is closed and the only limit point is the bundle $(\frac{1}{3},1)=\lim_{n\rightarrow\infty}\arg\max_{(t,q)\in D}\theta^{n}\sqrt{q}-t$. 
     	Figure $2$ is a pictorial depiction of this mechanism.

     	\begin{center}
     		\includegraphics[height=6.6cm, width=12cm]{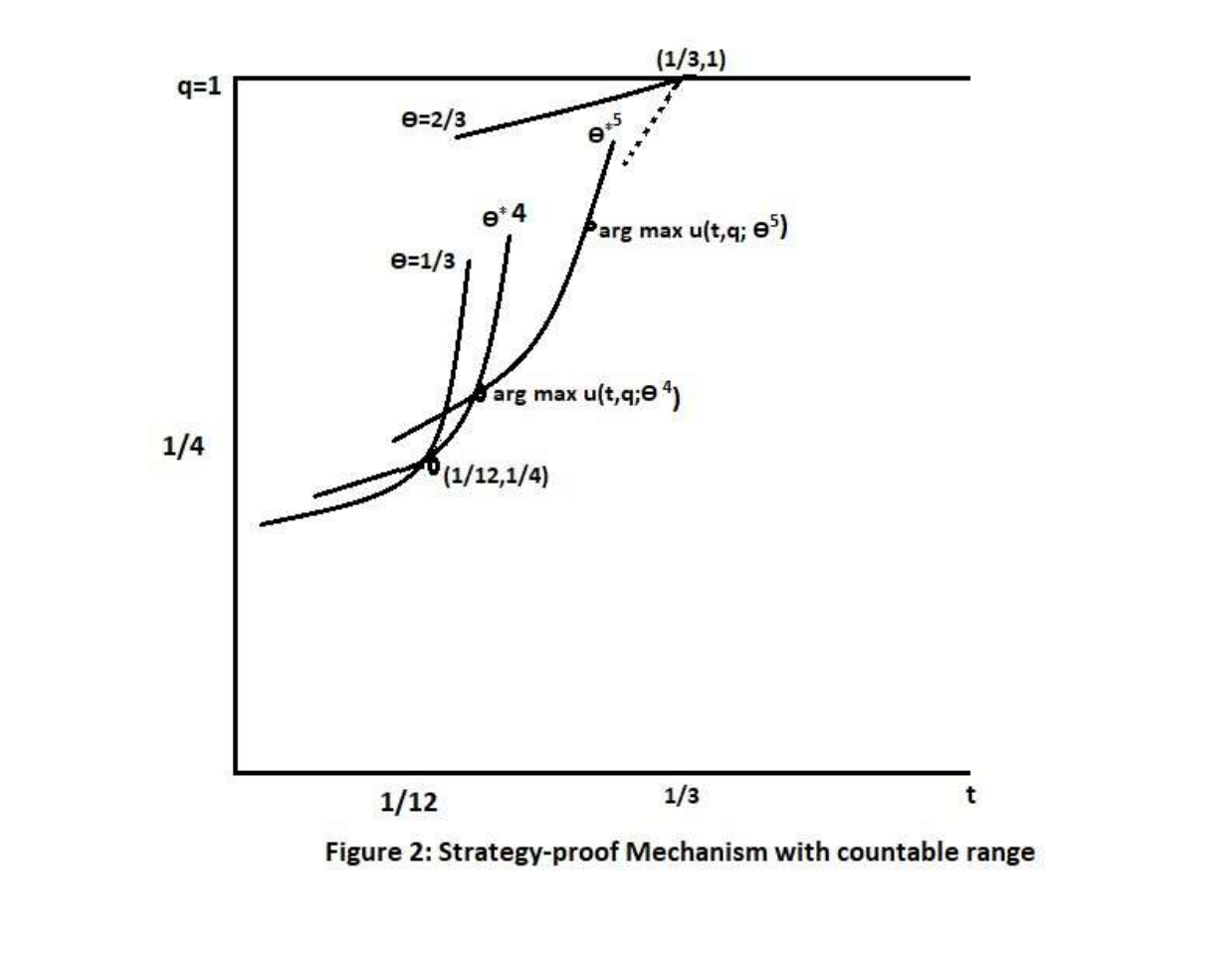}
     	\end{center}

     	\label{example:finite_lm_pt}
     \end{example}

     \noindent  We generalize this example in the following theorem.

     \begin{theorem}\rm Let $F$ be monotone and $V^{F}$ continuous. Let $Rn(F)$ be countable, closed and has finitely many limit points. 
     	Then $F$ is strategy-proof.    
     	\label{thm:countable}
     \end{theorem}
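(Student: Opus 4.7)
The plan is to reduce the countable case to Theorem \ref{thm:implies_strtagey_proof_finite_range} on each ``finite segment'' of $Rn(F)$ and then to bridge the finitely many limit points by combining the continuity of $V^F$ with the classical continuity of preferences.

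First I would decompose $Rn(F)$. By monotonicity (Lemma \ref{lemma:mon}), $Rn(F)$ is diagonally ordered and, for each $z\in Rn(F)$, the preimage $F^{-1}(\{z\})$ is an interval in $\mathcal{R}^{rsc}$. Let $L=\{z^*_1<\cdots<z^*_m\}$ be the limit points of $Rn(F)$; closedness puts $L\subseteq Rn(F)$, and the isolated bundles accumulate only at points of $L$. In particular, between two consecutive limit points (and beyond the extremes of $L$), apart from finitely many ``middle'' isolated bundles, the remaining isolated bundles form at most two monotone sequences converging to the neighbouring limit points. Local strategy-proofness at each transition between consecutive isolated bundles $y<y'$ then follows exactly as in the finite-range proof: taking $R^n\uparrow\overline{R}$ from $F^{-1}(\{y\})$, where $\overline{R}$ is the common boundary preference of $F^{-1}(\{y\})$ and $F^{-1}(\{y'\})$, continuity of $V^F$ delivers $y\,I^{\overline{R}}\,y'$. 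Iterating via Lemma \ref{lemma:preference_preserve} yields strategy-proofness within each ``discrete chain'' of consecutive isolated bundles.

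The new ingredient concerns a limit point $z^*\in L$ at which isolated bundles $y_n\uparrow z^*$ accumulate from below; the boundary preferences $\overline{R}_n$ (where $y_{n-1}\,I^{\overline{R}_n}\,y_n$) then increase monotonically to $R^*:=\inf F^{-1}(\{z^*\})$. For any $R\in F^{-1}(\{y_k\})$ and any $n>k$, iterated application of Lemma \ref{lemma:preference_preserve} across the thresholds $\overline{R}_{k+1},\ldots,\overline{R}_n$ yields strict preference of $y_k$ over $y_n$ under $R$; since $LC(R,y_k)$ is closed and $y_n\to z^*$, this forces $y_k\,R\,z^*$. Symmetrically, for any $R\in F^{-1}(\{z^*\})$ (so $R\succeq R^*\succ\overline{R}_n$ for all $n$) the single-crossing lemma gives strict preference of $y_n$ over $y_k$ under $R$ for all $n>k$, and closedness of $UC(R,y_k)$ forces $z^*\,R\,y_k$. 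An entirely analogous argument handles bundles accumulating at $z^*$ from above, using $\sup F^{-1}(\{z^*\})$ in place of $R^*$.

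Finally, I would assemble the pieces. For arbitrary $R\prec R'$ with $F(R)<F(R')$, the set $Rn(F)\cap[F(R),F(R')]$ meets $L$ in at most $m$ points, partitioning this portion of the range into finitely many segments on each of which the chain argument from Theorem \ref{thm:implies_strtagey_proof_finite_range} applies, bridged at each internal limit point by the continuity argument of the previous paragraph. Transitivity of $R$ delivers $F(R)\,R\,F(R')$, and the symmetric argument delivers $F(R')\,R'\,F(R)$. The hard part will be the bridging step at a limit point: the strict preferences produced by Lemma \ref{lemma:preference_preserve} must survive the passage to the limit $y_n\to z^*$, and this is precisely where the continuity of $V^F$ on the order topology of $\mathcal{R}^{rsc}$ must be combined with the classical continuity of $R$ on $\mathbb{Z}$ (closedness of $LC(R,\cdot)$ and $UC(R,\cdot)$). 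Finiteness of $L$ is essential here, since each such bridge is a separate limit argument and the complete chain must be traversed in finitely many steps; Example \ref{ex:not_sp} shows what fails when accumulation occurs ``everywhere''.
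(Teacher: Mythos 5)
Your proposal is correct and follows essentially the same route as the paper: restrict to discrete segments where Theorem \ref{thm:implies_strtagey_proof_finite_range} applies, identify the boundary preferences via continuity of $V^F$, propagate strict comparisons with Lemma \ref{lemma:preference_preserve}, and pass to each of the finitely many limit points using closedness of the contour sets together with the $\sup$/$\inf$ of the preimage intervals (the paper's Lemma \ref{lemma:limit_point_supremum}). The only cosmetic difference is that you bridge across a limit point by a direct limiting argument, whereas the paper packages the same single-crossing content as a contradiction (two preferences would have to cross more than once); both yield the same chain that transitivity then closes.
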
	
     
     \begin{proof} Since the number of limit points is finite, around each small neighborhood of each limit point we can use Theorem \ref{thm:implies_strtagey_proof_finite_range} for the bundles other than the limit point. Then strategy-proofness holds in the limit. The details of the proof are in the Appendix. 
     \end{proof}

     \noindent An immediate implication of Lemma \ref{lemma:limit_point_supremum} is that 
     if $F(R')$ is a limit of $Rn(F)$ and is approached by both
     increasing and decreasing sequences of bundles from $Rn(F)$, then
     $\{R\mid F(R)=F(R')\}$ is a closed interval in $\mathcal{R}^{rsc}$. 
     The next section explores implications of the single-crossing property for optimal mechanisms.

     \subsubsection{Optimal Mechanisms for Countable Range with Finitely Many Limit Points}   
     
     In this section we make a remark about optimal mechanisms that have countable range. In particular, we try to understand optimal mechanisms for the countable case from the perspective of its relationship with mechanisms that have finite range.
     For $[\underline{R},\overline{R}]\subseteq \mathcal{R}^{rsc}$, let $F^{c}:[\underline{R},\overline{R}]\rightarrow \mathbb{Z}$ be strategy-proof and individually rational.
     Let $Rn(F^c)$ be countable, closed and has finitely many limit points. For the sake of simplicity let 
     there be just one limit point. Let $(t^*,q^*)\in Rn(F^{c})$ be the limit point. 
     Let $Rn(F^{c})=\{(t^n,q^n)|n=0,1,2,\ldots\}\cup\{(t^*,q^*)\}\cup \{(t^k,q^k)|k=1,2,\ldots\}$ such that $(t^{n},q^n)<(t^{n+1},q^{n+1})$ and $(t^{k+1},q^{k+1})<(t^{k},q^k)$. Further, let $(t^n,q^n)\uparrow (t^*,q^*)$ and  
     $(t^k,q^k)\downarrow (t^*,q^*)$. Since we assume the limit point can approached by
     both increasing and decreasing sequences the assumption of one limit point is without loss of generality from the perspective of Proposition \ref{prop:opt_count}. 
     Set $(t^0,q^0)=(0,0)$. Let $\underline{R}=R^{0}$ and $F^{c}(R^0)=(0,0)$. 
     Let $0<t^1, 0<q^1$. Now let for $n=1$, $R^1$ be the special preference such that
     $(t^0,q^0)I^1(t^1,q^1)$ so that $(t^1,q^1)=F^{c}(R)$ for $R\in ]R^1,R^2[$. 
     In general $(t^n,q^n)=F^{c}(R)$ for $R\in ]R^{n},R^{n+1}[$. Since $F^c$ in $[R^{0},R^{n}]$ is finite, by Theorem \ref{thm:implies_strtagey_proof_finite_range} $F^{c}$ as defined in $[R^{0},R^{n}[$ is well defined.            
     Further for $k=1$, let $R^1=\overline{R}$. Then $(t^2,q^2)I^2 (t^1,q^1)$. 
     Let for $k=1$, $F^c(R)=(t^1,q^1)$ for $R\in ]R^2,R^1[$. In general 
     set $F^{c}(R)=(t^n,q^n)$ for $R\in ]R^{n+1},R^n[$. Since $\mu$ is continuous 
     the expected revenue from the mechanism is given by

     \begin{equation}
     \sum_{n=1}^{\infty}t^n\mu([R^{n},R^{n+1}])+t^{*}\mu([R',R''])+\sum_{k=1}^{\infty} t^{k}\mu([R^{k+1},R^{k}]).
     \label{eqn:opt_count}
     \end{equation}
     
     \noindent Since both infinite sums are sums of non-negative numbers the partial sums are increasing and by individual rationality they are bounded by $\overline{T}$, where $(\overline{T},1)\overline{I}(0,0)$. For example $\sum_{n=1}^{l}t^n\mu([R^{n},R^{n+1}])\leq \sum_{n=1}^{l}\overline{T}\mu([R^{n},R^{n+1}])\leq \overline{T}$, the last inequality follows since $\mu$ is a probability measure. Thus both sums converge. 
     It may happen that the expected revenues from the mechanisms with finite ranges are dominated by a mechanism with countable range.  
     For such a scenario next we show that for every $\epsilon>0 $, there is a mechanism with finite range so that the revenue from the mechanism with countable range is larger than the mechanism with finite range by at most $\epsilon$. 
     
     \begin{prop}\rm Let $([\underline{R},\overline{R}], \mathcal{B}, \mu)$ be given. Let $F^{c}:[\underline{R},\overline{R}]\rightarrow \mathbb{Z}$ be strategy-proof, individually rational, with countable closed range and has finitely many limit points.
     	Let for every mechanism $F$ with finite range $E(F)<E(F^{c})$. 	
     	Then for every $\epsilon>0$, there is a mechanism $F^{\epsilon}$ with finite range such that $E(F^{c})-E(F^{\epsilon})\leq \epsilon$. Further if an optimal finite  mechanism $F^{*}$ exists, then $E(F^{c})\leq E(F^*)$. 
     	\label{prop:opt_count}
     \end{prop}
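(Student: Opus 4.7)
The plan is to truncate $F^{c}$ to a finite-range mechanism $F^{\epsilon}=F^{N}$ whose range consists of the first $N$ bundles $(t^{n},q^{n})$, $n=0,\ldots,N$, from the lower (increasing) sequence, the limit bundle $(t^{*},q^{*})$, and the first $N$ bundles from the upper (decreasing) sequence, and then to bound the revenue loss by quantities that vanish as $N\to\infty$. From the setup preceding the proposition, both infinite sums in (\ref{eqn:opt_count}) converge; since the intervals $[R^{n},R^{n+1}]$ are disjoint subsets of a probability space, $\sum_{n}\mu([R^{n},R^{n+1}])\le 1$, so the tail $\mu([R^{N+1},R'])\to 0$ as $N\to\infty$, and similarly on the upper side. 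Also $(t^{n},q^{n})\to(t^{*},q^{*})$ in $\mathbb{Z}$, so $t^{n}\to t^{*}$ from below and the upper-sequence payments $t^{k}\downarrow t^{*}$ from above.

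Fix $\epsilon>0$. By richness and the single-crossing property, let $\hat{R}^{N}_{\ell}\in\mathcal{R}^{rsc}$ denote the unique preference under which $(t^{N},q^{N})$ and $(t^{*},q^{*})$ are indifferent, and let $\hat{R}^{N}_{u}$ denote the analogous upper-side preference for the pair $(t^{*},q^{*})$ and the $N$-th upper-sequence bundle. Applying Theorem \ref{thm:implies_strtagey_proof_finite_range} to the finite range described above yields a strategy-proof and individually rational mechanism $F^{N}$; by construction $F^{N}$ coincides with $F^{c}$ on $[\underline{R},R^{N}]$ (and on the symmetric upper subinterval), allocates $(t^{N},q^{N})$ on $[R^{N},\hat{R}^{N}_{\ell}]$, and allocates $(t^{*},q^{*})$ on $[\hat{R}^{N}_{\ell},\hat{R}^{N}_{u}]$.

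The central geometric step is the pincer $R^{N+1}\preceq\hat{R}^{N}_{\ell}\preceq R'$ on the lower side, together with its symmetric counterpart $R''\preceq\hat{R}^{N}_{u}\preceq R^{N+1}$ on the upper side. For the left half, strategy-proofness of $F^{c}$ combined with continuity of $V^{F^{c}}$ implies that $F^{c}(R^{N+1})$ is indifferent to $(t^{N},q^{N})$ under $R^{N+1}$ and weakly preferred to $(t^{*},q^{*})$ since $(t^{*},q^{*})\in Rn(F^{c})$; because $(t^{N},q^{N})<(t^{*},q^{*})$ diagonally, Lemma \ref{lemma:preference_preserve} then forces $\hat{R}^{N}_{\ell}\succeq R^{N+1}$. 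For the right half, the assumed structure of $F^{c}$ gives $F^{c}(R')=(t^{*},q^{*})$, so strategy-proofness makes $(t^{*},q^{*})$ weakly preferred to $(t^{N},q^{N})$ under $R'$, and Lemma \ref{lemma:preference_preserve} forces $\hat{R}^{N}_{\ell}\preceq R'$. The symmetric argument at $R^{N+1}$ (upper) and $R''$ establishes the upper pincer.

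With the pincer in hand, $[R',R'']\subseteq[\hat{R}^{N}_{\ell},\hat{R}^{N}_{u}]$, so $F^{c}$ and $F^{N}$ both allocate $(t^{*},q^{*})$ on $[R',R'']$; disagreement is confined to $[R^{N+1},R']$ and to the symmetric upper interval. On the lower disagreement region, a term-by-term comparison using $t^{n}\le t^{*}$ for $n\ge N$ together with the identity $\mu([R^{N+1},R'])-\mu([\hat{R}^{N}_{\ell},R'])=\mu([R^{N+1},\hat{R}^{N}_{\ell}])$ yields
\[
\int_{[R^{N},R']}\bigl(t^{F^{c}}-t^{F^{N}}\bigr)\,d\mu\le(t^{*}-t^{N})\,\mu([R^{N+1},R']),
\]
and a symmetric computation on the upper side gives a bound of the form $(t^{N+1}-t^{*})\,\mu([R'',R^{N+1}])$, where here $t^{N+1}$ and $R^{N+1}$ refer to the upper sequence. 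Both bounds vanish as $N\to\infty$, so choosing $N$ large enough that the sum of the two is at most $\epsilon$, $F^{\epsilon}:=F^{N}$ satisfies $E(F^{c})-E(F^{\epsilon})\le\epsilon$. For the final assertion, if an optimal finite-range mechanism $F^{*}$ exists then $E(F^{\epsilon})\le E(F^{*})$ for every $\epsilon>0$, so $E(F^{c})-\epsilon\le E(F^{*})$ for all $\epsilon>0$, and letting $\epsilon\downarrow 0$ gives $E(F^{c})\le E(F^{*})$. The hardest part will be establishing the pincer inequality, which localizes the disagreement to an arbitrarily small neighbourhood of $R'$ (and $R''$) and therefore allows the revenue loss to be controlled by a vanishing product of a payment gap and a $\mu$-measure.
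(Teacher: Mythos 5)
Your proposal is correct and follows essentially the same route as the paper: you truncate the infinite tail near the limit bundle by holding the $N$-th bundle up to the unique preference $\hat{R}^{N}_{\ell}$ that makes it indifferent to $(t^{*},q^{*})$ (the paper's $R^{w+l}$), jump to $(t^{*},q^{*})$ there, do the symmetric surgery on the decreasing side, bound the loss by a payment gap times a vanishing $\mu$-measure, and close with the same $\epsilon=\tfrac{1}{n}$ limiting argument for $E(F^{c})\leq E(F^{*})$. Your explicit pincer $R^{N+1}\precsim\hat{R}^{N}_{\ell}\precsim R'$ and the slightly tighter bound $(t^{*}-t^{N})\mu([R^{N+1},R'])$ (versus the paper's $t(R')\mu([R^{w},R'])$) are minor refinements of the same argument, not a different method.
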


     \begin{proof} See the Appendix.

     \end{proof}	 
     
     \noindent If we wish to study optimal mechanisms within a framework that allows only monotonicity of $F$ and continuity of $V^{F}$, then Example \ref{ex:not_sp} and Proposition \ref{prop:opt_count} imply that without loss we can consider mechanisms with finite ranges.

\section{\bf Single-crossing Property for Restricted Classical Preferences}         
\label{sec:non_mon}

In our discussions so far we have considered preferences that are strictly monotone in $t$. However, it can happen that if an object is not allocated to an agent, then she does not differentiate between low and high payment. Such situations create non-monotonicity in the preferences. As an example consider an indivisible object, and let $q$ denote the probability of winning that object. Let 
$\theta$ denote the valuation for the object. Consider the buyer whose type or the valuation is $\theta$.
Consider the lottery in which with probability $q$ the consequence for the buyer is $\theta-t$ and with probability  $1-q$ the consequence is $0$. Consider two kinds of expected pay-offs
$(i)$ {\bf risk neutral} $u(t,q ;\theta)=q(\theta -t)=\theta q-qt=\text{expected value-expected payment}, \theta \in [\underline{\theta}, \overline{\theta}]$, 
$0<\underline{\theta}<\overline{\theta}<\infty$. 
$(ii)$ {\bf risk averse} $u(t,q ;\theta)=q\sqrt{\theta -t}, \theta \in [\underline{\theta}, \overline{\theta}]$, 
$0<\underline{\theta}<\overline{\theta}<\infty$, $t\leq \theta$. If $q>0$ and $t>\theta$, then $q\sqrt{\theta -t}$ is not a real number.  
In the risk averse model, for every $\theta$ the lottery over payments takes into account the final wealth position
$\theta-t$. The insights from the risk neutral and risk-averse models can be put in a more general context.
\citep{Seriwey} require classical preferences to be monotone when both arguments of a pay-off function are positive. The analogous requirement for our model is that    
a preference $R$ satisfies money and $q$ monotonicity when $q>0,t>0$. However, this is not enough since 
as demonstrated by the risk averse model for all positive bundles real utility representations may not be feasible. A straightforward way to address these issues is to consider the following restricted classical class of preferences.   

\begin{defn} \rm ({\bf Restricted Classical Preferences}) The complete, transitive preference relation $R$
	defined for $[0,t_R]\times [0,1]$ where $t_{R}$ is a unique payment bound for $R$ and $0<t_R < \infty$ is called restricted classical if 
	\begin{itemize}
		\item {\bf money-monotone: } for all $q\in ]0,1]$, if $t_R\geq t''>t'\geq 0$, then $(t',q)P (t'',q)$
		\item {\bf q-monotone:} for all $t$ with  $t_R > t\geq 0$ if $1\geq q''>q'\geq 0$, then $(t,q'')P(t,q')$
		\item {\bf 0-equivalence:} $(0,0)I (t_R,q)$ for all $q\in [0,1]$  and for all $t\in[0,t_R[$, $(0,0)I (t,0)$.  
		\item {\bf continuous:} $R$ is continuous on $[0,t_R]\times [0,1]$
	\end{itemize}   
	
	\label{defn:gen_classical}
\end{defn}             

\noindent Thus risk-neutral and risk averse preferences are restricted classical preferences with the payment bound $t_\theta=\theta$. Negative and complex number pay-offs are ruled out by restricted classical preferences. The single-crossing property needs to be modified as well. 

\begin{defn}\rm ({\bf Single-crossing of two Restricted Classical Preferences} )
	Consider two restricted classical preferences $R',R''$, $t_{R'}\neq t_{R''}$. Consider $(t,q)$ where $q>0$, and $t<\min \{t_{R'},t_{R''} \}$. We say $R'$ and $R''$ exhibit the single-crossing property
	if and only if $\{(t,q)\}=IC(R',(t,q))\cap IC(R'',(t,q))$. Given that $R'$ and $R''$ exhibit the single-crossing property, we say {\bf $R''$ cuts $R'$ from above} if 
	$\square (t,q) \cap UC(R'',(t,q))\subseteq \square (t,q) \cap UC(R',(t,q))$.  
	\label{defn:single_crosiing_restricted}
\end{defn} 

\noindent For $(t,q)$ with $q>0$ and $t<\min \{t_{R'},t_{R''}\}$ the classical properties ensures that 
$R''$ cuts $R'$ from above is well defined. That is, for these specific bundles if $R''$ cuts $R'$ from above at one of these bundles, then $R''$ cuts $R'$ at all other such bundles from above. We define rich single crossing domain for restricted classical preferences exactly as before.

\begin{defn}[Rich Single-crossing domain for Restricted Classical Preferences]\rm We call a subset of the set of restricted classical preferences {\bf restricted single-crossing domain} if any $R',R''$ that belongs to the subset 
	satisfy the single-crossing property for restricted classical preferences. We call a single crossing domain for restricted classical preferences {\bf rich} if for any two bundles $x'=(t',q'), x''=(t'',q'')$ such that $t'<t'',q'<q''$ there is $R$ in the restricted single crossing domain such that $x'Ix''$. We denote a rich single crossing domain by $\mathcal{R}^{rrc}$.        	
\end{defn}

 \noindent The domain $\{R_{\theta}=q\sqrt{\theta-t}|\theta>0, t_{R_{\theta}}=\theta\}$ is a rich restricted single-crossing  domain. We can pin down the geometry of strategy-proof mechanisms for $\mathcal{R}^{rrc}$ as well in the same fashion as before. 
The domain $\{R_{\theta}=q\sqrt{\theta-t}|\theta \in ]0,2], t_{R_{\theta}}=\theta\}\cup 
\{R_{\alpha}=q\sqrt{2-\alpha t}|\alpha \in ]0,1], t_{R_{\alpha}}=\frac{2}{\alpha}\}$ is also a domain of rich single-crossing domain of restricted classical preferences. This domain is of multidimensional types. Now consider the following lemma.

\begin{lemma}\rm Let, $R',R'' \in \mathcal{R}^{rrc}$.  If $R''$ cuts $R'$ from above, then $t_{R'}<t_{R''}$.   
	\label{lemma:order_payment_bound}
\end{lemma}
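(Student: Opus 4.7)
The plan is to argue by contradiction. Suppose $R''$ cuts $R'$ from above but $t_{R''}\leq t_{R'}$; since Definition~\ref{defn:single_crosiing_restricted} applies only when $t_{R'}\neq t_{R''}$, in fact $t_{R''}<t_{R'}$. The strategy is to track, for a sequence $z_n=(t_n,1)$ with $t_n\uparrow t_{R''}$, the (unique) heights $q_n',q_n''\in (0,1)$ at which the $R'$- and $R''$-indifference curves through $z_n$ meet the axis $\{t=0\}$, and to show that these behave incompatibly in the limit.

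For each $t_n\in (0,t_{R''})$: since $t_n<t_{R''}\leq t_{R'}$, money-monotonicity gives $(0,1)P' z_n$ and $(0,1)P'' z_n$; and since $0$-equivalence yields $(t_{R^\ast},1)I^\ast (0,0)$ while $t_n<t_{R^\ast}$ for both $\ast\in\{\prime,\prime\prime\}$, money-monotonicity also gives $z_n P^\ast (0,0)$. Continuity of $R^\ast$ along the segment $\{0\}\times[0,1]$ then produces the unique $q_n',q_n''\in (0,1)$ with $(0,q_n')I' z_n$ and $(0,q_n'')I'' z_n$. The assumption that $R''$ cuts $R'$ from above at $z_n$ says $(0,q_n'')\in \square(z_n)\cap UC(R'',z_n)\subseteq UC(R',z_n)$, so $(0,q_n'')R'(0,q_n')$, and $q$-monotonicity forces $q_n''\geq q_n'$.

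Now let $t_n\uparrow t_{R''}$ and extract convergent subsequences $q_n''\to q^{\ast\ast}$ and $q_n'\to q^{\ast}$. Continuity of $R''$ gives $(0,q^{\ast\ast})I''(t_{R''},1)$, and by $0$-equivalence $(t_{R''},1)I''(0,0)$; so $(0,q^{\ast\ast})I''(0,0)$, and $q$-monotonicity of $R''$ forces $q^{\ast\ast}=0$. On the other hand, continuity of $R'$ gives $(0,q^\ast)I'(t_{R''},1)$, and since $t_{R''}<t_{R'}$ we have $(t_{R''},1)P'(0,0)$ by the same money-monotonicity plus $0$-equivalence argument as above; hence $(0,q^\ast)P'(0,0)$ and $q$-monotonicity of $R'$ forces $q^\ast>0$. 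For $t_n$ close enough to $t_{R''}$ this makes $q_n''<q_n'$, contradicting the inequality derived in the previous step. So $t_{R''}\leq t_{R'}$ is impossible, and the lemma follows.

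The main subtlety I anticipate is keeping track of which monotonicity axioms are actually available: restricted-classical money- and $q$-monotonicity hold only when $t<t_{R^\ast}$ and $q>0$ respectively, so each strict preference has to be justified by pushing into the interior of the relevant domain. All bundles used in the intermediate step — $z_n=(t_n,1)$ and $(0,q_n'),(0,q_n'')$ — lie in the admissible region for both $R'$ and $R''$; the only boundary point $(t_{R''},1)$ appears solely as a limit, where $0$-equivalence supplies the indifference with $(0,0)$ that powers the argument.
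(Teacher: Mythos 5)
Your argument is correct, but it takes a genuinely different route from the paper's. The paper also assumes $t_{R''}<t_{R'}$ for contradiction, but works at a single bundle and on the \emph{other} side of the crossing point: it first shows that any $(t,q)$ with $t<t_{R}$ and $0<q<1$ satisfies $(t,q)P(0,0)$ and therefore lies on an indifference curve of $R$ that reaches $q=1$ at some $t'<t_{R}$; it then picks $(t,q)I'(t_{R''},1)$ with $t<t_{R''}$, $q<1$, and argues that cutting from above together with single-crossing forces the $R''$-curve through $(t,q)$ to lie below the $R'$-curve to the upper right of $(t,q)$, so it cannot reach $q=1$ before $t_{R''}$ --- contradicting the first observation applied to $R''$. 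You instead use the containment of Definition \ref{defn:single_crosiing_restricted} only in the lower-left region where it is literally stated, projecting the two indifference curves through $z_n=(t_n,1)$ onto the axis $\{t=0\}$ to obtain $q_n''\geq q_n'$, and then let $t_n\uparrow t_{R''}$ so that $0$-equivalence collapses the $R''$-projection to $0$ while the $R'$-projection stays bounded away from $0$. Your version avoids the paper's implicit step of flipping the upper-contour containment across the crossing point, at the cost of a limit argument. Two details worth tightening: pass to a single common subsequence along which both $q_n'$ and $q_n''$ converge (every subsequential limit of $q_n''$ is $0$ and none of $q_n'$ is, so eventually $q_n''<q_n'$); and the step from $(0,q_n'')I''(t_n,1)$ to $(0,q^{**})I''(t_{R''},1)$ deserves a line, since continuity here means closed contour sets --- for instance, if $q^{**}>0$ then $(t_n,1)R''(0,q^{**}/2)$ eventually, and closedness of $UC(R'',(0,q^{**}/2))$ gives $(t_{R''},1)R''(0,q^{**}/2)P''(0,0)$, contradicting $(t_{R''},1)I''(0,0)$.
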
 

\begin{proof} See the Appendix.

\end{proof}

\noindent The following is another observation about $\mathcal{R}^{rrc}$.

\begin{lemma}\rm Consider $\mathcal{R}^{rrc}$. For every $t>0$, there is $R\in \mathcal{R}^{rrc}$ such that $t=t_{R}$.  
\end{lemma}

\begin{proof} By richness of $\mathcal{R}^{rrc}$, there is $R$ such that $(t,1)I(0,0)$. By $0-$ equivalence $t=t_{R}$. 	
\end{proof}	

\noindent The order topology on $\mathcal{R}^{rrc}$ is given naturally by the order on $t_R$: 
for distinct $R',R'' \in \mathcal{R}^{rrc}$, $R'\prec R''$ if and only if $t_{R'}<t_{R''}$. Thus in this order topology $\mathcal{R}^{rrc}$ is homeomorphic, with order preserving bijection,  to an open interval in $\Re$ hence $\mathcal{R}^{rrc}$ is metrizable.
We note that if $R'\prec R''$, then for all $(t,q)$ such that $t<t_{R'}$ and $0<q$, the classical properties ensure that indifference curves of $R''$ cut the indifference curves of $R'$ from above. 
Thus we can define the order on $\mathcal{R}^{rrc}$ as before.       
Since restricted classical preferences are not defined everywhere in $\mathbb{Z}$ we need to modify the definition of strategy-proofness as well. 

\begin{defn}\rm ({\bf Restricted Strategy-proof mechanism}) The function $F:\mathcal{R}^{rrc}\rightarrow \mathbb{Z}$ is a {\bf mechanism} if for $R\in \mathcal{R}^{rrc}$, $F(R)\in [0,t_R]\times [0,1]$. 
Then $F$ is {\bf restricted strategy-proof} if for all $R,R'$, $F(R')\in [0,t_R]\times [0,1] $, $F(R)RF(R')$.	   
\end{defn}

\noindent We have seen that without restricting allocations to $[0,t_R]\times [0,1]$ we may encounter situations where allocations cannot be compared as demonstrated by the risk averse preferences. 
By definition restricted strategy-proofness requires $F(R')$ to be comparable with $F(R')$.   
Next we can define continuity of $V^F$.

\begin{defn}\rm We call
	$V^{F}:\mathcal{R}^{rrc}\rightrightarrows \mathbb{Z}$ {\bf continuous}, if for any $R$ and any monotone sequence
	$\{R^{n}\}_{n=1}^{\infty}$ converging to $R$: (a) the sequence $\{F(R^n)\}_{n=1}^{\infty}$ 
	converges to $z=(t_z,q_z)$ and $t_z\leq t_R$, (b)  $z=\lim_{n\rightarrow \infty}F(R^n)I F(R)$.
\label{defn:cont_restricted}
\end{defn}

\noindent The next proposition provides a characterization of strategy-proof and individually rational mechanism. A mechanism is individually rational if $F(R)R(0,0)$. Note that mechanisms defined in $\mathcal{R}^{rrc}$ are individually rational.

\begin{prop} \rm Let $F:[\underline{R},\overline{R}]\rightarrow \mathbb{Z}$ be a mechanism and $Rn(F)$ be finite, $[\underline{R},\overline{R}]\subseteq \mathcal{R}^{rrc}$. Also assume that if $F(R)I(0,0)$, then $F(R)=(0,0)$. $F$ is strategy-proof, if and only  if $F$ is monotone and $V^{F}$ is continuous.

	\label{prop:rest_sp}	
\end{prop}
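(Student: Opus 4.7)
The approach is to replicate the logic of Theorem \ref{thm:sp_implies} (for necessity) and Theorem \ref{thm:implies_strtagey_proof_finite_range} (for sufficiency), re-tooled for restricted classical preferences. The order topology on $\mathcal{R}^{rrc}$ is in fact simpler than on $\mathcal{R}^{rsc}$: by Lemma \ref{lemma:order_payment_bound}, $R'\prec R''$ reduces to $t_{R'}<t_{R''}$, and meaningful comparisons between bundles only need to be performed where $q>0$ and $t<\min\{t_{R'},t_{R''}\}$, the region where Definition \ref{defn:single_crosiing_restricted} bites.

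For necessity, I would first prove monotonicity by contradiction. Take $R'\prec R''$, so $t_{R'}<t_{R''}$; then $F(R')\in[0,t_{R''}]\times[0,1]$, so $F(R')$ is comparable under $R''$. If $F(R')$ and $F(R'')$ were not diagonally ordered, the single-crossing property of Definition \ref{defn:single_crosiing_restricted}, together with an analog of Lemma \ref{lemma:preference_preserve} for restricted classical preferences (same proof inside the open region where both preferences are money- and $q$-monotone), would produce a profitable misreport for one of the two agents. For continuity of $V^F$, finiteness of $Rn(F)$ combined with monotonicity forces any monotone sequence $R^n\to R$ to send $F(R^n)$ to an eventually constant value $z$; strategy-proofness and the same single-crossing comparison then yield $zIF(R)$, as in Lemma \ref{lemma:cont_correspondence}.

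For sufficiency, I would mirror the constructive argument of Theorem \ref{thm:implies_strtagey_proof_finite_range}. Monotonicity linearly orders the finite range as $z^0<z^1<\cdots<z^{l-1}$. Continuity of $V^F$ applied to preferences approaching the boundary between $F^{-1}(z^{k-1})$ and $F^{-1}(z^k)$ produces a special preference $R^k\in[\underline{R},\overline{R}]$ at which $z^{k-1}I^k z^k$. By Definition \ref{defn:single_crosiing_restricted}, for any $R$ with $F(R)=z^k$, the bundle $z^k$ is strictly preferred under $R$ to every other $z^j$ in the range, provided each such $z^j$ lies in $[0,t_R]\times[0,1]$ so that the comparison is actually required by the strategy-proofness condition. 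This is exactly the geometric picture illustrated for the finite-range case in Section \ref{sec:sp}.

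The main obstacle is the interaction between the 0-equivalence axiom $(0,0)I(t_R,q)$ and the payment bound itself, and this is precisely what the stipulated hypothesis "$F(R)I(0,0)\Rightarrow F(R)=(0,0)$" is there to neutralize. Without it, a mechanism could assign $F(R)=(t_R,q)$ with $q>0$; at a slightly larger type $R''$ with $t_{R''}>t_R$ the same bundle would be strictly positive under $R''$, wrecking the monotonicity argument, and at $R$ itself the indifference class of $F(R)$ would contain $(0,0)$ and all of $\{(t,0):t<t_R\}$, which a misreport at an adjacent type could reach. With the hypothesis in force, any allocation that is $R$-indifferent to $(0,0)$ is literally $(0,0)$, so whenever $F(R)\neq(0,0)$ the allocation lies strictly inside the open region where restricted classical preferences behave like classical ones and where the proofs of Lemma \ref{lemma:preference_preserve} and Theorem \ref{thm:implies_strtagey_proof_finite_range} transfer verbatim. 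The lowest special preference then plays its usual role as the threshold between $(0,0)$ and the first nontrivial bundle in the range, which also delivers individual rationality automatically.
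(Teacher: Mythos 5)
Your architecture matches the paper's: its proof is split into two lemmas that re-run Lemma \ref{lemma:mon}, Lemma \ref{lemma:cont_correspondence} and the constructive argument of Theorem \ref{thm:implies_strtagey_proof_finite_range} around a threshold preference $R^{*}=\inf\{R\mid q(R)>0\}$, with the hypothesis $F(R)I(0,0)\Rightarrow F(R)=(0,0)$ playing exactly the neutralizing role you describe. However, there is one step you wave through that is the genuinely new piece of work in the restricted setting, and where ``transfers verbatim'' is too strong. Writing the range as $(0,0)<(t^{1},q^{1})<\dots<(t^{l-1},q^{l-1})$, the paper proves in \emph{both} directions that $t^{1}=t_{R^{*}}$ exactly. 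This does not come from richness, which in $\mathcal{R}^{rrc}$ only supplies payment bounds rather than arbitrary indifference curves through $(0,0)$: if $t^{1}>t_{R^{*}}$, then for $R$ with $R^{*}\prec R$ and $t_{R^{*}}<t_{R}<t^{1}$ no bundle of the range is feasible, i.e.\ $F(R)\notin[0,t_{R}]\times[0,1]$, contradicting that $F$ is a mechanism; if $t^{1}<t_{R^{*}}$, then $(t^{1},q^{1})P^{*}(0,0)=F(R^{*})$ and a sequence $R^{n}\downarrow R^{*}$ violates continuity of $V^{F}$ (or strategy-proofness, in the necessity direction). Only after this is the lowest special preference identified --- via $0$-equivalence, $(0,0)I^{*}(t_{R^{*}},q^{1})$ --- and only then do the classical arguments apply above $R^{*}$. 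Relatedly, your claim that every nontrivial allocation lies ``strictly inside the open region'' fails precisely at the threshold, where the first positive bundle sits on the payment bound; and in the inductive construction of the higher special preferences the paper must first select the comparison preference $R'$ with $F(R')=(t^{k},q^{k})$ and $t_{R'}=t^{k+1}$ so that adjacent bundles are even comparable before the analogue of Claim \ref{claim:ab} can be invoked. None of this overturns your plan, but these are the steps that actually have to be supplied rather than imported.
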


\begin{proof} See the Appendix. 

\end{proof}

\subsection{\bf Optimization Programs for Optimal Mechanisms in the Risk neutral and the Risk Averse Model}

\citep{myer} analyzes the risk-neutral model, thus we refer the model with risk-neutral preferences as {\bf Myerson model}. We show that the optimal mechanism in the Myerson model is deterministic if $\mu$ exhibits monotone hazard rate. The cumulative distribution function of $\theta \in [\underline{\theta},\overline{\theta}]$ is given by $\Gamma$ and the density by $\gamma$. 
In this model the expected revenue from the buyer with valuation $\theta$ is given by $q(\theta)t(\theta)$. A mechanism is a function $F:[\underline{\theta},\overline{\theta}]\rightarrow [0,\overline{T}]\times [0,1]$, $\overline{T}=\overline{\theta}$ so that $1 \overline{\theta} - 1 \overline{T}=0$. We consider $F$ with finite  $Rn(F)$.

\begin{center}
	\includegraphics[height=7cm, width=13cm]{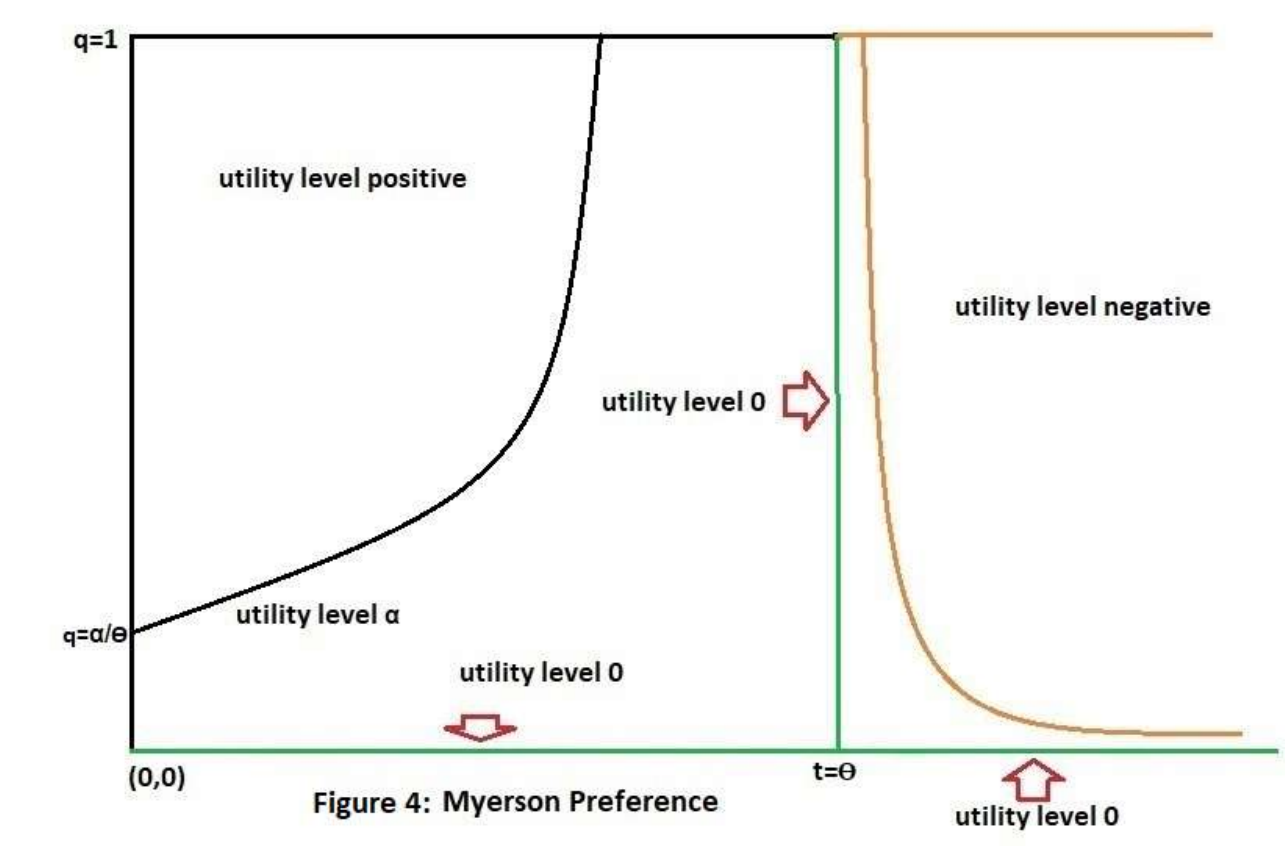}
\end{center}

\noindent For $(t,q)$ with $t<\theta, 0<q$ the expected pay-off is positive.
The indifference curve for utility level $\alpha$ drawn in Figure $3$ is positive. 
For $(t,q)$ with either $t=\theta$ or $q=0$, the utility level is zero.
Thus the indifference curve for utility level $0$ is of inverted $T$ shape.
For $(t,q)$ with $\theta<t, 0<q$ the level of utility is negative.      
More formally for the level of utility $\alpha>0$ 
consider the indifference curve given by the equation $\theta q-qt=\alpha$. 
This entails $q=\frac{\alpha}{\theta-t}$. 
Then $0<\frac{dq}{dt}=\frac{\alpha}{(\theta-t)^{2}}$ and  
$0<\frac{d^{2}q}{dt^{2}}=\frac{2\alpha}{(\theta-t)^{3}}$ if $t<\theta$. 
For $\alpha<0$, the indifference curves are decreasing and convex since if $\alpha<0$, then $\theta<t$.   
These preferences 
do not satisfy money monotonicity if $q=0$. However for  $(t,q)$ with $t<\theta, 0<q$ the $\theta q-qt$ satisfies all the properties of classical preferences.  
By individual rationality $F(\theta)=(t,q)$ such that $\theta q-q t\geq 0$. The next lemma shows that 
for positive levels of utilities of two preferences $\theta'$ and $\theta''$ the single-crossing property is satisfied. 

\begin{lemma}\rm Suppose $\theta'<\theta''$. Let the utility levels represented by the indifference curves $IC(\theta',(t,q))$ and $IC(\theta'',(t,q))$ be positive. Then $IC(\theta'',(q,t))$ and $IC(\theta',(q,t))$ intersects at most  once.   
	\label{lemma:sing_myer}
\end{lemma}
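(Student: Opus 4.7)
The plan is to argue algebraically, exploiting the explicit form $u(t,q;\theta)=q(\theta-t)$. An indifference curve of $\theta$ through $(t_0,q_0)$ with positive utility level $\alpha = q_0(\theta-t_0) > 0$ is the graph of $q = \alpha/(\theta-t)$ for $t<\theta$. So I would recast the problem as follows: if $(t_1,q_1)$ and $(t_2,q_2)$ both lie on $IC(\theta',(t,q))$ and on $IC(\theta'',(t,q))$, and both have positive utility under $\theta'$ and $\theta''$, then I must show $(t_1,q_1)=(t_2,q_2)$.

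The key step is a short algebraic elimination. Lying on $IC(\theta',(t,q))$ gives
\[ q_1(\theta'-t_1) = q_2(\theta'-t_2), \]
and lying on $IC(\theta'',(t,q))$ gives
\[ q_1(\theta''-t_1) = q_2(\theta''-t_2). \]
Expanding both, the right-hand sides in $t$ coincide, namely $q_1 t_1 - q_2 t_2$, while the left-hand coefficients of the $\theta$'s yield $(q_1-q_2)\theta' = q_1t_1 - q_2 t_2 = (q_1-q_2)\theta''$. Since $\theta'\neq \theta''$, this forces $q_1=q_2$, and substituting back forces $q_1 t_1 = q_2 t_2$, hence (since $q_1=q_2>0$, as $\alpha'>0$ and $t_1<\theta'$) $t_1=t_2$. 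This contradiction delivers uniqueness of the intersection.

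The only subtlety I would need to flag is that the argument uses $q_i>0$, which is guaranteed because the common utility level $\alpha'=q_i(\theta'-t_i)$ is assumed strictly positive; in particular this also ensures $t_i<\theta'<\theta''$ so the expressions $\theta'-t_i$ and $\theta''-t_i$ are strictly positive and all divisions and rearrangements above are valid. As a sanity check, the same conclusion can be read off from the slope formula $dq/dt|_{\theta}=q/(\theta-t)$, which at a common point $(t,q)$ is strictly larger for $\theta'$ than for $\theta''$ since $0<\theta'-t<\theta''-t$; this strict ordering of slopes at every putative crossing is the geometric reason behind the algebraic identity above. I do not anticipate any real obstacle here — the main care is just to keep the positivity hypothesis in view so that the indifference sets are bona fide graphs over $t$ and the elimination is legitimate.
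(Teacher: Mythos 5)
Your proof is correct, but it takes a genuinely different (and in fact tighter) route than the paper. The paper's proof is the slope comparison you relegate to a ``sanity check'': it computes $\frac{dq}{dt}=\frac{q}{\theta-t}$ along an indifference curve and observes that at any common point the $\theta'$-curve is strictly steeper than the $\theta''$-curve, then concludes single crossing from this ordering of slopes. Your main argument is instead a direct algebraic elimination: assuming two common points $(t_1,q_1)$ and $(t_2,q_2)$, the two indifference conditions give $(q_1-q_2)\theta'=q_1t_1-q_2t_2=(q_1-q_2)\theta''$, forcing $q_1=q_2$ and then, using $q_1>0$ from the positive utility level, $t_1=t_2$. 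This buys you something: the slope argument as stated in the paper needs one more step to be airtight (strict ordering of derivatives at every point where the two graphs meet implies the difference of the two graph functions is strictly monotone at each of its zeros, hence has at most one zero), whereas your elimination is a complete, self-contained proof requiring no calculus and no auxiliary monotonicity argument. Both proofs rely on the explicit functional form $q(\theta-t)$ and on the positivity hypothesis to ensure $q>0$ and $t<\theta'<\theta''$, which you correctly flag as the point where the hypothesis is actually used.
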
     

\begin{proof} Consider $\frac{dq}{dt}=\frac{\alpha'}{(\theta'-t)^{2}}=\frac{\theta'q-qt}{(\theta'-t)^{2}}=\frac{q}{(\theta'-t)}$. 
	Thus if $\theta'<\theta''$, then the slopes of the indifference curves of $\theta''$ are smaller than the ones for $\theta'$. Thus the proof of the lemma follows.	
\end{proof}	

\noindent As a corollary of Proposition \ref{prop:rest_sp} this result a mechanism $F:[\underline{\theta},\overline{\theta}]\rightarrow \mathbb{Z}$ which is also individually rational is monotone and $V^{F}$ continuous.  Let $I^{\theta}$ denote the indifference relation for the preference
$q\theta-qt$.   

\begin{prop} \rm Let $F:[\underline{\theta},\overline{\theta}]\rightarrow \mathbb{Z}$ be a mechanism and $Rn(F)$ be finite and individually rational. Also assume that if $F(\theta)I^{\theta}(0,0)$, then $F(\theta)=(0,0)$. Then, $F$ is strategy-proof, if and only  if $F$ is monotone and $V^{F}$ is continuous.
	\label{prop:myer_sp}	
\end{prop}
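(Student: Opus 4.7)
The plan is to recast the Myerson preferences $\{R_\theta : u(t,q;\theta)=q\theta-qt,\ \theta\in[\underline{\theta},\overline{\theta}]\}$ as a closed subinterval of a rich restricted single-crossing domain $\mathcal{R}^{rrc}$ and then invoke Proposition \ref{prop:rest_sp} directly. Once the setup is in place, essentially no extra work is required.

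First I would verify that, for each $\theta>0$, the restriction of $R_\theta$ to $[0,\theta]\times[0,1]$ meets Definition \ref{defn:gen_classical} with payment bound $t_{R_\theta}=\theta$. Money-monotonicity for $q>0$ is immediate since $q(t''-t')>0$; $q$-monotonicity for $0\le t<\theta$ follows from $(\theta-t)(q''-q')>0$; the 0-equivalence requirement is visible from $u(\theta,q;\theta)=0=u(t,0;\theta)$; continuity is clear. Next, the slope calculation behind Lemma \ref{lemma:sing_myer}, $dq/dt=q/(\theta-t)$ along each positive-level curve, gives two things at once: any two positive-utility indifference curves of distinct $R_{\theta'},R_{\theta''}$ meet at most once (Definition \ref{defn:single_crosiing_restricted}), and $R_{\theta''}$ cuts $R_{\theta'}$ from above iff $\theta'<\theta''$. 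Hence the order $\prec$ on the family agrees with the natural order on $\theta$, and allowing $\theta$ to range over all of $(0,\infty)$ produces the rich restricted single-crossing domain $\mathcal{R}^{rrc}=\{R_\theta:\theta>0\}$, inside which $[\underline{\theta},\overline{\theta}]$ picks out the closed subinterval $[R_{\underline{\theta}},R_{\overline{\theta}}]$.

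Second I would check that $F$ satisfies the hypotheses of Proposition \ref{prop:rest_sp}. Individual rationality combined with the standing assumption forces $F(\theta)\in[0,\theta]\times[0,1]$: if $q(\theta)=0$ or $t(\theta)=\theta$ then $F(\theta)I^\theta(0,0)$ and the hypothesis gives $F(\theta)=(0,0)$; otherwise $q(\theta)>0$ and necessarily $t(\theta)<\theta$. So $F$ is a bona fide mechanism in the restricted classical sense, with finite range. Moreover, ordinary strategy-proofness and restricted strategy-proofness coincide here: if $F(\theta')\notin[0,\theta]\times[0,1]$, then $t(\theta')>\theta$ and $q(\theta')>0$ (the case $q(\theta')=0$ forces $F(\theta')=(0,0)$ by the standing assumption, contradicting $t(\theta')>\theta>0$), so $u(F(\theta');\theta)=q(\theta')(\theta-t(\theta'))<0\le u(F(\theta);\theta)$ by individual rationality, and the comparison is automatic. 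Applying Proposition \ref{prop:rest_sp} then yields the desired equivalence.

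The main delicate point is the compatibility check in the previous paragraph: passing from the Myerson preference as defined on all of $\mathbb{Z}$---where monotonicity fails on the $q=0$ axis and utilities can be negative---to its restricted classical avatar on $[0,\theta]\times[0,1]$ must not alter the content of strategy-proofness. The standing assumption together with individual rationality is exactly what is needed to rule out the pathological comparisons (allocations off the $[0,\theta]\times[0,1]$ region or indifferent to the zero bundle), and once those are handled the conclusion is a direct corollary of the characterization already established for rich restricted single-crossing domains.
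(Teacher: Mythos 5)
Your proposal is correct and follows essentially the same route as the paper: the paper states Proposition \ref{prop:myer_sp} as a corollary of Proposition \ref{prop:rest_sp}, after noting that the risk-neutral preferences are restricted classical with $t_{R_\theta}=\theta$ and using Lemma \ref{lemma:sing_myer} for the single-crossing property on positive utility levels. Your explicit check that individual rationality plus the standing assumption makes restricted and ordinary strategy-proofness coincide (allocations outside $[0,\theta]\times[0,1]$ give strictly negative utility) is a detail the paper glosses over, and it is handled correctly.
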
  

\noindent Let $\theta^{*}=\inf\{\theta\mid q(\theta)>0, \theta\in [\underline{\theta},\overline{\theta}]\}$.

\noindent Now we proceed to study optimal mechanisms. 

\begin{prop}\rm Let $\mathcal{F}=\{F\mid F:[\underline{\theta},\overline{\theta}]\rightarrow [0,\overline{T}]\times [0,1], ~\text{strategy-proof, indvidually rational,} \\\text{has finite}~Rn(F)\}$.
	Also assume that if $F(\theta)I^{\theta}(0,0)$, then $F(\theta)=(0,0)$.
	Let $\frac{\gamma(\theta')}{1-\Gamma(\theta')}\leq \frac{\gamma(\theta'')}{1-\Gamma(\theta'')}$ if $\theta'<\theta''$. Then the $F^{*}$ exists and is deterministic.        	
	\label{prop:myer_opt}   
\end{prop}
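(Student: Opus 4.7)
The plan is to formulate the optimal-revenue problem as the analogue of the optimization in Equation (\ref{eqn_optimization_problem}) and then to exploit the monotone hazard rate exactly as in Proposition \ref{prop:qlin}, adapted to the risk-neutral preferences $q\theta-qt$ whose seller-side revenue is now $q^{k}t^{k}$ rather than $t^{k}$. Concretely, I would maximize $\sum_{k=0}^{l-1} q^{k}t^{k}\,[\Gamma(\theta^{k+1})-\Gamma(\theta^{k})]$ over $(\theta^{k},t^{k},q^{k})_{k=0}^{l-1}$, subject to the adjacent-IC equalities $q^{k}\theta^{k}-q^{k}t^{k}=q^{k-1}\theta^{k}-q^{k-1}t^{k-1}$, the monotonicity inequalities $\theta^{k-1}\leq\theta^{k}$, $q^{k-1}\leq q^{k}$, $t^{k-1}\leq t^{k}$, and the boundary conditions $\theta^{0}=\underline{\theta}$, $\theta^{l}=\overline{\theta}$. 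Using the hypothesis that $F(\theta)I^{\theta}(0,0)$ forces $F(\theta)=(0,0)$, individual rationality at $\underline{\theta}$ gives $(t^{0},q^{0})=(0,0)$; a push of $(t^{l-1},q^{l-1})$ along the top-type indifference curve of $\theta^{l-1}$ (along which $qt$ is strictly increasing in $q$) gives $q^{l-1}=1$ without loss of optimality. Existence of a maximizer follows by the same compactness/continuity argument as in Lemma \ref{lemma:soln_exists}, combined with the strategy-proofness characterization in Proposition \ref{prop:rest_sp}.

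With the boundary fixed I would form the Lagrangian
\begin{equation*}
L=\sum_{k=0}^{l-1}q^{k}t^{k}[\Gamma(\theta^{k+1})-\Gamma(\theta^{k})] - \sum_{k=1}^{l-1}\lambda_{k}\bigl[q^{k}\theta^{k}-q^{k}t^{k}-q^{k-1}\theta^{k}+q^{k-1}t^{k-1}\bigr]
\end{equation*}
and read off the first-order conditions. Differentiating in $t^{l-1}$ and in each $t^{k}$ for $k=1,\ldots,l-2$ yields a telescoping system whose backward recursion delivers $\lambda_{k}=-[1-\Gamma(\theta^{k})]$ for all $k$, exactly as in Proposition \ref{prop:qlin}. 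Differentiating in $\theta^{k}$ produces $-\gamma(\theta^{k})(q^{k}t^{k}-q^{k-1}t^{k-1})=\lambda_{k}(q^{k}-q^{k-1})$, and rewriting the IC equality as $q^{k}t^{k}-q^{k-1}t^{k-1}=(q^{k}-q^{k-1})\theta^{k}$ reduces this to the key identity
\begin{equation*}
\frac{\gamma(\theta^{k})}{1-\Gamma(\theta^{k})} \;=\; \frac{1}{\theta^{k}}, \qquad k=1,\ldots,l-1.
\end{equation*}

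Under the monotone-hazard-rate hypothesis the left-hand side is weakly increasing in $\theta$ while the right-hand side is strictly decreasing, so the identity admits at most one solution $\theta^{*}\in[\underline{\theta},\overline{\theta}]$. Consequently there can be at most one interior break point, so $\#Rn(F^{*})\leq 2$ and the optimizer is deterministic; its explicit form $(0,0)$ versus $(\theta^{*},1)$ is then pinned down by the one-dimensional reduction $\max_{\theta}\theta\,[1-\Gamma(\theta)]$, exactly as in Corollary \ref{cor:qlin_opt} (observe that with $q^{l-1}=1$ one has $q^{l-1}t^{l-1}=t^{l-1}=\theta^{*}$, so the Myerson revenue $qt$ coincides with the quasilinear revenue at the optimum). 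The main obstacle will be the careful justification of the boundary reductions --- in particular that individual rationality binds at $\underline{\theta}$, so that $(t^{0},q^{0})=(0,0)$, and that $q^{l-1}=1$ is without loss --- together with checking a constraint qualification (analogous to the NDCQ check footnoted in Corollary \ref{cor:qlin_opt}) that legitimizes the Lagrange necessity; both proceed by direct analogues of the verifications in the quasilinear proof, using that the slope $q/(\theta-t)$ of a Myerson indifference curve is strictly decreasing in $\theta$ (Lemma \ref{lemma:sing_myer}) so the single-crossing geometry driving the argument is preserved.
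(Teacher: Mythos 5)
Your proposal is correct and follows essentially the same route as the paper: reduce to the finite-range Lagrangian with adjacent-IC equality constraints, telescope the $t^{k}$ first-order conditions to get $\lambda_{k}=-[1-\Gamma(\theta^{k})]$, convert the $\theta^{k}$ conditions via the indifference identity $q^{k}t^{k}-q^{k-1}t^{k-1}=(q^{k}-q^{k-1})\theta^{k}$ into $\gamma(\theta^{k})/[1-\Gamma(\theta^{k})]=1/\theta^{k}$, and invoke the monotone hazard rate to force all break points to coincide. The only differences are cosmetic --- you derive the identity uniformly for all $k\geq 1$ (which spares you the paper's separate three-bundle follow-up computation), and the boundary reduction you flag as an obstacle is exactly what the paper's Lemma \ref{lemma:myer} supplies, by showing that a mechanism whose lowest positively-allocated type earns strictly positive surplus is weakly revenue-dominated by one in which that surplus is zero.
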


\begin{proof} We provide a pictorial depiction of the optimal mechanism.  Let $\theta^{*}=\inf\{\theta\mid q(\theta)>0\}$.  
	
	\begin{center}
		\includegraphics[height=7cm, width=12cm]{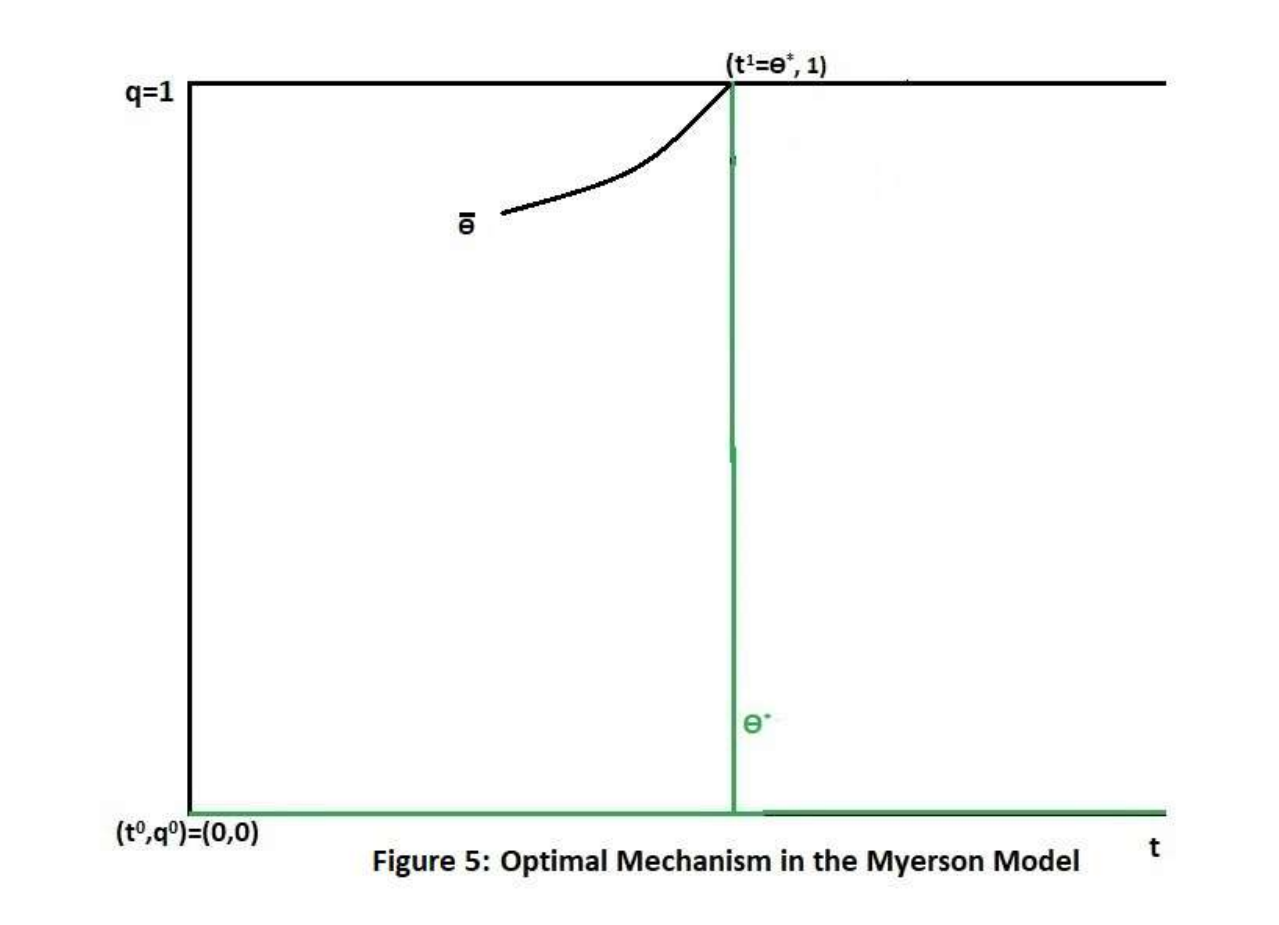}
	\end{center}
	
	\noindent $F(\theta)=(0,0)$ if $\theta \in [\underline{\theta},\theta^{*}], F(\theta)=(t^1=\theta^{*},1)$ if $\theta \in ]\theta^{*},\overline{\theta}]$. The optimal mechanism has two bundles in its range. The proof is in the Appendix.

\end{proof}

\begin{remark}\rm The optimal mechanism can be extended to the $n-$buyer model in the manner analogous to the model for quasilinear preferences. The form of the optimization problem for the risk averse model is exactly the same. The difference from the Myerson model is that the constraints are non linear for the model with risk averse preferences.   
	The optimization program has continuous objective function and compact constraint set for range with maximum $l$ distinct bundles. 
	The trick of extending preferences in $\mathcal{R}^{rrc}$ is not required here, because $t_{\theta}=\theta$ appears as a parameter in the pay-off functions and thus if for all $n$, $q^{'n}\sqrt{\theta^{'n}-t^{'n}}=q^{''n}\sqrt{\theta^{''n}-t^{''n}}$, $\theta^{'n}-t^{'n}\geq 0, \theta^{''n}-t^{''n}\geq 0$,
	then  $\lim_{n\rightarrow \infty}q^{'n}\sqrt{\theta^{'n}-t^{'n}}=\lim_{n\rightarrow \infty}q^{''n}\sqrt{\theta^{''n}-t^{''n}}$.      
	The exact optimal mechanism depends on the probability distribution of $\theta$ and thus finding the optimal mechanism is a computational problem.

\end{remark}

\noindent

\section{\bf Concluding Remarks}   
\label{sec:con}

This paper considers the implication of restricting domains of mechanisms so that 
the procedure of restricting domains satisfies a single-crossing property.
We study classical and restricted classical single-crossing domains. 
We provide a characterization of strategy-proof mechanisms that are monotone, and the indirect preference correspondences that correspond to the mechanisms are continuous.  
Our characterization brings out the geometry of strategy-proof mechanisms. This geometry 
provides a natural way to compute the optimal mechanism. 
As applications of our framework we compute optimal mechanisms for important classes of preferences that are used in the literature. The optimization program considers only a finite number of preferences in each iteration while looking for an optimal mechanism, i.e., it requires to consider only a finite number incentive constraints. 
This happens due to the single-crossing property, i.e., the intersection of indifference sets of two different preferences is at most once, and the pattern of intersections or the cutting as defined in the text is the same everywhere in the consumption space due to the classical properties of the preferences.  These simple observations make the optimization program computationally tractable. Given that computation on topological spaces is a highly active area currently, thanks to the availability of high computational power, our approach to optimal mechanism design maybe considered useful.

\[\textbf{Appendix}\]

\noindent{\bf Proof of Theorem \ref{thm:lin_cont}:} We construct an order preserving bijection between $\mathcal{R}^{rsc}$ with order $\prec$ and an open interval in the real line $\Re$. Consider $z=(0,0)$ and let $C_{\delta}(0,0)$ be the circle of radius $\delta<1$ with origin $(0,0)$. Consider $\mathbb{A}=[C_{\delta}(0,0)\cap \mathbb{Z}]\setminus \{(\delta,0)\cup (0,\delta)\}$. 
That is, $\mathbb{A}$ is the quarter circle with radius delta and center $(0,0)$ intersected with $\mathbb{Z}$ excluding the end points. 
Let $h':\mathcal{R}^{rsc}\rightarrow \mathbb{A}$ by $h'(R)=(t,q)$ where $(t, q)I(0,0)$. 
By richness of $\mathcal{R}^{rsc}$, for every $z\in \mathbb{A}$ there exists $R\in \mathcal{R}^{rsc}$ such that $zI(0,0)$. Thus, $h'$ is onto. By the single-crossing property $h'$ is one-one.

Further, $\mathbb{A}$ is homeomorphic to the open interval $]0,\delta[$ in $\Re$, where the homeomorphism is just the projection $\pi:\mathbb{A}\rightarrow ]0,\delta[$ defined by $\pi(t,z)=t$. Then we obtain an order preserving homeomorphism between 
$\mathcal{R}^{rsc}$ and $]0,\delta[$ defined by the composition $\pi~ o~ h'$, call the homeomorphism $h$. Since $]0,\delta[$ has the least upper bound property so does $\mathcal{R}^{rsc}$. Also $\mathcal{R}^{rsc}$
has the second property of a linear continuum by richness.  
For any $R',R''\in \mathcal{R}^{rsc}$, define $d(R',R'')= |h(R')-h(R'')|$ where $|\cdot|$ is the standard Euclidean metric on $\Re$. 

\medskip

\noindent{\bf Proof of Lemma \ref{lemma:closure}:} 	Let $R_{0}=\inf_{\prec} U$. Suppose there exists $R\neq R_{0}$ such that $]R_{0},R[\cap U =\emptyset$. Since $R_{0}$ is the infimum, $R_{0}\in U$. Thus $R_{0}\in cl(U)$. If for every $R\neq R_{0}$, $]R_{0},R[\cap U \neq \emptyset$, then $R_{0}$ is a limit point of $U$ and thus lies in $cl(U)$.  
The argument for the supremum is analogous.   	   

\medskip

\noindent{\bf Proof of Lemma \ref{lemma:preference_conv}:} 	The following claim is required to prove the lemma. Consider a compact subspace $K=[0, \bar{t}]\times [0,1]$
whee $0<\bar{t}<\infty$.

\begin{claim}\rm Each preference in $\mathcal{R}^{rsc}$ can be represented by a continuous function $f$ such that if $R^{n}\rightarrow R$, then $f^{n}\rightarrow f$ uniformly in $K$.  
	\label{claim:uniform}	   
\end{claim}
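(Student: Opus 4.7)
My plan is to construct, for each $R\in\mathcal{R}^{rsc}$, a canonical continuous utility representation $f_R:K\to\Re$ by intersecting indifference curves with a fixed reference arc, and then obtain uniform convergence via Dini's theorem along monotone subsequences. Concretely, I will use the L-shaped arc $\Gamma:[-\bar t,1]\to\mathbb{Z}$ with $\Gamma(s)=(-s,0)$ for $s\in[-\bar t,0]$ and $\Gamma(s)=(0,s)$ for $s\in[0,1]$. By money- and $q$-monotonicity of any classical preference, $\Gamma$ is strictly preference-increasing in $s$; and since any $IC(R,(t,q))$ is a monotone curve in $\mathbb{Z}$ that must exit $K$ through its lower or left boundary when traced in the direction of decreasing $(t,q)$, each such indifference curve meets the image of $\Gamma$ at a unique point. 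I then set $f_R(t,q)$ to be the unique $s$ with $\Gamma(s)\,I\,(t,q)$; this is a utility representation of $R$ on $K$, and the continuity of $R$ together with that of $\Gamma$ yields continuity of $f_R$.

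To prove uniform convergence I first reduce to monotone sequences. Theorem \ref{thm:lin_cont} supplies an order-preserving homeomorphism $\mathcal{R}^{rsc}\to\Re$, so every subsequence of $\{R^n\}$ admits a further monotone sub-subsequence; it therefore suffices to treat $R^n\uparrow R$ (the decreasing case is symmetric). The central observation is that whenever $s=f_R(t,q)$ with $t>0$ and $q>0$, $\Gamma(s)$ is forced to be diagonal with $(t,q)$ in the strong sense $\Gamma(s)<(t,q)$ componentwise: otherwise $\Gamma(s)$ and $(t,q)$ would be componentwise comparable and monotonicity would preclude indifference. Setting $s^n:=f_{R^n}(t,q)$, this gives $\Gamma(s^n)\,I^n\,(t,q)$ together with $\Gamma(s^n)<(t,q)$; combined with $R^n\prec R^{n+1}$ the first part of Lemma \ref{lemma:preference_preserve} yields $(t,q)\,P^{n+1}\,\Gamma(s^n)$, and strict $\prec$-monotonicity of $\Gamma$ under $R^{n+1}$ then forces $s^{n+1}>s^n$. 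Thus $\{f_{R^n}(t,q)\}$ is monotone increasing at each interior $(t,q)$; on the axes of $K$, $f_R(t,q)$ is independent of $R$ altogether.

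For the pointwise limit I argue by contradiction: if some subsequential limit of $\{s^n\}$ is $s^*\neq s:=f_R(t,q)$, pick $s^{**}$ strictly between them. If $\Gamma(s^{**})$ and $(t,q)$ are componentwise comparable, monotonicity contradicts the defining indifference $\Gamma(s^{n_k})\,I^{n_k}\,(t,q)$ for $k$ large; otherwise they are diagonal, and richness provides the unique $R^0\in\mathcal{R}^{rsc}$ with $\Gamma(s^{**})\,I^0\,(t,q)$. Lemma \ref{lemma:preference_preserve} locates $R^0$ strictly on one side of $R$ in $\prec$, and since $R^n\to R$ in the order topology $R^n$ eventually lies on the same side of $R^0$ as $R$, forcing a strict preference between $(t,q)$ and $\Gamma(s^{**})$ under $R^n$ that is incompatible with the chosen placement of $s^{n_k}$ relative to $s^{**}$. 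Monotone pointwise convergence of the continuous $f_{R^n}$ to the continuous $f_R$ on the compact set $K$ then gives uniform convergence by Dini's theorem. The main technical obstacle is precisely this case analysis in the pointwise step---the sign bookkeeping at the corner $s=0$ of $\Gamma$ and the diagonal-versus-dominated configurations---but once that is handled the rest is routine.
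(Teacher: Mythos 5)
Your proposal is correct and follows essentially the same route as the paper: construct a canonical continuous representation of each preference by calibrating its indifference curves against a fixed, strictly preference-monotone reference curve; use single-crossing together with Lemma \ref{lemma:preference_preserve} and richness to obtain monotone pointwise convergence of these representations along monotone sequences $R^{n}\rightarrow R$; and conclude uniform convergence on the compact set $K$ via Dini's theorem (Theorem $7.13$ in \citep{Rudin}). The only substantive difference is the choice of reference curve --- the paper calibrates against the unbounded ray $\{(t,1)\mid t\geq 0\}$ while you use the compact lower-left boundary arc of $K$ --- and your choice actually streamlines the well-definedness and continuity of $f_{R}$, since every indifference curve through a point of $K$ meets that compact arc by monotonicity, continuity, and connectedness, whereas the paper must separately rule out the calibration value escaping to infinity.
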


\noindent{\bf Proof of the Claim \ref{claim:uniform}:} 
Any preference $R\in \mathcal{R}^{rsc}$ is completely defined by the equivalence
classes formed by its indifference sets. Since $R\in \mathcal{R}^{rsc}$, by the monotonicity each equivalent class can be identified uniquely with a bundle $(t, 1)$, where $t\in [0,\infty[$.
An equivalence class of $R$ denoted by $[t]_{R}$ is: 
$[t]_{R}=\{(t',q')\mid (t',q') I (t,1), (t',q')\in \mathbb{Z}\}$.
Let $f_{R}(t',q')=t$ for all $(t', q')\in  [t]_R$. 
That is for $(t',q'),(t'',q'')\in \mathbb{Z}$ we have $(t',q')R(t'',q'')$ if and only if $f_{R}(t',q')\leq f_{R}(t'',q'')$.{\footnote{We do not require $f_{R}$ to be a utility representation.}}

This function is continuous. 
To see this let $(t^{'n},q^{'n})\rightarrow (t',q')$. 
We want to show that $\lim_{n\rightarrow \infty}f_{R}(t^{'n},q^{'n})=f_{R}(t',q')$. 
Let $(t^{'n},q^{'n})I(t^{n},1)$ and thus $f(t^{'n},q^{'n})=t^{n}$. 
Further, let $(t',q')I(t,1)$ and thus $f_{R}(t',q')=t$.  
Let by way of contradiction $t^{n}$ does not converge.
Consider the situation where $\{t^{n}\}_{n=1}^{\infty}$ is an unbounded sequence. 
Then for every $M>0$ there is $t^{n}>M$. 
Since $(t',q')I (t,1)$, by monotonicity of $R$  
there is $(t^{n},1)$, $n$ large enough so that $(t',q')P(t^{n},1)$.  
Since $(t^{'n},q^{'n})\rightarrow (t',q')$ and $R$ is continuous there is $N$ such that 
for all $m\geq N$, $(t^{m},q^{m})P(t^{n},1)$. 
Since $\{t^{n}\}_{n=1}^{\infty}$ is  unbounded there is $m>N$ such that $t^{n}<t^{m}$. 
By monotonicity of 
$R$, $(t^n,1)P(t^m,1)$ and thus by transitivity of $R$ we obtain $(t^{m},q^{m})P(t^{m},1)$, this contradicts the definition of $f_{R}$.

Thus  let $\{t^{n}\}_{n=1}^{\infty}$ be bounded and does not converge. 
Let $\{t^{n_k}\}_{k=1}^{\infty}$ be a convergent subsequence of $\{t^{n}\}_{n=1}^{\infty}$. We show that the subsequence converges to $t$. Let by way of contradiction 
$t^{n_k}\rightarrow t^{*}\neq t$. Without loss of generality let $t^{*}<t$. Consider
$[t^{*}-\epsilon, t^{*}+\epsilon]~\cap~ [t-\epsilon, t+\epsilon]=\emptyset$. 
For $K$ large enough for all $k\geq K$, $t^{n_k}\in ]t^{*}-\epsilon, t^{*}+\epsilon[$.
Now $(t^{'n_k},q^{'n_k})I (t^{n_{k}},1)$. By monotonicity of $R$, $(t^{'n_k},q^{'n_k})\in [0, t^{*}+\epsilon]\times [0,1]$. Since $\{(t^{'n},q^{'n})\}_{n=1}^{\infty}$, and thus $\{(t^{'n_k},q^{'n_k})\}_{k=1}^{\infty}$ converges to $(t',q')$ we have  $(t',q')\in [0, t^{*}+\epsilon]\times [0,1]$.  
By monotonicity of $R$, $(t-\epsilon,1)P (t,1)I(t',q')$. By continuity of $R$ there is $B(\epsilon, (t',q'))$ such that $(t-\epsilon,1)Pz$ for all $z\in B(\epsilon, (t',q'))$.  Here $B(\epsilon,(t',q')) = \{(t,q)\mid \sqrt{(t-t')^{2}+(q-q')^{2}} <\epsilon\}$. Since $(t^{'n},q^{'n}) \rightarrow (t',q')$,  for some $N$ for all $n\geq N $ we have $(t-\epsilon,1)P(t^{'n},q^{'n})$. By monotonicity of $R$, $(t^{'n_k},q^{'n_k})$s are not indifferent to $(t^{n_k},1)$ where $t^{n_k}<t^{*}+\epsilon<t-\epsilon$. This is a contradiction.    
This is a contradiction. 
Thus $f_{R}$ is continuous.

Define $f_{R}$ similarly for all $R$. Then by richness, the single-crossing property and
the definition of $f_{R}$, $f_{R^n} \rightarrow  f_R$ pointwise.
To see this let $R^{n}\rightarrow R$ and consider $(t,q)\in \mathbb{Z}$. Without loss of generality
let $\{R^{n}\}_{n=1}^{\infty}$ be an increasing sequence. 
Let $f_{R}(t,q)=t'$, i.e., $(t,q)I(t',1)$. Consider $]t'-\epsilon, t']$.
By richness there is $R^N$ such that $(t,q)I^{N}(t^{N},1)$ for some $t^{N}\in ]t'-\epsilon,t']$.
By the single-crossing property for all $n\geq N$, $(t,q)I^{n}(t^{n},1)$, where $t^{n}\in ]t'-\epsilon,t']$. 
Thus for all $n\geq N$, $f_{R^{n}}(t,q)\in ]t'-\epsilon,t']$. This establishes the required pointwise convergence.      
Also note that if $R'\prec R''$, then $f_{R'}(t,q)<f_{R''}(t,q)$. 
If $f_{R^n} \rightarrow  f_R$ pointwise,  then without loss of generality we can consider monotone sequences. Then
by Theorem $7.13$ in \citet{Rudin} $f_{R^n} \rightarrow  f_R$ 
converges to $f_{R}$ uniformly. This establishes Claim \ref{claim:uniform}. 
To see that it is enough to consider monotone subsequence of $\{f_{R^{n}}\}_{n=1}^{\infty}$.
Let $M_{n}=\sup_{x\in K}|f_{R^{n}}(x)-f_{R}(x)|$. For uniform convergence we need to show that 
$\lim_{n\rightarrow \infty}M_{n}=0$. By the single-crossing property every monotone subsequence of $\{M_{n}\}_{n=1}^{\infty}$ corresponds to a monotone subsequence of $ \{ f_{R^{n}}  \}_{n=1}^{\infty}$ and vice versa. For example, $R'\prec R'' \prec R\iff ~\text{for all}~x\in K, f_{R'}(x)< f_{R''}(x) <f_{R}(x)\iff ~\text{for all}~x\in K, f_{R}(x)-f_{R''}(x)<f_{R}(x)-f_{R'}(x)$. 
The first equivalence follows because if $R''$ cuts $R'$ from above at some $z\in \mathbb{Z}$, then $R''$ cuts $R'$ from above at every $z\in \mathbb{Z}$.
Since $\{M_n\}_{n=1}^{\infty}$ converges to $0$ if 
and only if all its monotone subsequences converge to $0$, the required uniform convergence follows from Theorem $7.13$ in \citet{Rudin}.                
We require another claim.

\begin{claim}\rm 
	Let $f_{R^n}
	\rightarrow f_R$ uniformly on a compact set $K
	\subseteq \mathbb{Z}$ 
	and $f_{R^n}$ be continuous. Let $\{x^{n}\}_{n=1}^{\infty}\subseteq K$. If $x^n \rightarrow x$,
	then $\lim_{n\rightarrow \infty} f_{R^n}(x^{n})=f_{R}(x) $
	
	\label{claim:uni_converge}
\end{claim}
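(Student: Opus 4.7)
The plan is to establish this by the standard uniform-convergence triangle-inequality argument. The core identity will be
\[
|f_{R^n}(x^n) - f_R(x)| \;\leq\; |f_{R^n}(x^n) - f_R(x^n)| \;+\; |f_R(x^n) - f_R(x)|,
\]
and the goal is to make each summand small for $n$ large.

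First I would note that since each $f_{R^n}$ is continuous on the compact set $K$ and $f_{R^n} \to f_R$ uniformly on $K$, the limit $f_R$ is itself continuous on $K$ (this is the standard fact that the uniform limit of continuous functions is continuous). Given any $\varepsilon > 0$, uniform convergence on $K$ supplies an index $N_1$ such that $\sup_{y \in K}|f_{R^n}(y) - f_R(y)| < \varepsilon/2$ whenever $n \geq N_1$; in particular, since $x^n \in K$, the first summand above is bounded by $\varepsilon/2$ for all $n \geq N_1$.

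For the second summand, I would invoke continuity of $f_R$ at $x$: since $x^n \to x$ (and $x \in K$ because $K$ is closed), there exists $N_2$ such that $|f_R(x^n) - f_R(x)| < \varepsilon/2$ for all $n \geq N_2$. Choosing $N = \max\{N_1,N_2\}$ and combining the two bounds yields $|f_{R^n}(x^n) - f_R(x)| < \varepsilon$ for all $n \geq N$, which is exactly the desired convergence.

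There is no real obstacle here; the only subtle point is that the two sources of error (varying function and varying argument) must be handled simultaneously, which is precisely why \emph{uniform} convergence (rather than pointwise) is essential in the hypothesis. Uniformity allows the bound on $|f_{R^n}(y) - f_R(y)|$ to be taken independent of the point $y$, so it applies in particular at the moving point $y = x^n$; this is the step that would fail under mere pointwise convergence.
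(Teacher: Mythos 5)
Your proof is correct and follows essentially the same route as the paper's: the same triangle-inequality decomposition $|f_{R^n}(x^n)-f_R(x)|\leq |f_{R^n}(x^n)-f_R(x^n)|+|f_R(x^n)-f_R(x)|$, with the first term controlled by uniform convergence and the second by continuity of $f_R$ (inherited from the uniform limit), taking $N=\max\{N_1,N_2\}$. No gaps.
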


\noindent{\bf Proof of Claim \ref{claim:uni_converge}:} Since $K$ is compact, $x\in K$.  
Fix $\epsilon>0$. By uniform convergence there is $N_1$ such that
$\sup_{z\in K}|f_{R^n}(z)-f_R(z)|$
$<\frac{\epsilon}{2}$ for all $n\geq N_1$.
By continuity of $f_R$, since uniform convergence preserves continuity, there is $N_2$ such that
$|f_R(x^n)-f_R(x) |< \frac{\epsilon}{2}$
for all $n\geq N_2$.  
Let $N=\max\{N_1,N_2\}$. Now
$|f_{R^n}(x^n)-f_R(x)|=
|f_{R^n}(x^n)-f_R(x^n)+ f_R(x^n)- f_R(x)|\leq|f_{R^n}(x^n)-f_R(x^n)|+ |f_R(x^n)- f_R(x)|< \epsilon$ for all $n\geq N$.
Hence the proof of the claim follows.

Back to the proof of lemma. Since the sequence of bundles converge we can assume that they lie in a compact set $K$.  By Claim \ref{claim:uniform}  and Claim \ref{claim:uni_converge} $f_{R^n}(t^{1n},q^{1n})\rightarrow f_{R}(t^{1},q^{1})$
and $f_{R^n}(t^{2n},q^{2n})\rightarrow f_{R}(t^{2},q^{2})$. 
Since $f_{R^n}(t^{1n},q^{1n})=f_{R^n}(t^{2n},q^{2n})$, $f_{R}(t^{1},q^{1})=f_{R}(t^{2},q^{2})$.
Therefore, $(t^{1},q^{1})I(t^{2},q^{2})$. This establishes the lemma.    

\medskip

\noindent{\bf Proof of Lemma \ref{lemma:mon}:} 	Let $F:\mathcal{R}^{rsc}\rightarrow \mathbb{Z}$ be a strategy-proof SCF. Let $R', R''\in \mathcal{R}^{rsc}$ be two preferences such that $R'\prec R''$, and $F(R')\neq F(R'')$.
Since $F$ is strategy-proof, $F(R'')R''F(R')$. Thus, 
\begin{equation}\label{nec1}
	F(R')\in LC(R'', F(R'')).
\end{equation}
Again by strategy-proofness, $F(R')R' F(R'')$. Thus,
\begin{equation}\label{nec2}
	F(R')\in UC(R', F(R'')).
\end{equation}
Combining the two equations \eqref{nec1} and \eqref{nec2}, we get,
\begin{equation}\label{nec3}
	F(R')\in LC(R'', F(R''))\cap UC(R', F(R'')).
\end{equation}
Since $R'\prec R''$, by monotonicity of classical preferences 
\begin{equation}\label{nec4}
	\begin{split}
		[LC(R'', F(R''))\cap UC(R', F(R''))\subseteq \{z| z\leq F(R'')\}].
	\end{split}
\end{equation}
From equations \eqref{nec3} and \eqref{nec4} we obtain $F(R')\leq  F(R'')$.
Hence the proof of the lemma follows.

\medskip

\noindent{\bf Proof of Lemma \ref{lemma:preference_preserve}:} We note that $(i)$ is just a rewriting of
the single-crossing condition by using the order on $\mathcal{R}^{rsc}$. For $(ii)$, note that
by richness there is $\widetilde{R}\in \mathcal{R}^{rsc}$ such that $z'\widetilde{I}z''$. 
If $z'Pz''$, then $R\prec \widetilde{R}$; and if $z''Pz'$, then $\widetilde{R}\prec R$. Now by $(i)$ the proof of the lemma follows.   

\medskip

\noindent{\bf Proof of Lemma \ref{lemma:cont_correspondence}:}Let $F:\mathcal{R}^{rsc}\rightarrow \mathbb{Z}$ be strategy-proof. Without loss of generality consider a decreasing sequence $\{R^{n}\}_{n=1}^{\infty}$
that converges to $R$.  	
From Lemma \ref{lemma:mon}, it follows that the sequence $\{F(R^n)\}_{n=1}^{\infty}$ is a decreasing sequence bounded below by $F(R)$. Thus, $\{F(R^n)\}_{n=1}^{\infty}$ converges. Let $z=(t,q)=\lim_{n\rightarrow \infty}F(R^n)$. 
Let by way of contradiction $zIF(R)$ does not hold. Then either $(i)$ $zPF(R)$ or $(ii)$ $F(R)Pz$. 

\noindent {\bf Consider Case $(i)$}: By continuity of $R$ there is an open ball $B(\epsilon,z)=\{z'\in \mathbb{Z}\mid ||z-z'||<\epsilon\}$ such that for all $z'\in B(\epsilon,z)$, $z'PF(R)$. If $z=(t,q), z'=(t',q')$ then $||z'-z||=\sqrt{(t-t')^{2}+(q-q')^{2}}$. Since $\lim_{n\rightarrow \infty }F(R^{n})=z$, there is some positive integer $N$ such that 
$F(R^{N})\in B(\epsilon,z)$ such that $F(R^{N})PF(R)$. This is a contradiction to 
strategy-proofness.

\medskip     

\noindent {\bf Consider Case $(ii)$}: Let $F(R)=(t(R),q(R))$. Since $\{R^{n}\}_{n=1}^{\infty}$ is a decreasing sequence, and $F(R)\neq z$, 
$(t(R),q(R))\leq(t,q)$. By monotonicity of $R$, $t(R)<t$. Note that if $t(R)=t$, then by contradiction hypothesis $q(R)<q$. Then by monotonicity of $R$, $(t,q)P F(R)$, and thus we are in case $(i)$. 
Back to case $(ii)$. By continuity of $R$ there is $B(\epsilon,z)$ such that 
for all $z'\in B(\epsilon,z)$, $F(R)Pz'$. Consider $[t,t']\times [q,q']\subseteq B(\epsilon,z)$.
Note $z=(t,q)\in [t,t']\times [q,q']$. Now $t(R)<t$ and $q(R)<q'$. By richness consider $R^{*}$ such that 
$F(R)I^{*}(t,q')$. By Lemma \ref{lemma:preference_preserve} $R\prec R^{*}$. 
By monotonicity of $R^{*}$, if $z'\neq (t,q'), z'\in [t,t']\times [q,q']$, then $(t,q')P^{*}z'$. 
By By Lemma \ref{lemma:preference_preserve} for all $R'$ such that $R'\prec R^{*}$,
if $z'\neq (t,q'), z'\in [t,t']\times [q,q']$, then $(t,q')P'z'$.
By the single-crossing property $F(R)P'(t,q')$. Since $\lim_{n\rightarrow \infty}F(R^{n})=z$, there is 
$R^{n}\prec R^{*}$ such that $F(R^{n})\in ]t,t'[\times ]q,q'[$. Thus $F(R)P^{n}(t,q')P^{n}F(R^{n})$. 
Hence, $F(R)P^{n}F(R^{n})$ and it is a contradiction to strategy-proofness.     

\begin{remark}\rm This proof goes through for $\mathcal{R}^{rrca}$ if we consider the domain of $F$ to be $[\underline{R},\overline{R}]\subseteq \mathcal{R}^{rsc}$ since for our arguments we do not need $R^{*}\in [\underline{R},\overline{R}]$. Since the proof of the monotonicity of $F$ uses only two preferences, this proof holds if the domain of $F$ is $[\underline{R},\overline{R}]$.      
\end{remark}	

\noindent This completes the proof of the lemma. 

\medskip

\noindent{\bf Proof of Theorem \ref{thm:implies_strtagey_proof_finite_range}:} 	Without loss of generality we assume  $\#Rn(F)=6$, 
i.e., we let $Rn(F)=\{e, c,a,b,d,f\}$ and $e<c<a<b<d<f$. Assuming $\#Rn(F)=6$ is without loss of generality because the argument for the bundles that are non-extreme such as $a,b$ is independent of the number of elements in $Rn(F)$. Also the argument is same for the extreme bundles such as $e$ or $f$. 
We first prove that $F$ is locally strategy-proof in range, it is enough to prove this for the bundles $a$ and $b$.

\begin{claim} \rm ({\bf Local Strategy-proofness in range}) Let $R'\prec  R''$, $F(R')=a$ and $F(R'')=b$. Then $aR'b$ and $bR''a$.
	\label{claim:ab}
\end{claim}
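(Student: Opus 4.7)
The plan is to identify a boundary preference $\hat{R} \in [R', R'']$ at which the adjacent bundles $a$ and $b$ are indifferent, and then to transfer this indifference via Lemma~\ref{lemma:preference_preserve} to the strict preferences claimed for $R'$ and $R''$.

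I would first observe that by monotonicity of $F$ and the adjacency of $a$ and $b$ in $Rn(F)$, every $R$ with $R' \precsim R \precsim R''$ satisfies $F(R) \in \{a, b\}$: indeed $a \leq F(R) \leq b$ by monotonicity, and no element of $Rn(F)$ lies strictly between $a$ and $b$. Because $\mathcal{R}^{rsc}$ is a linear continuum (Theorem~\ref{thm:lin_cont}), the non-empty bounded set $A = \{R \in [R', R''] : F(R) = a\}$ admits a supremum $\hat{R} \in [R', R'']$, and Lemma~\ref{lemma:closure} places $\hat{R} \in cl(A)$.

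I would then build sequences approaching $\hat{R}$ from each side. First, choose a monotone increasing sequence $R^n \uparrow \hat{R}$ with $R^n \in A$ (take the constant sequence if $\hat{R} \in A$); continuity of $V^F$ then yields $\lim F(R^n) = a$ and $a I^{\hat{R}} F(\hat{R})$. By the definition of $\hat{R}$, every $R$ with $\hat{R} \prec R \precsim R''$ satisfies $F(R) = b$. Assuming $\hat{R} \prec R''$, the density property of the linear continuum produces a decreasing sequence $\tilde{R}^n \downarrow \hat{R}$ with $\hat{R} \prec \tilde{R}^n \precsim R''$ for each $n$; continuity of $V^F$ then gives $b I^{\hat{R}} F(\hat{R})$. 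Transitivity of $I^{\hat{R}}$ now yields $a I^{\hat{R}} b$.

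Finally I would invoke part~(i) of Lemma~\ref{lemma:preference_preserve}: since $a < b$ are diagonal and $a I^{\hat{R}} b$, every $R$ with $R \prec \hat{R}$ satisfies $a P b$, and every $R$ with $\hat{R} \prec R$ satisfies $b P a$. As $R' \in A$ forces $R' \precsim \hat{R}$, and $F(R'') = b$ together with the monotonicity of $F$ forces $\hat{R} \precsim R''$, these yield $a R' b$ and $b R'' a$, with equality at the boundary absorbed by $a I^{\hat{R}} b$. The one subtlety requiring care is the edge case $\hat{R} = R''$, in which there is no room on the upper side for a decreasing sequence; I would dispatch this symmetrically by replacing the supremum of $A$ with the infimum of $B = \{R \in [R', R''] : F(R) = b\}$, so that the role of the downward sequence is played by an upward sequence from $B$ terminating at that infimum.
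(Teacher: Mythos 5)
Your proposal is correct. Every step checks out: monotonicity plus adjacency confines $F$ to $\{a,b\}$ on $[R',R'']$; the least-upper-bound property and Lemma \ref{lemma:closure} give the boundary preference $\hat{R}=\sup A$; the two one-sided monotone sequences combined with continuity of $V^{F}$ give $aI^{\hat{R}}F(\hat{R})$ and $bI^{\hat{R}}F(\hat{R})$, hence $aI^{\hat{R}}b$; and Lemma \ref{lemma:preference_preserve}(i) propagates this to $aR'b$ and $bR''a$. Your edge case $\hat{R}=R''$ is in fact a non-issue: there $F(\hat{R})=b$, so the upward sequence alone already yields $aI^{\hat{R}}b$ and no downward sequence (nor the switch to $\inf B$) is needed.

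The route differs from the paper's in organization rather than in ingredients. The paper argues by contradiction: assuming $aP''b$, it introduces both $R^{0}=\sup S$ and $R_{0}=\inf T$, proves $R^{0}=R_{0}$ and $F(R_{0})\in\{a,b\}$, and then in each case builds a one-sided sequence whose limit violates continuity of $V^{F}$ given $aP_{0}b$; only afterwards, in Claim \ref{claim:indiff}, does it establish the indifference $aI_{0}b$ at the boundary. You instead prove the boundary indifference directly and first, and read off local strategy-proofness as an immediate corollary via the single-crossing order. Your version buys a shorter combined argument that effectively merges Claim \ref{claim:ab} and Claim \ref{claim:indiff} into one pass and avoids the bookkeeping needed to show $\sup S=\inf T$; the paper's two-step decomposition, on the other hand, isolates local strategy-proofness as a standalone consequence of the axioms before identifying the special indifferent preference, which is the object it then reuses in the construction of the full mechanism.
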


\noindent{\bf Proof of Claim \ref{claim:ab}:}
Let by way of contradiction the claim is false. 
Let without loss of generality $aP''b$. 
Define $S=\{R \in \mathcal{R}^{rsc}\mid F(R)=a\}$ and $T=\{R \in \mathcal{R}^{rsc}\mid F(R)=b\}$. 
Let $R^0$ and $R_0$ be the supremum and the infimum of the sets $S$ and $T$ respectively under the ordering $\prec$. That is, 
$R^0=\sup_{\prec}\{R\mid F(R)=a\}$ and
$R_0=\inf_{\prec}\{R\mid F(R)=b\}$. 
We note that $R'\in S$ and $R''\in T$. Further $R'\precsim R^{0}$ and 
$R_{0}\precsim R''$.
We prove the claim in several steps.

\noindent{\bf Step $1$:}  The following statements hold: $(a)$ $R^0\precsim R''$ , 
$(b)$ $R'\precsim R_0$.

\noindent{\bf Proof of Step $1$: } 
To establish statement (a), let by way of contradiction $R''\prec R^0$. 
By monotonicity of $F$, $F(R'')=b\leq F(R)$ for all $R$ such that $R''\prec R$.       
Thus, $R''$ is an upper bound of $S$. Therefore, if $R''\prec R^{0}$, 
then $R^{0}$ is not the supremum of $S$. This is a contradiction.  
To establish $(b)$ let by way of contradiction $R_{0}\prec R'$. By monotonicity of $F$, 
$F(R)\leq a=F(R')$ for all $R$ such that $R\prec R'$.  Thus $R'$ is a lower bound of $T$.
Therefore, if $R_{0}\prec R'$, then $R_{0}$ is not the infimum of $T$.

\medskip

\noindent{\bf Proof of Step $2$:} $R_{0}=R^{0}$. 

\noindent{\bf Proof of Step $2$:} 
Let by way of contradiction  $R_{0}\neq R^{0}$. 
Thus either Case $(i)$ $R^{0}\prec R_{0}$ or Case $(ii)$ $R_{0}\prec R^{0}$. 

\medskip

\noindent Case $(i):$ Suppose $R^{0}\prec R_{0}$. Since $\mathcal{R}^{rsc}$ is rich there exists a preference $\widehat{R}$ such that $R^0\prec\widehat{R}\prec R_0$. 
Since $R^{0}$ is the supremum of $S$, $F(R^{0})\nless a$, and since $R_{0}$ is the infimum of $T$,
$F(R_{0})\ngtr b$.  
Thus by monotonicity of $F$, $F(R^{0})\in \{a,b\}$ and  
$F(R_{0})\in \{a,b\}$. Again by monotonicity of $F$, $F(\widehat{R})\in \{a,b\}$. 
Let $F(\widehat{R})=a$, and then $\widehat{R}\in S$.
Since $ R^0\prec \widehat{R}$, $R^0$ is not the supremum of $S$. 
This is a contradiction.  
Alternatively, let $F(\widehat{R})=b$. 
Since $ \widehat{R}\prec R_0$, $R_0$ is not the infimum of $T$.
This is a contradiction.  
This establishes that Case $(i)$ cannot occur.

\medskip

\noindent Case $(ii):$ Suppose $ R_0\prec R^0 $. Since $\mathcal{R}^{rsc}$ is rich there exists a preference $\widehat{R}$ such that $R_0\prec\widehat{R}\prec R^0$. We show $F(\widehat{R})\in \{a,b\}$.
If $F(\widehat{R})<a$, then by monotonicity of $F$, $F(R)< a$ for all $R\prec \widehat{R}$. Since $F(R')=a$, $R'$ is a lower bound on $T$. Further, $F(R_{0})<a=F(R')$, thus by monotonicity of $F$, $R_{0}\prec R'$. Thus $R_{0}$ is not the infimum of $T$, which is a contradiction.    
If $F(\widehat{R})>b$, 
by monotonicity of $F$, $F(R^{0})>b$. Since $F(R'')=b$, $R''$ is an upper bound 
on $S$. Further $b=F(R'')<F(R^{0})$, thus by monotonicity of $F$, $R''\prec R^{0}$.   
Thus $R^{0}$ is not the supremum of $S$, which is a contradiction.  
Therefore, $F(\widehat{R})\in \{a,b\}$.

If $F(\widehat{R})=a$, then by monotonicity of $F$ if $F(R)=b$, then $\widehat{R}\prec R$. 
Further, $F(R)\leq a$ for all $R$ such that $R\prec \widehat{R}$. Thus $\widehat{R}$ is a lower bound of $T$, and therefore   
$R_{0}$ is not the infimum of $T$. 
This is a contradiction.     
Now let $F(\widehat{R})=b$. Monotonicity of $F$ implies $F(R)\geq b$ for all $R$ such that $\widehat{R}\prec R$. Further, by monotonicity of $F$, $F(R)=a$, implies $R\prec \widehat{R}$. Thus $\widehat{R}$ is an upper bound of $S$. Therefore, $R^{0}$ is not the supremum of $S$. This is a contradiction.     
This implies neither Case $(i)$ nor Case $(ii)$ hold. Consequently $R^0=R_0$.

\noindent{\bf Step $3$:} Now we complete the proof of the claim.  
We show $F(R_0)=F(R^0)\in \{a,b\}$. 
If $F(R^{0})<a$, then $R^{0}\prec R'$ because $F(R')=a$ and $F$ is monotone. Thus $R^{0}$ is not the supremum of $S$.
If $F(R^{0})>b$, then $R''\prec R_{0}$ because $F(R'')=b$ and $F$ is monotone. Thus $R_{0}$ is not the infimum of $T$. Thus both cases lead to a contradiction. Thus we have $F(R_0)=F(R^0)\in \{a,b\}$.

\noindent We consider two cases. 
We have $a< b$, and by the contradiction hypothesis $aP''b$.

\noindent Case $(i):$ $F(R_0)=F(R^0)=a$. Since $F(R'')=b$, by monotonicity of $F$ it follows that $R_0\prec R''$. 
Hence by Lemma \ref{lemma:preference_preserve} $aP_0b$.
We will now show that this leads to a contradiction.
Consider a sequence $\{R^n\}_{n=1}^{\infty}$ such that for all $n$, $(i)$ $F(R^n)=b$ and $(ii)$ $R_0\prec R^{n+1}\prec R^{n}\prec R''$, $(iii)$ $R^n\rightarrow R_0$.
Since $\mathcal{R}^{rsc}$ is metrizable, by Lemma \ref{lemma:closure} and richness of the domain such a sequence exists. Now $\lim_{n\rightarrow \infty}F(R^n)=b$. Thus by continuity of $V^F$, $b=\lim_{n\rightarrow \infty}F(R^n)I_0F(R_0)$. Since $F(R_0)=a$, and $aP_0b$ this is a contradiction. 

\medskip

\noindent Case $(ii):$ Suppose instead $F(R_0)=F(R^0)=b$. Since $R_{0}=\inf_{\prec}T$, $R_{0}\precsim R''$. 
Hence by Lemma \ref{lemma:preference_preserve} $aP_0b$. Also by monotonicity of $F$, $R'\prec R_0$
Consider a sequence $\{R^n\}_{n=1}^{\infty}$ such that for all $n$, $(i)$ $F(R^n)=a$ and $(ii)$ $R'\prec R^{n}\prec R^{n+1}\prec R_{0}$, $(iii)$ $ R^n\rightarrow R_0$.
Now $\lim_{n\rightarrow \infty}F(R_n)=a$. Again  by continuity of $V^F$, $a=\lim_{n\rightarrow \infty}F(R^n)I_0F(R_0)=b$. Since $F(R_0)=b$ and $aP_0b$, it contradicts the continuity of $V^{F}$.

\noindent An analogous argument leads to a contradiction if $bP'a$. Thus Claim \ref{claim:ab} is established. 

\begin{remark}\rm In the proof of Claim \ref{claim:ab} we have used preferences 
	in $[R',R'']$. Thus we can use this argument if the domain of $F$ were a closed interval 
	$[\underline{R},\overline{R}]$.  	     
\end{remark}

\begin{claim}\rm $aI_{0}b$. 
	\label{claim:indiff}
\end{claim}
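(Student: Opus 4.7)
The plan is to leverage the continuity of $V^F$ together with the conclusion of Step~$3$ of the preceding claim that $F(R_0)=F(R^0)\in\{a,b\}$, combined with the fact that $R^0=R_0$ is simultaneously $\sup_{\prec}S$ and $\inf_{\prec}T$, where $S=\{R\mid F(R)=a\}$ and $T=\{R\mid F(R)=b\}$. Since $a\neq b$, exactly one of $R_0\in S$ and $R_0\in T$ holds, which gives two symmetric cases.

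First I would record that $S$ and $T$ are order intervals in $\mathcal{R}^{rsc}$: by monotonicity of $F$, if $R',R''\in T$ with $R'\prec R\prec R''$, then $b=F(R')\leq F(R)\leq F(R'')=b$, so $R\in T$, and similarly for $S$. Combined with Lemma~\ref{lemma:closure} and the fact that $\mathcal{R}^{rsc}$ is homeomorphic to $\Re$ via Theorem~\ref{thm:lin_cont}, this ensures that whenever $R_0=\inf_{\prec}T$ with $R_0\notin T$, one can pick a monotone decreasing sequence in $T$ converging to $R_0$; symmetrically, if $R^0=\sup_{\prec}S$ with $R^0\notin S$, one can pick a monotone increasing sequence in $S$ converging to $R^0$. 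In either situation the relevant one-sided interval is non-empty because $R''\in T$ lies strictly above $R_0$ and $R'\in S$ lies strictly below $R^0$.

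In the case $F(R_0)=a$, choose a sequence $\{R^n\}\subset T$ with $R^n\downarrow R_0$. Then $F(R^n)=b$ for every $n$, so $\lim_n F(R^n)=b$, and continuity of $V^F$ yields $b\,I_0\,F(R_0)=a$, i.e.\ $aI_0b$. In the case $F(R_0)=b$, choose $\{R^n\}\subset S$ with $R^n\uparrow R^0=R_0$; then $F(R^n)=a$, and continuity of $V^F$ gives $a=\lim_n F(R^n)\,I_0\,F(R_0)=b$, again $aI_0b$. The only obstacle I anticipate is verifying the existence of these monotone one-sided sequences, which rests purely on the linear-continuum/metric structure of $\mathcal{R}^{rsc}$ together with the interval form of $S$ and $T$; no property of the bundles themselves is needed, and everything else is a one-line application of the continuity axiom.
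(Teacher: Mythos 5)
Your proof is correct, but it takes a genuinely different route from the paper's. The paper proves Claim \ref{claim:indiff} without invoking continuity of $V^{F}$ at all: it uses richness and the single-crossing property to produce the unique preference $R$ with $aIb$, and then shows $R=R_{0}$ by ruling out $aP_{0}b$ and $bP_{0}a$ through a case analysis that leans on Claim \ref{claim:ab} and monotonicity of $F$ --- in each case the hypothesis forces $R_{0}$ to fail to be the infimum of $T$ (resp.\ the supremum of $S$). You instead apply the continuity axiom directly: since $R_{0}=R^{0}$ and $F(R_{0})\in\{a,b\}$, exactly one of $S$, $T$ misses the boundary point $R_{0}$, so a one-sided monotone sequence from that set converges to $R_{0}$ with constant image, and continuity of $V^{F}$ delivers $aI_{0}b$ in one line. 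This is essentially the same sequence construction the paper deploys inside Step~$3$ of the proof of Claim \ref{claim:ab} (where it is used to contradict $aP_{0}b$), re-run in direct rather than contradiction mode. Your version is shorter and needs neither richness nor single-crossing at this step; the paper's version has the advantage of explicitly identifying $R_{0}$ with the special preference under which $a$ and $b$ are indifferent, which is the object used later in the constructive description of $F$. One point worth making fully explicit in your write-up: the existence of the one-sided sequence rests on the observation that, e.g., when $R_{0}\notin T$, every interval $]R_{0},R^{*}[$ with $R_{0}\prec R^{*}$ must meet $T$ --- otherwise $R^{*}$ would be a lower bound of $T$ strictly above $R_{0}$; you gesture at this via Lemma \ref{lemma:closure} and metrizability, and that is indeed sufficient.
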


\noindent{\bf Proof of Claim \ref{claim:indiff}:}  In Step $2$ in the proof of Claim \ref{claim:ab} we have obtained $R_{0}=R^{0}$. We establish that $aI_{0}b$. 
By the richness of $\mathcal{R}^{rsc}$ there is $R$ such that $aIb$.

\begin{remark}\rm By Claim \ref{claim:ab} and Lemma \ref{lemma:preference_preserve}
	$R'\precsim R\precsim R''$. Thus we can use this argument if the domain of $F$ were a closed interval 
	$[\underline{R},\overline{R}]$.	  
\end{remark}

\noindent We show that $R=R_{0}=R^{0}$. First we show $F(R_{0})=F(R^{0})\in \{a,b\}$. If $F(R^{0})<a$, then $R^{0}$ is not the supremum of $S$ since $F(R')=a$ and thus $R^{0}\prec R'$ by monotonicity of $F$. If $F(R_{0})>b$, then $R_{0}$ is not the infimum of $T$ since $F(R'')=b$ and thus $R''\prec R_{0}$ by monotonicity of 
$F$. Therefore  $F(R^{0})<a$ and $F(R_{0})>b$ lead to a contradiction. Thus $F(R_{0})=F(R^{0})\in \{a,b\}$.

Let by way of contradiction $aP_{0}b$. Then by Claim \ref{claim:ab}, $F(R_{0})=a$. Since $aIb$, by Lemma \ref{lemma:preference_preserve} $R'\precsim R_{0}\prec R$. Since $R_{0}$ is the supremum of $S$, 
$F(R)\geq b$. By Claim \ref{claim:ab} $bR'' a$.
If $aI''b$, then by the single-crossing property $R=R''$. Therefore, by the single-crossing property or by Lemma \ref{lemma:preference_preserve} 
for all $R^*\in [R',R''[$, $aP^{*}b$. By Claim \ref{claim:ab} and monotonicity of $F$, $F(R^{*})=a$. Thus $R''=\inf_{\prec}T$. Since $aP_{0}b$, $R_{0}\neq R''$ and thus $R_{0}$ is not the infimum of $T$. This is a contradiction.     
Let $bP''a$. Then by the single crossing-property $R'\prec R\prec R''$. Since $F(R)\geq b$, by Claim \ref{claim:ab} and monotonicity of $F$, 
$F(R)=b$. By the single-crossing property for all $R^{*}\in [R',R[$, $aP^{*}b$. Thus by Claim \ref{claim:ab} and monotonicity of $F$, $F(R^{*})=a$. Since $aP_{0}b$, $R_{0}\neq R$ and thus $R_{0}$ is not the infimum of $T$. This is a contradiction.

Let by way of contradiction $bP_{0}a$.Then by Claim \ref{claim:ab}, $F(R_{0})=b$. Then $R\prec R_{0}\precsim R''$ where $aIb$. Since $R_{0}$ is the infimum of $T$, 
$F(R)\leq a$. By Claim \ref{claim:ab} $aR' b$. If $aI'b$, then by the single-crossing property $R=R'$. 
Then by the single-crossing property for all $R^{*}\in ]R',R'']$, $bP^*a$. By Claim \ref{claim:ab} and monotonicity of $F$, $F(R^{*})=b$. Thus $R'=\sup_{\prec} S$. Since $bR_{0}a$ and $R_{0}\neq R'$, $R_{0}$ is not the supremum of $S$. This is a contradiction. Let $aP'b$. Then by the single-crossing property 
$R'\prec R \prec R''$. Since $F(R)\leq a$ by Claim \ref{claim:ab} and monotonicity of $F$, $F(R)=a$.
Then, by the single-crossing property for all $R^{*}\in ]R,R'']$, $bP^{*}a$. Thus by Claim \ref{claim:ab} and monotonicity of $F$, $F(R^{*})=b$. Since $bP_{0}a$, $R_{0}\neq R$ and thus $R_{0}$ is not the supremum of $S$. This is a contradiction. 
Thus Claim \ref{claim:indiff} is established. 	

\begin{claim}\rm Let $F(R_{e})=e, F(R_{c})=c, F(R_{a})=a, F(R_{b})=b, F(R_{d})=d, F(R_{f})=f$. Then 
	$xR_{x}y$ for all $x,y\in Rn(F)$. 
	
	\label{claim:sp_all_bundles}	
\end{claim}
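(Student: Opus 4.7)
The plan is to promote the ``range-adjacent'' strategy-proofness of Claim \ref{claim:ab} and the indifference-witnesses of Claim \ref{claim:indiff} to the full statement by combining the single-crossing property with transitivity of each $R_x$. Nothing in the proofs of Claims \ref{claim:ab} and \ref{claim:indiff} used that the adjacent pair $(a,b)$ was the ``middle'' one; the arguments apply verbatim to each of the five adjacent pairs of $Rn(F)$, producing preferences $R^{ec},R^{ca},R^{ab},R^{bd},R^{df}\in \mathcal{R}^{rsc}$ with $eI^{ec}c$, $cI^{ca}a$, $aI^{ab}b$, $bI^{bd}d$, $dI^{df}f$. Each such preference is unique by the single-crossing property applied to its diagonal pair, and the proof of Claim \ref{claim:ab} furnishes the identification $R^{xy}=\sup_\prec\{R\mid F(R)=x\}=\inf_\prec\{R\mid F(R)=y\}$ for every adjacent pair $x<y$.

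Using this identification, for two consecutive adjacent pairs $(x,y)$ and $(y,z)$ the set $\{R\mid F(R)=y\}$ has infimum $R^{xy}$ and supremum $R^{yz}$, which forces
\[R^{ec}\precsim R^{ca}\precsim R^{ab}\precsim R^{bd}\precsim R^{df}.\]
Monotonicity of $F$ (Lemma \ref{lemma:mon}) then locates each $R_x$ between the two special preferences flanking $x$: for instance $R^{ca}\precsim R_a\precsim R^{ab}$, and analogous inequalities (with the appropriate one-sided condition) for the boundary bundles $e$ and $f$.

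Now fix $x\in Rn(F)$ and $y\in Rn(F)$ with $y>x$, and choose a chain $x=z_0<z_1<\cdots<z_k=y$ of $Rn(F)$-adjacent bundles. By the preceding paragraph, $R_x\precsim R^{z_{i-1}z_i}$ for each $i$; applying Lemma \ref{lemma:preference_preserve}(i) to the diagonal pair $z_{i-1}<z_i$ (indifferent under $R^{z_{i-1}z_i}$) gives $z_{i-1}R_x z_i$, strictly when $R_x\prec R^{z_{i-1}z_i}$ and as indifference when the two preferences coincide. Iterated transitivity of $R_x$ along the chain then yields $xR_xy$. The case $y<x$ is symmetric, using instead $R^{z_{i-1}z_i}\precsim R_x$ and the second half of Lemma \ref{lemma:preference_preserve}(i).

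The delicate point in the plan is the second step: consecutive special preferences may coincide, which happens precisely when three or more bundles of $Rn(F)$ lie on a single indifference curve (the second configuration noted immediately after Theorem \ref{thm:implies_strtagey_proof_finite_range}). This forces the ordering among the $R^{xy}$s, and between them and the various $R_x$s, to be weak throughout, so some inequalities in Step 3 collapse to indifferences. The single-crossing property still prevents any crossover that could violate transitivity, so once this weakening is handled carefully the chaining goes through unchanged.
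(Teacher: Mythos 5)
Your proposal is correct and follows essentially the same route as the paper: identify the unique special preference at which each adjacent pair of range bundles is indifferent, order these special preferences along $\prec$, and then propagate preference over non-adjacent bundles by chaining Lemma \ref{lemma:preference_preserve} with transitivity of $R_x$. The paper obtains the ordering of the special preferences by a short contradiction argument rather than your $\sup=\inf$ identification, but this is a cosmetic difference, and you correctly flag the only delicate case (several range bundles on one indifference curve), which the paper handles at the end of its proof.
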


\noindent{\bf Proof of Claim \ref{claim:sp_all_bundles}:} By Claim \ref{claim:ab} 
$eR_{e}c$, $cR_ce,cR_ca, aR_ac,aR_ab, bR_ba,bR_bd, dR_da, dR_df, fR_fd$. 
Recall that $e<c<a<b<d<f$. Thus 
$F$ is locally strategy-proof in range. 
Now we show that $F$ is strategy-proof. From Claim \ref{claim:indiff} we have $R_{k},k=1,\ldots,5$ such that 
$eI_{1}c, F(R_1)\in \{e,c\}; cI_2a, F(R_{2})\in \{c,a\}; aI_{3}b, F(R_{3})\in \{a,b\}; bI_{4}d, F(R_{4})\in \{b,d\}; dI_{5}f, F(R_{5})\in \{d,f\}$. 
Further, $R_{1}\precsim R_{2}\precsim R_3 \precsim R_4  \precsim R_5$. 
Suppose by way of contradiction $R_{2}\prec R_1$. Since $eI_1c$, by the single-crossing property or by  Lemma \ref{lemma:preference_preserve} $ e P_{2} c$.
By Claim \ref{claim:ab}, $F(R_{2})\neq c$. That is if $F(R_2)=c$, then local strategy-proofness in range is violated. If $F(R_2)>c$, then monotonicity of $F$ is violated since $F(R_{1})\in \{e,c\}$.
Thus $R_{1}\precsim R_2$. Let $R_{3}\prec R_{2}$. Since $cI_2a$, by the single-crossing property 
$cP_3a$. By Claim \ref{claim:ab}, $F(R_{3})\neq a$. Thus if $F(R_3)=b$, then monotonicity of $F$ is violated because $F(R_2)\in \{c,a\}$. Thus $R_1\precsim R_2\precsim R_3$. By induction $R_1\precsim R_2\precsim R_3  \precsim R_4\precsim R_5$.

We show that $F$ restricted to $\{R_{i}\mid i=1,\ldots,5 \}$ is strategy-proof. 
We have $F(R_{1})\in \{e,c\}$ and $eI_1c$. Let $R_{1}\prec R_{2}$. 
We have $cI_{2}a$. By the single-crossing property or by Lemma \ref{lemma:preference_preserve} $cP_{1}a$.
Further we have $R_1\prec R_{3}$ and $aI_{3}b$. By the single-crossing property $aP_{1}b$. 
Thus by transitivity of $R_1$, $cP_{1}b$. 
In this way we obtain $eI_{1}cP_{1}x, x\notin \{e,c\}$.  
The arguments for $R_{2}$ are the same. 
Consider $R_{3}$. Let $R_{2}\prec R_{3}$. 
Since $cI_{2}a$, by the single-crossing property $aP_{3}c$.  
Since $eI_{1}c$, by the single-crossing property $cP_{3}e$. Thus by transitivity of $R_3$ $aP_{3}e$. The rest of the arguments are similar.   

Now we construct the strategy-proof mechanism.    
Consider $R\prec R_{1}$. By monotonicity of $F(R)\leq F(R_1)\leq c$.
By the single-crossing property $ePc$. 
Thus by Claim \ref{claim:ab}, $F(R)=e$.   
Since $R_{1}\precsim R_{2}$, it follows that $R\prec R_{2}$.
Since $cI_{2}a$ by the single-crossing property $cPa$. Thus by transitivity of $R$, $ePa$. Continuing this argument finitely many times we obtain $ePx$ for all $x\neq e$ and $x\in Rn(F)$. Thus for $R\prec R_{1}$, $F$ is strategy-proof.      

Now consider $R\in ] R_1,R_2[$.  By monotonicity of $F$, $F(R)\in \{e,c,a\}$. Since $eI_1c$, by the single-crossing property $cPe$. Since $cI_{2}a$, by the single-crossing property $cPa$. 
Thus by Claim \ref{claim:ab}, $F(R)=c$. Since $R\prec R_{3}$, and $aI_{3}b$, by the single-crossing property
$aPb$. Then $cPa$ implies $cPb$. This argument can be used finite number of times to show that $cPx$ for all $x \in Rn(F)$ such that $x\neq c$. Now it follows that $F$  restricted to $]\infty,R_{2}]$ is strategy-proof and the restriction is defined below,

$$F(R) = \begin{cases}
	e, & \text{if $R \prec R_{1}$;}\\
	\text{either}~ e~\text{or}~ c, & \text{if $R=R_1$}\\
	c & \text{if $R_1\prec R \prec R_2$}\\
	\text{either}~c~\text{or}~a & \text{if $R=R_2$.}
\end{cases}$$

\noindent For the sake of completion we extend the argument for $R\in ]R_2,R_3[$. Since $aI_{3}b$, by the single-crossing property for $R \prec R_3$, $aPb$. Since $cI_2a$ and
$R_{2}\prec R$, $aPc$. The continuation of this argument entails $aPx$ for all $x\neq a$ and $x\in Rn(F)$. Thus the general $F$ is defined as follows.

$$F(R)=\begin{cases}
	e, & \text{if $R \prec R_{1}$;}\\
	\text{either}~ e~\text{or}~ c, & \text{if $R=R_1$}\\
	c & \text{if $R_1\prec R \prec R_2$}\\
	\text{either}~c~\text{or}~a & \text{if $R=R_2$}\\
	a & \text{if $R_2\prec R\prec R_3$}\\
	\text{either}~a~\text{or}~b & \text{if $R=R_3$}\\
	b & \text{if $R_3\prec R\prec R_4$}\\
	\text{either}~b~\text{or}~d & \text{if $R=R_4$}\\
	d & \text{if $R_4\prec R\prec R_5$}\\
	\text{either}~d~\text{or}~f & \text{if $R=R_5$}\\
	f & \text{if $R_5\prec R$.}
\end{cases}$$

\noindent To complete the proof we note $\#\{x\in Rn(F)\mid y\in Rn(F), xR_iy~ \text{for some}~R_{i}\in \{R_{1},\ldots, R_5 \}\}\leq 3$. This says that the special preferences $R_i$s, i.e., the preferences that are indifferent between two bundles in the range,   can be indifferent with at most one more bundle. To see this without loss of generality consider $R_{3}$, we have $aI_{3}b$. Let $a<x<b$, such that $aI_{3}bI_{3}x$ and $x\in Rn(F)$ . By the single-crossing property $aPx$ for $R\prec R_{3}$, and $bPx$ for $R_{3}\prec R$. 
By Claim \ref{claim:ab}, $F(R)\neq x$ for $R\neq R_{3}$. Thus $F(R_{3})=x$.
If there is any other $y\in Rn(F)$ with $a<y<b$, then $F(R_{3})=y$. Since $F$ is a function $x=y$.    

\begin{remark}\rm This constriction of $F$ holds for interval $[\underline{R},\overline{R}]$, since the construction is unaffected if there are no preferences $R$ such that $R\prec R_1$ or $R_{5}\prec R$.

\end{remark}

\noindent This completes a proof of the theorem. 

\medskip

\medskip

\noindent{\bf Proof of Lemma \ref{lemma:measurable}:} Let $Rn(F)=\{(t^i,q^i),\mid i=1,\ldots ,n\}$. For any $t^i$, 
$t^{-1}(\{t^{i}\})$ is an interval in $\mathcal{R}^{rsc}$.  All intervals are in $\mathcal{B}$. Let $B\in B(\Re)$. Then $t^{-1}(B)=\cup_{j_i\in B}t^{-1}(\{t^{j_i}\})\in \mathcal{B}$. The proof for $q$ is analogous. Hence the lemma follows.

\medskip

\noindent{\bf Proof of Lemma \ref{lemma:soln_exists}:} In Step $1$ we show that the objective function is continuous, and then in Step $2$ we show that the constraint set is compact. Then from Step $1$ and $2$ the existence of maximum follows. For the sake of simplicity of notations, in this proof we denote preferences by $S$ and $ R$. 	  

\noindent{\bf Step $1$: The objective function is continuous} Consider the function: $M:[\underline{R},\overline{R}]\times [\underline{R},\overline{R}]\rightarrow [0,1] $ defined by $M(S,R)=\mu([S,R])$ if  $S\precsim R$ and $M(S,R)=\mu([R,S])$ if $R \prec S$. We show that this function is continuous, which follows because $\mu$ is continuous. To see this let $S^{n},R^{n}$ converge to $S,R$. 
Let without loss of generality $S\precsim R$, 
i.e., we have the interval $[S,R]$. Let $S\prec R$. Since $[\underline{R},\overline{R}]\times [\underline{R},\overline{R}]$ is metrizable, 
it is enough to show $\lim_{n\rightarrow \infty}M(S^{n},R^{n})=M(S,R)$, where $\lim_{n\rightarrow \infty}S^{n}=S, \lim_{n\rightarrow \infty}R^{n}=R$. 
Further, without loss of generality it is enough to consider monotone sequences. That is, $(a)$ $ S^{n} \uparrow S, R^{n}\uparrow R$,
$(b)$ $ S^{n} \downarrow S, R^{n}\uparrow R$ , $(c)$ $ S^{n} \downarrow S                                                                                                                                                                                                                                                                                                                                                                                                                                                                                                                                                                                                                                                                                                                                                                                                                                                                                                                                                                                                                                                                                                                                                                                                                                                                                                                                                                                                                   , R^{n}\downarrow R$, $(d)$ $ S^{n} \uparrow S, R^{n}\downarrow R$.

Consider $(a)$ $S^{n} \uparrow S, R^{n}\uparrow R$. There exists $N$ such that $S\prec R^{n} $ for all $n\geq N$. Let $[S^{n},R^{n}]=
[S^{n},S]\cup]S,R^{n}]$. Then $[S^{n},S]\downarrow\{S\}$ and $ ]S,R^{n}]\uparrow]S,R]$. Since $\mu$ is a probability 
$\lim_{n\rightarrow \infty}\mu([S^{n},S])=\mu(\{S\})=0$ and 
$\lim_{n\rightarrow \infty} \mu(]S,R^{n}]=\mu]S,R]=\mu([S,R])$. Since $\mu$ is continuous, 
$\mu(]S,R])=\mu([S,R])$. The other cases can be proved analogously. The function $(t,S,R)\mapsto tM(S,R)$ is continuous; the objective function is the sum of $l$ such functions. Therefore, the objective function is continuous.

\noindent Next we show that the constraint set is compact.

\noindent{\bf Step $2$: The constraint set is compact} 
The constraint set can be rewritten as:  	

$C=\{(t^{0},t^{1},\ldots,t^{l-1}, q^{0},\ldots,q^{l-1}, R^{0},\ldots, R^{l})\mid
(t^{k},q^{k})I_{R^{k}}(t^{k-1},q^{k-1}),t^{k-1}\leq t^{k}, q^{k-1}\leq q^{k};
k=1\ldots,l-1, 
R^{k-1}\precsim R^{k},k=1,\ldots,l$, $R^{0}=\underline{R}$,  $R^{l}=\overline{R}\}$.

\noindent By individual rationality   
$F_l(\overline{R})\overline{R}(0,0)$. Let $(\overline{T},1)\overline{I}(0,0)$. 
By the single-crossing property for $R\prec \overline{R}$ if $(t,q)I(0,0)$, then 
$t<\overline{T}$. Thus by individual rationality let $F_l(R)\leq \overline{T}$ for all $F_l\in \mathcal{F}_l$.

\noindent Thus, 
\[C\subseteq [0,\overline{T}]\times [0,\overline{T}]\times\ldots \times[0,\overline{T}]\times\ldots\times[0,1]\times\ldots \times[0,1]\times \{\underline{R}\} \times [\underline{R},\overline{R}]\times\ldots [\underline{R},\overline{R}]\times \{\overline{R}\}\equiv \Sigma\]

\noindent Being a finite product of compact spaces, $\Sigma$ is compact in the product metric topology. We show that $C$ is closed in $\Sigma$. 
Let, \[x^{n}=\{(t^{0n},\ldots,t^{(l-1)n}, q^{0n},\ldots,q^{(l-1)l}, R^{0n}, R^{1n},\ldots, R^{ln})\}_{n=1}^{\infty}\]
\noindent  be a sequence in $C$ that converges to $x=(t^{0},\ldots,t^{(l-1)}, q^{0},\ldots,q^{(l-1)}, R^{0}, R^{1},\ldots, R^{l})$.
Inequalities are maintained in the limit. Further, by Lemma \ref{lemma:preference_conv} indifference is maintained in the limit. Thus, $x\in C$ and hence $C$ is a closed subset of a compact space. Thus $C$ is compact. 
Therefore the optimization problem in Theorem \ref{thm:optimal} has a solution. 
This completes the proof of Lemma \ref{lemma:soln_exists}.  

Now we complete the proof of Theorem \ref{thm:optimal}. 
For all $l$, $0\leq E(F_l^{*})\leq \sum_{k=0}^{l-1}\overline{T}\mu([R^{k},R^{k+1}])=\overline{T}$. Thus 
$E=\{E(F)\mid F\in \mathcal{F}\}$ is bounded. This completes the proof of Theorem \ref{thm:optimal}.

\noindent{\bf Proof of Theorem \ref{thm:countable}:}
We need the following intermediary result. 

\begin{lemma}\rm Let $F:[\underline{R},\overline{R}]\rightarrow \mathbb{Z}$ be strategy-proof and individually rational. Let $F(R')$ be a limit point of $Rn(F)$. 
	Let $R^{*}=\sup\{R\mid F(R)=F(R')\}$. If there is a sequence $\{F(R^{n})\}_{n=1}^{\infty}$ such that $F(R')<F(R^{n+1})<F(R^{n})$ and $F(R^{n})\downarrow F(R')$, then $F(R^*)=F(R')$. Further, let 
	$R^{**}=\inf\{R\mid F(R)=F(R')\}$. If there is a sequence $\{F(R^{n})\}_{n=1}^{\infty}$ such that $F(R^n)<F(R^{n+1})<F(R')$ and $F(R^{n})\uparrow F(R')$. Then $F(R^{**})=F(R')$.      
	\label{lemma:limit_point_supremum}
\end{lemma}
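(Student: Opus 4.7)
The plan is to prove the first statement by contradiction, then observe that the second is symmetric. Throughout I invoke monotonicity of $F$ from Lemma \ref{lemma:mon} (which holds because $F$ is strategy-proof); continuity of $V^{F}$ turns out not to be needed. Set $S=\{R\in[\underline{R},\overline{R}]\mid F(R)=F(R')\}$, which is an interval by the $\iff$ in Definition \ref{defn:mon}, and let $R^{*}=\sup_{\prec}S$; this exists because $\mathcal{R}^{rsc}$ has the least upper bound property (Theorem \ref{thm:lin_cont}).

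Suppose for contradiction $F(R^{*})\neq F(R')$. Since $R'\in S$ gives $R'\precsim R^{*}$, monotonicity yields $F(R')\leq F(R^{*})$, and combined with $F(R^{*})\neq F(R')$ this forces the strict diagonal inequality $F(R')<F(R^{*})$. I next show $R^{*}\prec R^{n}$ for all large $n$. For any $R\in S$, $F(R)=F(R')<F(R^{n})$, so $F(R)\neq F(R^{n})$; by comparability of $\prec$ and the contrapositive of monotonicity, $R\prec R^{n}$, making $R^{n}$ an upper bound of $S$ and hence $R^{*}\precsim R^{n}$. Since $F(R^{n})\to F(R')\neq F(R^{*})$, for large $n$ we have $F(R^{n})\neq F(R^{*})$, so in fact $R^{*}\prec R^{n}$ strictly.

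The contradiction comes from looking at coordinates. Write $F(R^{n})=(t^{n},q^{n})$, $F(R^{*})=(t^{*},q^{*})$, $F(R')=(t',q')$. The convergence $F(R^{n})\downarrow F(R')$ gives $t^{n}\to t'$ and $q^{n}\to q'$, while $t^{*}>t'$ and $q^{*}>q'$ are fixed strictly positive gaps; thus for sufficiently large $n$, $t^{n}<t^{*}$ and $q^{n}<q^{*}$, i.e., $F(R^{n})<F(R^{*})$. By the monotonicity $\iff$ again, this implies $R^{n}\prec R^{*}$, contradicting $R^{*}\prec R^{n}$. Therefore $F(R^{*})=F(R')$.

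The second part is the mirror image: define $R^{**}=\inf_{\prec}\{R\mid F(R)=F(R')\}$, assume for contradiction $F(R^{**})<F(R')$, observe that each hypothesized $R^{n}$ (with $F(R^{n})\uparrow F(R')$ from below) is a lower bound of $\{R\mid F(R)=F(R')\}$ so $R^{n}\precsim R^{**}$, and strictly so for large $n$; coordinate convergence then forces $F(R^{n})>F(R^{**})$ for large $n$, yielding $R^{**}\prec R^{n}$, a contradiction. The one point to be careful about, and the only step where a careless reader could stumble, is that the monotonicity $\iff$ (Definition \ref{defn:mon}) converts between orders on preferences and orders on bundles only when the bundles are distinct and diagonally comparable; the strict diagonal approach $F(R^{n})\downarrow F(R')$ in the hypothesis is precisely what guarantees this, and it is the essential ingredient driving the proof.
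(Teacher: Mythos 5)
Your proof is correct and follows essentially the same route as the paper's: by way of contradiction assume $F(R')<F(R^{*})$, use the convergence $F(R^{n})\downarrow F(R')$ to get $F(R')<F(R^{n})<F(R^{*})$ for large $n$, observe that each such $R^{n}$ is an upper bound of $\{R\mid F(R)=F(R')\}$ by monotonicity, and derive a contradiction with $R^{*}$ being the least upper bound. The only cosmetic difference is that you make the coordinate-wise convergence step and the appeal to comparability explicit, which the paper leaves implicit.
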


\begin{proof} Since $F$ is strategy-proof, $F$ is monotone. Therefore, given that 
	$F(R')$ is a limit point of $Rn(F)$ either an increasing or a decreasing sequence of bundles from the range must converge to $F(R')$.  		
	We prove the claim for supremum.  
	Let by way of contradiction $F(R')< F(R^{*})$. 
	But then there is $N$ such that $F(R')<F(R^{n})<F(R^{*})$ for all $n\geq N$. 
	By monotonicity of $F$, for all  $R\in]R',R^n[$ we have $F(R')\leq F(R)\leq F(R^{n})$. Further if $R^{n}\precsim R$, then $F(R^n)\leq F(R)$. Thus given that $F(R')<F(R^{n})$, if $R\in \{R\mid F(R')=F(R)\}$, then $R\precsim R^{n}$ i.e., $R^{n}$ is an upper bound of $\{R\mid F(R')=F(R)\}$. 
	Thus $R^*$ is not the supremum since $F(R^{n})<F(R^{*})$.     
	This is a contradiction. The argument for infimum is the same. 
	This completes the proof of the Lemma.

\end{proof}

\noindent Back to the proof of Theorem \label{thm:countable}.  
First we consider the situation where the number of limit points of $Rn(F)$ is empty, i.e., 
$Rn(F)$ is a countable discrete set. Claim \ref{claim:next} shows that the notions of 'the next' and `the preceding' bundle in $Rn(F)$ are well defined.

\begin{claim}\rm Let $Rn(F)$ be countable, closed and has no limit point. If $F(R')<F(R)<F(R'')$, then there are $F(R^*), F(R^{**})$ 
	such that $F(R')\leq F(R^{*})<F(R)<F(R^{**})\leq F(R'')$ and no other bundle from the range lies between $F(R^{*})$ and $F(R^{**})$. Thus $F(R^{*})$ is the bundle preceding $F(R)$, and $F(R^{**})$ is bundle next to $F(R)$. If $F(R)$ is the smallest bundle in the range, then $F(R^{**})$ is the next bundle, and if  $F(R)$ is the largest bundle in the range then $F(R^{*})$ is the bundle preceding $F(R)$.{\footnote{Largest and smallest is defined according to the order $<$ on $Rn(F)$. We have defined earlier that $(t',q')<(t'',q'')$ if and only if $t'<t''$ and $q'<q''$.}}           
	\label{claim:next} 
\end{claim}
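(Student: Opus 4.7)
The plan is to show that the set $A = \{z \in Rn(F) : F(R') \leq z < F(R)\}$ has a maximum element (which will be $F(R^{*})$) and, symmetrically, that $B = \{z \in Rn(F) : F(R) < z \leq F(R'')\}$ has a minimum element (which will be $F(R^{**})$). Since $F$ is monotone by Lemma \ref{lemma:mon}, the range $Rn(F)$ is totally ordered under the diagonal order $<$, so $\max A$ and $\min B$ are automatically the immediate predecessor and successor of $F(R)$ in $Rn(F)$. A useful preliminary observation is that any two distinct diagonal bundles have distinct $t$-coordinates, since $(t',q')<(t'',q'')$ requires $t'<t''$; hence the projection $z\mapsto t(z)$ is strictly order-preserving on $Rn(F)$.

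For the existence of $\max A$: the set is nonempty since $F(R')\in A$, and $t^{*}:=\sup\{t(z):z\in A\}$ is finite because $t(z)<t(F(R))$ for every $z\in A$. Suppose, for contradiction, that $t^{*}$ is not attained in $A$. Then I can extract a sequence $\{z_{n}\}\subseteq A$ with $t(z_{n})\uparrow t^{*}$ and $t(z_{n})<t^{*}$; by the preliminary observation the $z_{n}$ are pairwise distinct. Since $q(z_{n})\in[0,1]$, Bolzano--Weierstrass lets me pass to a subsequence so that $q(z_{n})\to q^{*}$, hence $z_{n}\to (t^{*},q^{*})$ in $\mathbb{Z}$. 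Because $Rn(F)$ is closed, $(t^{*},q^{*})\in Rn(F)$; and because the $z_{n}$ are distinct from $(t^{*},q^{*})$ (their $t$-coordinates are strictly smaller than $t^{*}$), the point $(t^{*},q^{*})$ is a limit point of $Rn(F)$, contradicting the hypothesis that $Rn(F)$ has no limit points. Therefore $t^{*}$ is attained by some $z^{*}\in A$. If there were $z\in A$ with $z>z^{*}$, monotonicity would force $t(z)>t^{*}$, impossible; so $z^{*}=\max A$. Choose any $R^{*}$ with $F(R^{*})=z^{*}$.

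The argument for $F(R^{**})=\min B$ is completely symmetric, replacing supremum by infimum, Bolzano--Weierstrass applied to the $q$-coordinates of a strictly decreasing $t$-sequence, and using that $Rn(F)$ is closed to land in the range. Edge cases are handled in the same way: if $F(R)$ is the smallest bundle of $Rn(F)$, only $B$ is relevant and produces $F(R^{**})$; if $F(R)$ is the largest, only $A$ is relevant and produces $F(R^{*})$. By maximality of $F(R^{*})$ in $A$ and minimality of $F(R^{**})$ in $B$, no element of $Rn(F)$ lies strictly between $F(R^{*})$ and $F(R)$ nor between $F(R)$ and $F(R^{**})$, which is precisely the claim.

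The one delicate point --- and the main obstacle --- is verifying that a non-attained supremum really produces a \emph{genuine} limit point: this requires both that the approximating bundles are distinct (handled by strict monotonicity of $t$ on $Rn(F)$) and that the limit lands inside $\mathbb{Z}$ (handled by compactness of $[0,1]$ in the $q$-coordinate together with closedness of $Rn(F)$). Once these two ingredients are in place, the no-limit-point hypothesis is immediately violated and the contradiction closes the argument.
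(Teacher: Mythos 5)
Your proof is correct and follows essentially the same route as the paper's: both arguments show that the absence of an immediate neighbour of $F(R)$ in $Rn(F)$ would produce a convergent sequence of distinct range bundles whose limit, by closedness, lies in $Rn(F)$ and is a limit point, contradicting the no-limit-point hypothesis (the paper gets convergence from monotonicity of both coordinates of a decreasing sequence, while you take the supremum of the $t$-coordinates and apply Bolzano--Weierstrass to the $q$-coordinates). One small remark: in the context of Theorem \ref{thm:countable} monotonicity of $F$ is a hypothesis rather than a consequence of Lemma \ref{lemma:mon} (invoking that lemma would be circular here, since strategy-proofness is what is being proved), but the total ordering of $Rn(F)$ that you need is available either way.
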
	 

\noindent{\bf Proof of Claim \ref{claim:next}:} Let $F(R)$ be not the largest bundle. Let by way of contradiction there is no bundle in the range next to $F(R)$. Thus there is a sequence $\{F(R^{n})\}_{n=1}^{\infty}$ such that $F(R)<F(R^{n+1})<F(R^{n})$ for all $n$. Since the sequence is bounded below, it will converge. Let $F(R)\leq (t,q)=\lim_{n\rightarrow \infty}F(R^{n})$. Since $Rn(F)$ is closed $(t,q)\in Rn(F)$. But then $(t,q)$ is a limit point of $Rn(F)$ that lies in the range. This is a contradiction. Arguments for other cases are similar. Thus Claim \ref{claim:next} follows.        

\noindent  Let $Rn(F)^{*}\subseteq Rn(F)$ be finite such that $(a)$ 
if $F(R'), F(R'')\in Rn(F)^{*}$, then $ F(R)\in Rn(F)^{*}$ for $R'\precsim R \precsim R''$, and $(b)$ $\# Rn(F)^{*}\geq 2$. We note that $Rn(F)^{*}$ is well defined because the number of limit points in $Rn(F)$ is finite.   
Let $\mathcal{R}^{rsc}(Rn(F)^*)=\{R\mid F(R)\in Rn(F)^{*}\}$.  
By Theorem \ref{thm:implies_strtagey_proof_finite_range} $F$ restricted to $\mathcal{R}^{rsc}(Rn(F)^*)$ is strategy-proof.

Now we complete the proof for $Rn(F)$ with no limit points. We prove this case by induction. Consider $Rn(F)^{*1}$ with only two distinct bundles.
Let the two bundles be $F(R')$ and $F(R'')$ and $R'\prec R''$.  
Then $F$ restricted to $\mathcal{R}^{rsc}(Rn(F)^{*1})$ is strategy-proof. 
Then choose $R'''\prec R'\prec R''\prec R''''$ 
such that by Claim \ref{claim:next} 
there are no bundle in the range between $F(R'''')$ and $F(R')$, and $F(R'')$ and $F(R'''')$. By using $F(R'''), F(R'),F(R''),F(R'''')$ define $\mathcal{R}^{rsc}(Rn(F)^{*2})$. Then $F$ restricted to $\mathcal{R}^{rsc}(Rn(F)^{*2})$ is strategy-proof. By induction this process produces a strategy-proof mechanism. 

To ensure that $F$ is indeed strategy-proof, let by way of contradiction $F(R^{**})P^{*}F(R^{*})$. Let without loss of generality $F(R^*)<F(R^{**})$. Then note that $\{F(R)\mid F(R^{*})\leq F(R) \leq F(R^{**})\}$ is finite. If this set is not finite, then this set is a bounded infinite set. Then this set has a limit point, and since $Rn(F)$ is closed the limit point is in the range. This leads to a contradiction to our assumption that $Rn(F)$ has no limit points. But then by Theorem \ref{thm:implies_strtagey_proof_finite_range}, $F$ restricted to $\{R\mid F(R^{*})\leq F(R) \leq F(R^{**})\}$ is strategy-proof. This contradicts $F(R^{**})P^{*}F(R^{*})$. Further note that 
$\{R\mid F(R^{*})\leq F(R) \leq F(R^{**})\}\subseteq \mathcal{R}^{rsc}(Rn(F)^{*n})$ for some $n$. Without loss of generality let $F(R^{**})<F(R^{'})$. But then $\{F(R)\mid F(R^{*})\leq F(R)\leq  F(R'')\}$ is finite. Here $F(R')$ and $F(R'')$ refer to the bundles with which $Rn(F)^{*1}$ is defined. Thus the proof for $Rn(F)$ with no limit points follows. Now we assume that $Rn(F)$ may have finite number of limit points.

Let without loss of generality $F(R^{1})<F(R^{2})<F(R^{3})$ be three limit points of $Rn(F)$. The argument for $Rn(F)$  with limit points more than three is similar.

Consider $\{F(R)\mid F(R^{1})\leq F(R) \leq F(R^{2})\}$.    
Consider $F(R^{1})<F(R^*)<F(R^{**})<F(R^{2})$. Then note that $\{F(R)\mid F(R^{*})\leq F(R)\leq F(R^{**})\}$ is finite. Then by Theorem \ref{thm:implies_strtagey_proof_finite_range}, $F$ restricted  to $\{R\mid F(R^{*})\leq F(R)\leq F(R^{**})\}$ is strategy-proof. By the argument applied to the case for no limit point, $F$ restricted to $\{R\mid F(R^1)<F(R)<F(R^2)\}$ is strategy-proof. We can apply the same argument because $\{F(R)\mid F(R^1)<F(R)<F(R^2)\}$ is discrete.    
That is, $F(R')R'F(R'')$ 
if $F(R')$ and $F(R'')$ are strictly between $F(R^1)$ and $F(R^2)$.

Now we show for $i=1,2$, $F(R^i)R^iF(R)$ if $F(R^{1})<F(R)<F(R^{2})$.  Let by way of contradiction
$F(R)P^{1}F(R^1)$. Now two situations  emerge. The first situation is $\{F(R')\mid F(R^1)\leq F(R')\leq F(R)\}$ finite. 
In this situation an increasing sequence of bundles from the range converge to $F(R^1)$ since $F(R^1)$ is a limit point. However, $\{F(R')\mid F(R^1)\leq F(R')\leq F(R)\}$ finite means that by Theorem \ref{thm:implies_strtagey_proof_finite_range} $F$ restricted to $\{R\mid F(R')\mid F(R^1)\leq F(R')\leq F(R)\}$ is strategy-proof.
To see this note that for $R^{*}\in [R^1,R]$, $F(R^1)\leq F(R^{*})\leq F(R)$. 
Thus we can use Theorem \ref{thm:implies_strtagey_proof_finite_range} since this theorem holds for closed intervals as well. 
Therefore $F(R)P^{1}F(R^1)$ cannot happen.           

Now, let $\{F(R')\mid F(R^1)\leq F(R')\leq F(R)\}$ be not finite and $F(R)P^1 F(R^1)$. Let $R^{*}=\sup \{R'\mid F(R')=F(R^1)\}$. Since $F(R^1)<F(R)<F(R^2)$, and we have assumed that the number of limit points is three, by Lemma \ref{lemma:limit_point_supremum}, $F(R^{*})=F(R^1)$.
To see this let by way of contradiction $F(R^{0})$ be such that 
$F(R^1)<F(R^{0})<F(R)$ and $\{F(R')\mid F(R^1)\leq F(R')\leq F(R^0)\}$ is finite.
Then $\{F(R')\mid F(R^0)\leq F(R')\leq F(R) \}$ is infinite and has a limit point. 
This limit point is not from $\{F(R^1),F(R^2),F(R^3)\}$ entailing a contradiction.  
By Lemma \ref{lemma:preference_preserve}, $F(R)P^*F(R^1)$. Also by Lemma \ref{lemma:preference_preserve}, $F(R)P'F(R^1)$ for all $R'$ such that $R^*\prec R'$. By continuity of $R^{*}$ let $B(\epsilon, F(R^1))$ be such that if $z\in B(\epsilon, F(R^1))$, then $F(R)P^*z$.
Since $F(R^*)=F(R^1)$ is a limit point of $Rn(F)$ and $\{F(R')\mid F(R^1)<F(R')<F(R)\}$ is not finite where $F(R)<F(R^2)$, there is a sequence $\{F(R^{n})\}_{n=1}^{\infty}$, $F(R^{*})<F(R^{n+1})<F(R^{n})$ such that $F(R^{n})\downarrow F(R^{*})$. By monotonicity of $F$, $F(R^{*})<F(R^{n+1})<F(R^{n})$ holds. Choose $n$ such that $F(R^{n})\in B(\epsilon, F(R^1))$. Then $F(R)P^{n}F(R^{n})$.            	   
Since $\{F(R^{***})\mid F(R^{n})\leq F(R^{***})\leq F(R)\}$ is finite, by Theorem \ref{thm:implies_strtagey_proof_finite_range}, $F(R^{n})R^{n}F(R)$. This is a contradiction. Thus    
$F(R^{1})R^{1}F(R)$. Similarly, $F(R^{2})R^{2}F(R)$.

Now we show that for $R$ with $F(R^1)<F(R)<F(R^2)$ it follows that $F(R)RF(R^{i}),i=1,2$. Let $\{F(R^{n})\}_{n=1}^{\infty}$ be a decreasing sequence such that $F(R^{n})\downarrow F(R^{1})$. Now $\{F(R^{***})\mid F(R^{n})\leq F(R^{***})\leq F(R)\}$ is finite, and thus by Theorem \ref{thm:implies_strtagey_proof_finite_range}, $F(R)RF(R^{n})$. By continuity of $R$, 
$F(R)RF(R^{1})$. Similarly, $F(R)RF(R^{2})$. 
We have $F(R^1)R^1 F(R)$ where $F(R^1)<F(R)<F(R^2)$. We also have $F(R)R F(R^2)$. Since $R^1\prec R$, by Lemma \ref{lemma:preference_preserve}, $ F(R) R^1  F(R^2)$. By transitivity, $F(R^1)R^1F(R^2)$. The argument for $F(R^2)R^2F(R^1)$ is similar.

Thus $F$ restricted to $\{R\mid F(R^{1})\leq F(R) \leq F(R^{2})\}$ is strategy-proof. 
Similarly, $F$ restricted to $\{R\mid F(R^{2})\leq F(R) \leq F(R^{3})\}$ is strategy-proof. 
Next we argue that $F$ is strategy-proof. Assume that $\{F(R)\mid F(R^1)<F(R)<F(R^{2})\}$ is not finite.
Suppose $F(R^1)I^{1}F(R^{2})$. Consider $z$ such that $F(R^1)R^{1}z$ and $F(R^{1})<z<F(R^{2})$. By the single-crossing property if $R^{1}\prec R$, then $F(R^{2})P z$. This violates strategy-proofness for $\{R\mid F(R^{1})< F(R)< F(R^{2})\}$. Thus $F(R^{1})P^{1}F(R^{2})$. 
Let by way of contradiction $F(R^{3})P^{1}F(R^{1})$. 
Since $F$ restricted to $\{R\mid F(R^{1})\leq F(R) \leq F(R^{2})\}$ and $\{R\mid F(R^{2})\leq F(R) \leq F(R^{3})\}$ is strategy-proof, $F(R^{2})R^{2}F(R^{1})$, and $F(R^{2})R^{2}F(R^{3})$. Since $F(R^{1})P^{1}F(R^{2})$, this contradicts the single-crossing property. In particular $R^1$ and $R^2$ cross more than once.   
Similar arguments show that if $F(R^{1})<F(R)<F(R^{2})$, then $F(R)RF(R')$ where $F(R^{2})<F(R')<F(R^{3})$. Thus, $F$ restricted to $\{R\mid F(R^{1}) \leq F(R)\leq F(R^{3})\}$ is strategy-proof. 

If there is no sequence $\{F(R^{n})\}_{n=1}^{\infty}$ such that $F(R^{3})<F(R^{n})$ with the sequence converging to $F(R^{3})$, then $F$ restricted to $\{R\mid F(R^{3})\leq F(R)\}$ is strategy-proof since $\{F(R)|F(R^{3})\leq F(R)\}$ is discrete. If such a sequence exists, then the arguments are similar to the one for $\{R\mid F(R^{1})\leq F(R) \leq F(R^{2})\}$. Similarly, $F$ restricted to $\{R\mid F(R)\leq F(R^1)\}$ is strategy-proof.   
Let by way of contradiction $F(R^3)<F(R)$ and $F(R^{2})PF(R)$. Since $F(R^{3})R^{3}F(R^{2})$, $F(R^{3})R^{3}F(R)$, this contradicts the single-crossing property. That is $R$ and $R^{3}$ cross more than once because $F(R)RF(R^{3})$.     
Hence the single-crossing property ensures that $F$ is strategy-prof. This completes the proof of Theorem \ref{thm:countable}. 

\medskip

\noindent{\bf Proof of Theorem \ref{prop:opt_count}: } 

Without loss of generality we assume that $Rn(F^c)$ has one limit point, and that it can be approached by both increasing and decreasing sequences.  Fix $\epsilon>0$. Let $(t^*,q^*)$ be the limit point of $Rn(F^c)$. Let $R'=\inf\{R\mid F^c(R)=(t^{*},q^*)\}$, $R''=\sup\{R\mid F^c(R)=(t^{*},q^*)\}$. 
Then by Lemma \ref{lemma:limit_point_supremum}  
$F^c(R')=F^c(R'')=(t^*,q^*)$. The expected revenue is as defined in Equation (\ref{eqn:opt_count}). 
We note that $R^n\uparrow R'$. Consider the interval $]R^{0},R']$. We show that 
eventually the sequence $\{R^{n}\}_{n=1}^{\infty}$ lies in $]R^{0},R']$. 
Since $(t^*,q^*)$ is a limit point there is $N$ such that
$F^{c}(R^{0})<(t^{N},q^{N})<(t^*,q^*)$. By strategy-proofness $F^c(R^{N})\in \{(t^N,q^N), (t^{N+1},q^{N+1}) \}$. By monotonicity of $F^c$, $R^0\prec R^N$.  
Now by monotonicity of $F^c$ for all  $n>N$,
$t^n \in ]t^{N},t^*] $ and $q^n \in ]q^{N},q^*]$. 
By monotonicity of $F^{c}$, 
for all $n> N$, $R^{n}\in ]R^{N},R']\subseteq [R^{0},R']$. Analogously, $R^{k}\downarrow R''$.

Consider, $F^c(R^{w})$, where $F^c(R^{w})$ is an arbitrary element in the sequence $\{F^c(R^{n})\}_{n=1}^{\infty}$. 
By richness, let $R^{w}\prec R^{w+l}\prec R'$ be such that $F^c(R^{w})I^{w+l}(t^{*},q^*)$. 
By strategy-proofness, and since $F^c(R')=(t^{*},q^*)$ is a limit point,  $(t^{*},q^*)P'F^c(R^w)$ and $F^c(R^{w})P^{w}(t^{*},q^*)$. Thus $R^{w+l}$ is well defined. Note that
there are infinitely many elements between $F^{c}(R^w)$ and $F^{c}(R')=(t^{*},q^*)$ since $(t^{*},q^*)$
is a limit point. Then Lemma \ref{lemma:preference_preserve} rules out $F^{c}(R^{*})I'F^{c}(R^{w})$ and $F^{c}(R^{w})I^{w}F^{c}(R^{*})$. Figure $3$ depicts is pictorially.    

\begin{center} 
	\includegraphics[height=6cm, width=7cm]{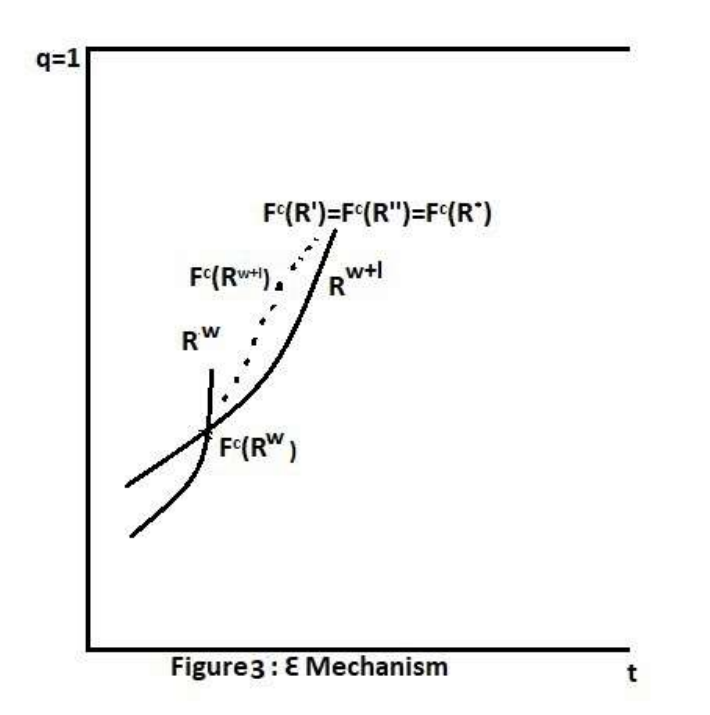}
\end{center}

\noindent We note that if for $R$ is such that $R^w\prec  R \prec R'$, $F^c(R)\in UC (R^{w+l}, (t^*,q^*))=UC (R^{w+l}, F^{c}(R^w))$ and if $F^c(R^w)\neq F^c(R)$ then $F^{c}(R) P^{w+l} (t^*,q^*)$. 
By monotonicity of $F^{c}$, $F^c(R^w)\leq F^c(R)\leq (t^*,q^*)$. By strategy proofness
$F^{c}(R)R(t^*,q^*)$ and $F^{c}(R)RF^{c}(R^w)$. Now $F(R)^c<(t^*,q^*)$. If $F^c(R^w)<F^c(R)$, then by the single-crossing property we have $F^{c}(R) P^{w+l} (t^*,q^*)$. That is why the dots in the picture are above $IC(R^{w+l}, (t^*,q^*))$.

Define $F^w$ such that
$F^{w}(R)=F^c(R^w)$ if $R\in [R^w,R^{w+l}]$, $F^{w}(R)=F^c(R^{*})$ if $R\in ]R^{w+l},R^{'}]$, $F^{w}=F^{c}$ otherwise. Note that $F^{w}$ is finite in $[\underline{R},R']$. We further note that compared to 
$F^{c}$, in $[R^w,R^{w+l}]$ the expected revenue in $F^{w}$ is smaller and in $]R^{w+l},R']$ it is higher. 
The decrease in revenue need not be compensated by the increase in revenue in $F^{w}$. 
The expected revenue in $[R^{w},R^{w+l}]$ in $F^{c}$ is $t^{w}\mu([R^{w},R^{w+1}])+\sum_{w+i=w+1}^{w+l}t^{w+i}\mu([R^{w+i},R^{w+i+1}])$.
In $F^{w}$ the corresponding revenue is $t^{w}\mu([R^{w},R^{w+l}])$. 
Thus $\sum_{w+i=w+1}^{w+l}t^{w+i}\mu([R^{w+i},R^{w+i+1}])$ is the expected revenue that may not be compensated by the increase in the expected revenue in $F^{w}$. We note by the single-crossing property
$F^w$ is strategy-proof and individually rational. 
We have $R^{n}\uparrow R'$, and therefore $[R^{n},R']\downarrow \{R'\}$. Since $\mu$ is continuous, $\mu([R^{n},R'])\downarrow \mu(\{R'\})=0$.

Now $\sum_{w+i=w+1}^{w+l}t^{w+i}\mu([R^{w+i},R^{w+i+1}])\leq \sum_{w+i=w+1}^{w+l}t(R')\mu([R^{w+i},R^{w+i+1}])\leq t(R')\mu([R^{w},R'])$, note $t(R')=t^*$. We can choose $w$ large enough
such that $t^{c}(R')\mu([R^{w},R'])<\frac{\epsilon}{2}$. For this large $w$ consider  $F^{w}$, which is finite in $[\underline{R},R']$. Note that $\{F^c(R)\mid F^c(\underline{R})\leq F^c(R)\leq F^c(R^{w})\}$ is finite.
If it were not finite, then being bounded $\{F^c(R)\mid F^c(\underline{R})\leq F^c(R)\leq F^c(R^{w})\}$ will have a limit point and since $Rn(F^{c})$ is closed the limit point will belong to $Rn(F^{c})$ thus adding a limit point to $Rn(F^{c})$.  

Now consider $R^{k}\downarrow R''$. Let $R^{q}$ be such that $\overline{T}\mu([R'',R^{q}])<\frac{\epsilon}{2}$. Consider $R^{q-l}$ such that $F(R'')I^{q-l}F(R^{q})$. Define $F^{q}$ by setting $F^{q}(R)=F(R^{*})$ for all $R\in [R'',R^{q-l}[$ and $F^{q}(R)=F(R^{q})$ for $R\in [R^{q-l},R^{q}]$, ans let $F^{q}=F^{c}$ otherwise. 
Finally define the $F^{\epsilon}$ as follows: 

$$F^{\epsilon}(R) = \begin{cases}
	F^{w}(R), & \text{if $R\in [\underline{R},R' [$;}\\
	F^{c}(R^{*}) & \text{if $R\in [R',R'']$}\\
	F^{q}(R) & \text{if $R\in ]R'',\overline{R}]$}\\ 
\end{cases}$$

\noindent By construction $F^{\epsilon}$ is strategy-proof and individually rational, and $Rn(F^{\epsilon})$ is finite. Now let $F^{*}$ exist. Consider $\epsilon=\frac{1}{n}$. 
We have shown that there exists $F^{\frac{1}{n}}$ and $Rn(F^{\frac{1}{n}})$ such that $0\leq E(F^{c})-E(F^{\frac{1}{n}})\leq \frac{1}{n}$. 
Thus $E(F^{c})\leq E(F^{\frac{1}{n}})+\frac{1}{n}\leq E(F^{*})+\frac{1}{n}$. By letting $n\rightarrow \infty$ we obtain $E(F^{c})\leq E(F^{*})$. This completes the proof of Proposition \ref{prop:opt_count}.

\medskip

\noindent{\bf Proof of Proposition \ref{prop:rest_sp}.} 	We prove the proposition in two separate Lemmas.  

\begin{lemma}\rm Let $F:[\underline{R},\overline{R}]\rightarrow \mathbb{Z}$ be a mechanism and $Rn(F)$ be finite, $[\underline{R},\overline{R}]\subseteq \mathcal{R}^{rrc}$. Also assume that if $F(R)I(0,0)$, then $F(R)=(0,0)$. If $F$ is strategy-proof, then  $F$ is monotone and $V^{F}$ is continuous. 
	\label{lemma:rest_sp_implies}	  
\end{lemma}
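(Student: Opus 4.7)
The proof establishes the two claims in sequence, mirroring Lemma \ref{lemma:mon} and Lemma \ref{lemma:cont_correspondence} but adapting for two new features of restricted classical preferences: each $R$ is defined only on $[0,t_R]\times [0,1]$, and zero-equivalence makes $R$ degenerate on the boundary $\{t=t_R\}\cup\{q=0\}$. The hypothesis ``$F(R)I(0,0)\Rightarrow F(R)=(0,0)$'' is the organizing tool throughout: it forces each $F(R)$ to be either $(0,0)$ or strictly interior (i.e., $t_{F(R)}<t_R$ and $q_{F(R)}>0$), where $R$ is money-monotone and $q$-monotone exactly as a classical preference. A second subtlety is that strategy-proofness asserts $F(R)RF(R')$ only when $F(R')\in[0,t_R]\times[0,1]$; by Lemma \ref{lemma:order_payment_bound}, $R'\prec R''$ implies $t_{R'}<t_{R''}$, so $F(R')$ always lies in $R''$'s domain but $F(R'')$ may fail to lie in $R'$'s domain.

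For monotonicity, fix $R'\prec R''$ with $F(R')\neq F(R'')$; strategy-proofness always gives $F(R'')R''F(R')$. I case-split: (a) if $F(R')=(0,0)$, monotonicity is immediate; (b) if $F(R')$ is strictly interior but $F(R'')=(0,0)$, interior monotonicity of $R''$ at $F(R')$ yields $F(R')P''(0,0)=F(R'')$, contradicting strategy-proofness; (c) if both are strictly interior and $t_{F(R'')}<t_{R'}$, then $F(R'')$ is strictly interior to $R'$'s domain, both strategy-proofness inequalities apply, and the containment argument of Lemma \ref{lemma:mon} goes through, since any bundle in the intersection $LC(R'',F(R''))\cap UC(R',F(R''))$ must be strictly interior to $R'$'s domain (a boundary bundle would be $R'$-equivalent to $(0,0)$ and cannot be weakly preferred to the strictly interior $F(R'')$ under $R'$); (d) if both are strictly interior and $t_{F(R'')}\geq t_{R'}>t_{F(R')}$, then $F(R'')R''F(R')$ together with interior monotonicity of $R''$ forces $q_{F(R')}<q_{F(R'')}$, giving $F(R')<F(R'')$.

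For continuity of $V^F$, take a decreasing sequence $R^n\downarrow R$; monotonicity of $F$ makes $\{F(R^n)\}$ decreasing and bounded below by $F(R)$, hence convergent to some $z\in[0,t_R]\times[0,1]$ because $t_{F(R^n)}\leq t_{R^n}\downarrow t_R$. If $F(R^n)=(0,0)$ for some $n$, monotonicity forces $F(R)=(0,0)=z$, done. Otherwise each $F(R^n)$ is strictly interior. Suppose $zPF(R)$: if $z$ lies on the boundary, $zI(0,0)$ by zero-equivalence, yielding a contradiction whether $F(R)$ equals $(0,0)$ (so $zIF(R)$) or is strictly interior (so $F(R)P(0,0)Iz$); if $z$ is strictly interior, continuity of $R$ yields a ball $B(\epsilon,z)$ strictly inside $R$'s domain on which every bundle strictly beats $F(R)$, so for large $n$ the bundle $F(R^n)$ sits in this ball and satisfies $F(R^n)PF(R)$, contradicting strategy-proofness. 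Suppose instead $F(R)Pz$: following Lemma \ref{lemma:cont_correspondence}, pick an $\epsilon$-rectangle $[t_z,t_z+\epsilon]\times[q_z,q_z+\epsilon]$ on which $F(R)$ strictly dominates, use richness of $\mathcal{R}^{rrc}$ to select $R^*\succ R$ with $F(R)I^*(t_z,q_z+\epsilon)$, and then single-crossing together with interior monotonicity of $R^n$ (valid because the rectangle lies inside $R^n$'s domain for large $n$) yields $F(R)P^n F(R^n)$ for large $n$ with $R^n\prec R^*$, contradicting the strategy-proofness inequality $F(R^n)R^n F(R)$.

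The main obstacle will be the boundary analysis in the continuity argument: strategy-proofness comparisons $F(R)RF(R^n)$ presuppose $F(R^n)\in[0,t_R]\times[0,1]$, which is not automatic since $t_{R^n}>t_R$. The hypothesis $F(R)I(0,0)\Rightarrow F(R)=(0,0)$ is indispensable here — without it, an allocated bundle could be zero-equivalent without literally being $(0,0)$ (say, lying on the line $t=t_R$), and interior monotonicity would fail precisely where the classical contradiction machinery requires it, causing the case analysis in both parts to stall on the boundary.
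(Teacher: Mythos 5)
Your overall architecture is sound and your monotonicity argument essentially works, but there is a concrete gap in the continuity part. The problematic configuration is a decreasing sequence $R^n\downarrow R$ with $F(R^n)\rightarrow z$ where $z$ lands exactly on the boundary of $R$'s domain, i.e.\ $t_z=t_R$ with $q_z>0$, while $F(R)$ is strictly interior. Then $zI(0,0)$ under $R$ by $0$-equivalence and $F(R)P(0,0)$, so you are in your ``$F(R)Pz$'' branch; but there your rectangle $[t_z,t_z+\epsilon]\times[q_z,q_z+\epsilon]$ is not contained in $R$'s domain (it lies to the right of $t_R$), and the claim that it ``lies inside $R^n$'s domain for large $n$'' is false, since $t_{R^n}\downarrow t_R=t_z<t_z+\epsilon$. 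Your closing remark correctly flags the boundary as the obstacle, but the hypothesis $F(R)I(0,0)\Rightarrow F(R)=(0,0)$ only disposes of the case $F(R)=(0,0)$; it does not rule out an interior $F(R)$ facing a boundary limit $z$. The configuration can be excluded, but by a different device: with finite range $F(R^n)=z$ eventually, so by the diagonal richness of $\mathcal{R}^{rrc}$ pick $\widehat{R}$ with $F(R)\widehat{I}z$; since $z$ must be interior to $\widehat{R}$'s domain one gets $t_{\widehat R}>t_z=t_R$, hence $R\prec\widehat R$ and eventually $R^n\prec\widehat{R}$, whereupon the single-crossing order-preservation (applied at bundles interior to $R^n$'s domain) gives $F(R)P^nF(R^n)$, contradicting strategy-proofness. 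As written, your step fails precisely where the restricted-domain structure bites.

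It is also worth noting that your route differs from the paper's. The paper first characterizes the mechanism structurally: it introduces the threshold $R^{*}=\inf\{R\mid q(R)>0\}$, shows $F(R)=(0,0)$ below it and $F(R)P(0,0)$ above it, and pins down $t^1=t_{R^{*}}$ for the first positive bundle (using that $F(R)$ must lie in $[0,t_R]\times[0,1]$ to rule out $t^1>t_{R^{*}}$, and strategy-proofness to rule out $t^1<t_{R^{*}}$). Monotonicity and continuity are then read off this structure --- in particular, continuity at $R^{*}$ is exactly the boundary case above, and it holds there because $F(R^{*})=(0,0)I^{*}(t_{R^{*}},q^1)$ by $0$-equivalence. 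Your case-by-case adaptation of Lemmas \ref{lemma:mon} and \ref{lemma:cont_correspondence} is more local and avoids the structural detour, which is a legitimate alternative, but the structural fact $t^1=t_{R^{*}}$ (with $F(R^{*})=(0,0)$) is precisely what makes the boundary case of continuity harmless in the paper, and some substitute for it is what your proof is missing.
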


\begin{proof} Let $R^{*}=\inf\{R\mid q(R)>0\}$. We argue that if $R\prec R^*$, then $F(R)=(0,0)$.
	Suppose instead that $F(R)P(0,0)$, then by $0$-equivalence $q(R)>0$. Then $R^*$ is not the infimum. Thus $F(R)I(0,0)$ but then by assumption 
	$F(R)=(0,0)$.

	Let $F$ be strategy-proof.  
	Now we consider two cases: $(i)$ $q(R^{*})>0$, $(ii)$ $q(R^*)=0$. 
	Let $q(R^{*})>0$. Then $F(R^*)P^*(0,0)$ because if $F(R^*)I^*(0,0)$, then $F(R^{*})=(0,0)$ by assumption. Thus $t(R^*)<t_{R^{*}}$. But then for $R\prec R^{*}$ with 
	$t(R^{*})<t_R<t_{R^{*}}$ we have $(t(R^{*}),q(R^{*})) PF(R)=(0,0)$. This contradicts strategy-proofness. Thus case $(i)$ holds only if $R^{*}=\underline{R}$. If $\underline{R}\prec R^{*}$, then $q(R^{*})=0$. Then by $0$-equivalence $(0,0)I^{*}(t(R^{*}),0)$. Thus $F(R^{*})=(0,0)$.    
	
	Now if $R^{*}\prec R$, then $q(R)>0$. If $q(R)=0$, then $F(R)I(0,0)$.      
	Then $F(R')I'(0,0)$ for all $R'\prec R$.
	Suppose not, i.e., let $F(R')P'(0,0)$ for some $R'\prec R$. Then $F(R')=(t',q')$ and $t'<t_{R'}$ and $0<q'$. Since $R'\prec R$, $t'<t_{R}$. Thus $F(R')PF(R)$, this contradicts strategy-proofness.
	That is if  $q(R)=0$, then $F(R')=(0,0)$ for $R'\precsim R$. Thus $R^{*}\neq \inf \{R\mid q(R)>0\}$, a contradiction. Also the argument pertaining to strategy-proofness shows that $q(R)>0$ and $F(R)P(0,0)$ if $R^*\prec R$.
	
	Thus for any $R',R''$ such that $R^*\prec R'\prec R''$, $t(R')<t_{R'}<t_{R''}$, $q(R')>0$. Then $R''$ cuts $R'$ at $(t(R'),q(R'))$ from above. By strategy-proofness $(t(R'),q(R'))\leq (t(R''),q(R''))$. That is for allocations that are better than $(0,0)$ the single-crossing property holds and $F$ is monotone for such bundles. 
	
	Now let $Rn(F)=\{(q^0,0),(t^i,q^i)| i=1,\ldots l-1, 0<t^i<t^{i+1}, 0<q^i<q^{i+1},0\leq q^0\}$. If $q^0>0$, then we know $F(R^{*})=F(\underline{R})=(q^0,0)$. If $q^0=0$, then $t^{1}=t_{R^{*}}$. If $t^1>t_{R^{*}}$, then consider $R$ such that $R^*\prec R$ with $t_{R^{*}}<t_{R}<t^1$. But then $F(R)$ is not defined, a contradiction. If $t^1<t_{R^*}$, since $q^1>0$, then $(t^1,q^1)P^{*}(0,0)=F(R^*)$ contradicting strategy-proofness. If $R^{*}=\underline{R}$ and $q(\underline{R})>0$ then $(0,0)$ is not in the range. Also $F(\underline{R})\underline{P}(0,0)$. Thus $F$ is monotone in this case as well.
	
	Thus now the continuity of $V^{F}$ is straightforward to establish. Let $\underline{R}\prec R^{*}$ so that $q(R^{*})=0$. We know $t^1=t_{R^*}$. Consider a sequence $\{R^{n}\}_{n=1}^{\infty}$ such that 
	$R^{n}\downarrow R^{*}$. Since $t_{R^n}\downarrow t_{R^{*}}$. Then after some $n$, $t_{R^{*}}<t_{R^n}<(t^2,q^2)$. Thus after some $n$, $F(R^n)=(t_{R^{*}},q^1)$. Since $(0,0)=F(R^{*})I^{*}(t_{R^{*}},q^1)$ continuity of $V^F$ holds at $R^*$. Proofs for the other cases are similar. Let $(t^1,q^1)P^1(t^2,q^2)$ and $(t^2,q^2)P^2(t^1,q^1)$. Then by richness there is 
	$R$ such that $R^1\prec R\prec R^2$ and $(t^1,q^1)I(t^2,q^2)$ and $t^2<t_{R}$ since $R\in \mathcal{R}^{rrc}$. With this observation, arguments are analogous to the one for $R^*$.    
	
	\noindent This competes the proof of Lemma \ref{lemma:rest_sp_implies}.   
	
\end{proof}

\begin{remark}\rm Consider an extension of restricted classical preferences $R$ to $R^a$, where the latter is defined on $\mathbb{Z}$. Let $R^a=R$ in $[0,t_R]\times [0,1]$ and for all $q',q''\in [0,1]$, $(0,0)P^a(t',q') P^a (t'',q'')$ 
	where $t_{R}<t'<t''$, and $(t,q')I^{a}(t,q'')$ 
	where $t_R<t$;  
	and let the extended preference be continuous. This extension can be done. 
	For $[0,t_R]\times [0,1]$ represent $R$ by the function $f_{t_R}:[0,t_R]\times [0,1]\rightarrow \Re$
	by $f_{R}(t,q)=r$, where $(0,r)I_{R}(t,q)$. Then extend $f_{R}$, call the extension  $f_{R^a}:\mathbb{Z}\rightarrow \Re$ defined as $f_{R^a}(t,q)=f_{R}(t,q)$ if $(t,q)\in [0,t_R]\times [0,1]$ and otherwise $f_{R^a}(t,q)=-(t-t_R)$. Then $f_{R^a}$ is continuous and $(t',q')R^a (t'',q'')$ if and only if $f_{R^a}(t',q')\geq f_{R^a}(t'',q'')$. 	Call the extended domain to be $\mathcal{R}^{rrca}$. Further $R^{'a}\prec R^{''a}$ if and only if $t_{R'}<t_{R''}$ and $f_{R^{'a}}(t,q)\leq f_{R^{''a}}(t,q) $ . 
	Definition \ref{defn:single_crosiing_restricted} holds for $\mathcal{R}^{rrca}$. 
	Let $F:\mathcal{R}^{rrca}\rightarrow \mathbb{Z}$
	be individually rational. Also let if  $F(R^a)I^{a}(0,0)$ then $F(R^a)=(0,0)$. Then the geometry of strategy-proof individually rational mechanism in $\mathcal{R}^{rrca} $ are same as the mechanism in Lemma \ref{lemma:rest_sp_implies}.   
	This is because in the proof the lemma wherever we have used ``$F(R)=(t,q)$ not defined''  $(0,0) P^a (t,q)$ and thus individual rationality does not permit $F(R^a)=(t,q)$. 
	A lemma analogous to Lemma \ref{lemma:preference_conv} holds for $[\underline{R}^a,\overline{R}^{a}]$, if we use the representation $f_{R^a}$. Thus a result analogous to Lemma \ref{lemma:soln_exists} holds for $[\underline{R}^{a},\overline{R}^a]$. The optimal mechanism for $[\underline{R}^{a},\overline{R}^a]$ is the optimal mechanism for $[\underline{R},\overline{R}]$ also.          
	\label{remark:pref_extn_1}
\end{remark}

\begin{lemma}\rm Let $F:[\underline{R},\overline{R}]\rightarrow \mathbb{Z}$ be a mechanism and $Rn(F)$ be finite, $[\underline{R},\overline{R}]\subseteq \mathcal{R}^{rrc}$. Also assume that if $F(R)I(0,0)$, then $F(R)=(0,0)$. If $F$ is monotone and $V^{F}$ is continuous, then $F$ is strategy-proof.   
	\label{lemma:rest_implies_sp}
	
\end{lemma}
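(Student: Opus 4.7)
The plan is to mirror the proof of Theorem \ref{thm:implies_strtagey_proof_finite_range}, adapting each step to the restricted setting. By monotonicity of $F$ and finiteness of $Rn(F)$, enumerate $Rn(F) = \{z^0, z^1, \ldots, z^{l-1}\}$ with $z^0 \leq z^1 < \cdots < z^{l-1}$, where $z^0$ may equal $(0,0)$. Let $R^{*} = \inf\{R \mid q(R) > 0\}$ as in Lemma \ref{lemma:rest_sp_implies}; by the assumption that $F(R)I(0,0)$ forces $F(R)=(0,0)$, we may assume $F(R)=(0,0)$ precisely for $R \prec R^*$ (if $\underline{R} \prec R^*$), and the argument for the ``upper'' bundles proceeds independently of this degenerate tail.

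The key step is to establish the analog of Claim \ref{claim:ab} and Claim \ref{claim:indiff}: for adjacent $z^k, z^{k+1}$ with $z^k < z^{k+1}$, the sets $S_k = \{R \mid F(R) = z^k\}$ and $T_k = \{R \mid F(R) = z^{k+1}\}$ satisfy $\sup_{\prec} S_k = \inf_{\prec} T_k =: R^k$, and at this special preference $z^k I^k z^{k+1}$. The argument is identical in structure to the classical case: the supremum/infimum coincide by monotonicity of $F$, and if $z^k$ and $z^{k+1}$ were not indifferent under $R^k$, one can construct monotone sequences $R^n \to R^k$ with $F(R^n)$ equal to one of the two bundles and contradict continuity of $V^F$ (Definition \ref{defn:cont_restricted}). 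Note that $R^k$ is well defined in $\mathcal{R}^{rrc}$ because for adjacent bundles with $q^k, q^{k+1} > 0$ and $t^k, t^{k+1} < t_{R^k}$ the analog of richness (via Definition \ref{defn:single_crosiing_restricted}) supplies such a preference; the boundary case where $z^{k+1}$ sits on $t = t_{R^k}$ is handled by the $0$-equivalence axiom which makes the indifference relation automatic.

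Once the special preferences $R^0 \precsim R^1 \precsim \cdots \precsim R^{l-2}$ are in hand, one invokes the restricted single-crossing property (Definition \ref{defn:single_crosiing_restricted}, interpreted through Lemma \ref{lemma:order_payment_bound}) in place of Lemma \ref{lemma:preference_preserve}: if $z^k I^k z^{k+1}$, then for $R \prec R^k$ one has $z^k P z^{k+1}$ and for $R^k \prec R$ one has $z^{k+1} P z^k$, provided both bundles lie in $[0, t_R] \times [0,1]$. Transitivity then propagates these strict preferences across all non-adjacent pairs, exactly as in the concluding steps of the proof of Theorem \ref{thm:implies_strtagey_proof_finite_range}. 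For any $R \in [\underline{R},\overline{R}]$ with $F(R) = z^k$, one obtains $z^k R z^j$ for every $z^j \in Rn(F)$ whose coordinates lie within $[0, t_R] \times [0,1]$, which is all that restricted strategy-proofness demands.

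The main obstacle is the interaction between the payment bound $t_R$ and the definition of restricted strategy-proofness: the comparison $F(R) R F(R')$ is only required when $F(R') \in [0, t_R] \times [0,1]$, and when reports lead to bundles with $t > t_R$ the single-crossing property itself is silent. This must be navigated by showing that monotonicity of $F$ together with $R \prec R'$ implies $t(R) \leq t(R') \leq t_{R'}$, so that for any $R \prec R'$ with $F(R')$ in $R$'s feasible region, the analysis above applies; meanwhile for $R' \prec R$ the bundle $F(R')$ has $t(R') \leq t_{R'} \leq t_R$ and again lies in $R$'s feasible region, so the comparison is nontrivial and handled by the single-crossing argument above. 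The remaining subtlety is the treatment of $z^0 = (0,0)$, where $0$-equivalence collapses an entire boundary to one indifference class; here the assumption that $F(R)I(0,0) \Rightarrow F(R) = (0,0)$ prevents the mechanism from exploiting this degeneracy, and strategy-proofness at the interface between the ``zero tail'' and the first nontrivial bundle $z^1$ follows from the same supremum-infimum construction applied to the interval boundary.
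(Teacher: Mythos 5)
Your proposal is correct and follows essentially the same route as the paper's proof: establish the zero tail via $R^*$ and the $0$-equivalence/assumption, pin down the special preferences at which adjacent range bundles are indifferent using monotonicity of $F$ and continuity of $V^F$ (the Claim~\ref{claim:ab}/\ref{claim:indiff} machinery), and extend local to global strategy-proofness via the restricted single-crossing property, with the payment-bound feasibility issue resolved exactly as you indicate (via Lemma~\ref{lemma:order_payment_bound} and the choice of auxiliary preferences with $t_{R'}$ equal to the next bundle's payment).
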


\begin{proof} Let $R^{*}=\inf\{R\mid q(R)>0\}$. We argue that if $R\prec R^*$, then $F(R)=(0,0)$.
	Suppose instead that $F(R)P(0,0)$, then by $0$-equivalence $q(R)>0$. Then $R^*$ is not the infimum. Thus $F(R)I(0,0)$ but then by assumption 
	$F(R)=(0,0)$. 
	
	Let $F$ be monotone and $V^{F}$ be continuous. Without loss of generality assume that $\underline{R}\prec R^{*}$. 
	By monotonicity of $F$, $(0,0)<(t(R),q(R))$ for all $R$ with $R^*\prec R$. By finiteness of $F$
	let $R_{n}(F)=\{(0,0), (t^i,q^i)\mid 0<t^i<t^{i+1}, 0<q^{i}<q^{i+1}, i=1,\ldots,l-1\}$.
	We argue that $q(R^{*})=0$. Suppose $q(R^{*})>0$, then $F(R^{*})P^{*}(0,0)$. This is because
	if $F(R^*)I^*(0,0)$, then by assumption $F(R^{*})=(0,0)$. 
	We know that 
	$F(R)=(0,0)$ for all $R\prec R^{*}$. Take a sequence $R^{n}\uparrow R^{*}$. But then $(0,0)=\lim_{n\rightarrow \infty}F(R^{{n}})I^*F(R^{*})$ does not hold, this contradicts continuity of $V^{F}$. Thus $q(R^{*})=0$, and thus $F(R^{*})=(0,0)$.    
	
	Then $t^{1}=t_{R^{*}}$. Since $R^{*}=\inf\{R\mid q(R)>0\}$, $q(R)>0$ if $R^*\prec R$. 
	If indeed $q(R)=0$, then $F(R)=(0,0)$ and by monotonicity of $F$, $F(R)=(0,0)$ for all $R'\precsim R$. Thus $R^*$  is not the infimum. 
	If $t^1>t_{R^{*}}$, then consider $R$ such that $R^*\prec R$ with $t_{R^{*}}<t_{R}<t^1$. But then $F(R)$ is not defined, a contradiction. 	
	If $t^1<t_{R^*}$, then given that  $q^1>0$, $(t^{1},q^1)P^{*}F(R^{*})=(0,0)$. Consider a sequence 
	$R^{n}\downarrow R^{*}$ where $F(R^n)=(t^1,q^1)$. 
	Then $(t^1,q^1)=\lim_{n\rightarrow \infty}F(R^n)P^{*}F(R^{*})$. Thus continuity of $V^{F}$ is violated.

	If $R^{*}=\underline{R}$ and $q(\underline{R})>0$, then $(0,0)$ is not in the range. 
	Thus $F$ is strategy-proof for $R\in [\underline{R}, R^*]$. 
	Consider $(t^1,q^1), (t^2,q^2)$. Let $R^{'}$ be such that, $F(R')=(t^1,q^1)$ and $t_{R'}=t^2$. To see that $R'$ is well defined note that $(0,0)<(t^2,q^2)$, and there is  $R''$ such that $F(R'')=(t^2,q^2)$ with $t^2<t_{R''}$.
	Thus, since  there is $R$ such that  $R^{*}\prec R$
	and $F(R)=(t^1,q^1)$,  $R'$ is well defined. 
	Now $F(R'')=(t^2,q^2)$. Then by monotonicity of $F$, $F(R)\in \{(t^1,q^1),(t^2,q^2)\}$ for all 
	$R\in [R',R'']$. Arguments similar to the ones applied in the proof of Claim \ref{claim:ab} shows that 
	$F$ restricted to $[R',R'']$ is local strategy-proof. We can apply the arguments in Claim \ref{claim:ab} because the richness condition that we have used there is that between any two preferences there is another one, which holds for $\mathcal{R}^{rrc}$ also. Strategy-proofness for $R\prec R^{*}$ holds trivially. For preference $R^{*}$, $(0,0)I^{*}(t^1,q^1)$ since $t^1=t_{R^*}$. For 
	preference $R\in ]R^*,R']$, $F(R)=(t^1,q^1)P (0,0)$ and $F(R')P'(t^2,q^2)$ since $t^2=t_{R'}$. Also for $R$ with $R'\prec R\precsim R''$, $F(R)=(t^2,q^2)P(0,0)$. 
	Since $F(R'')P''(0,0)$ for any $R$ with $F(R)=(t^2,q^2)$ and $R''\prec R$, $F(R)P(t^1,q^1)P(0,0)$. 
	Thus local strategy-proofness holds for $\{R\mid F(R)\in \{(t^1,q^1), (t^2,q^2)\}\}$. 
	Now we show that there is $R^1$ such that $(t^1,q^1)I^1 (t^2,q^2)$. Consider  $[R',R'']$ where $F(R')=(t^1,q^1), t_{R'}=t^2$, $F(R'')=(t^2,q^2)$. 
	Then, $F(R')P'(t^2,q^2)I'(0,0)$. By monotonicity of $F$ consider two cases:
	
	\noindent {\bf Case $(i)$:} Let  $[R',R^0]\cup ]R^0,R'']=[R',R'']$ where $F(R)=(t^1,q^1)$ if $R'\precsim R\precsim R^0$ and $F(R)=(t^2,q^2)$ if $R^0\prec R\precsim R''$. Also let by way of contradiction if $R\in [R',R^0]$ then $F(R)PF(R'')$ and if 
	$R\in ]R^0,R'']$, then $F(R)PF(R')$. Consider a decreasing sequence $\{R^n\}_{n=1}^{\infty}$ to $R^0$. We have $F(R^n)=(t^2,q^2)$, and thus $\lim_{n\rightarrow \infty}F(R^n)=(t^2,q^2)$ 
	Since $F(R^0)=(t^1,q^1)$ and $F(R^0)P(t^2,q^2)$ we have a contradiction to continuity of $V^{F}$.

	\noindent {\bf Case $(ii)$:} $[R',R^0[\cup [R^0,R'']=[R',R'']$ where $F(R)=(t^1,q^1)$ if $R'\precsim R\prec R^0$ and $F(R)=(t^2,q^2)$ if $R^0\precsim R\precsim R''$. Also let by way of contradiction if $R\in [R',R^0[$, then $F(R)PF(R'')$ and if 
	$R\in [R^0,R'']$, then $F(R)PF(R')$. In this case consider  an increasing sequence to $R^{0}$. 
	Then we have $F(R^0)P^0(t^1,q^1)=\lim_{n\rightarrow \infty}F(R^{n})$. This contradicts continuity of $V^{F}$.   
	
	Thus arguing in this manner we obtain $R^{1},\ldots, R^{l-1}$ special preferences. 
	For example when finding $R^{2}$, we first find $R'''$ such that $F(R''')=(t^2,q^2)$ and $t_{R'''}=t^3$.
	After we find the special preferences the argument is same as for classical preferences since the single-crossing property holds at bundles that are strictly preferred to $(0,0)$. 
	The extra observation that we need is that preferences $R$ for which $t^{i}<t_{R}\leq t^{i+1}$, 
	$F(R)=(t^i,q^i)$ and $(t^i,q^i)P(t^{j},q^j), j<i$. This completes the proof of Lemma \ref{lemma:rest_implies_sp}. 
\end{proof}

\begin{remark}\rm Lemma \ref{lemma:rest_implies_sp} also goes through for $\mathcal{R}^{rrca}$ if we assume $F:\mathcal{R}^{rrca}\rightarrow \mathbb{Z}$ to be $F$ is individually rational, and  $(0,0)I^{a}F(R^a)\implies F(R^a)=(0,0)$.   
\end{remark}

\noindent This completes the proof of Proposition \ref{prop:rest_sp}.

\medskip

\noindent{\bf Proof of Lemma \ref{lemma:order_payment_bound}:} Let $R$ be a restricted classical preference. We note that if $(t,q)$ is such that $t<t_{R}$, $0<q$, then $(t,q)P(0,0)$. 
This holds because by $0$-equivalence $(0,0)I(t_R,q)$. By money-monotonicity $(t,q)P(t_R, q)$. 
By transitivity $(t,q)P(0,0)$.  
Now we argue that for any $(t,q)$ such that $t<t_R$ and $0<q<1$ there is $(t',1)$ where $(t,q)I(t',1)$ and $t'<t_R$. By $q-$monotonicity $(t,1)P(t,q)$. By $0$-equivalence $(t_R,1)I(0,0)$. Since $(t,q)P(0,0)$, by transitivity $(t,q)P(t_R,1)$. Thus we have $(t,1)P(t,q)P(t_R,1)$. Continuity of $R$ entails
$(t,q)I(t',1)$. 

Since $R',R''$ are restricted classical $t_{R'}\neq t_{R''}$. Let by way of contradiction $t_{R''}<t_{R'}$. Then $(t_{R''},1)P'(0,0)$. Thus by continuity of $R'$  consider $(t,q)I'(t_{R''},1)$ where $t<t_{R''}$ and $q<1$. Since $R''$ cuts $R'$ from above at $(t,q)$, there is no $(t',1)$ such that  $(t,q)I'' (t',1)$. This is a contradiction. Thus we have  $t_{R'}<t_{R''}$. The proof of the lemma follows.

\medskip

\noindent{\bf Proof of Proposition \ref{prop:myer_opt}:} 	
We need an intermediary result.	

\begin{lemma}\rm Let $F\in \mathcal{F}$ be as defined in Proposition \ref{prop:myer_opt}. 
	Let $\theta^{*}q(\theta^{*})-q(\theta^{*})t(\theta^{*})>0$ and $\theta^{*}<\overline{\theta}$
	Then there is $G\in \mathcal{F}$ such that $E[G]\geq E[F]$ and $\theta^{*}q_G(\theta^{*})-q_G(\theta^{*})t_G(\theta^{*})=0$, where $G(\theta)=(t_G(\theta), q_G(\theta))$, and $G(\theta)=F(\theta)$ otherwise.    
	\label{lemma:myer}
\end{lemma}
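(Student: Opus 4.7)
The plan is to construct $G$ by locally re-pricing the bundle that $F$ assigns at and around $\theta^*$, so that type $\theta^*$ is pushed down to zero surplus while monotonicity and strategy-proofness are preserved. Since $F \in \mathcal{F}$, Proposition \ref{prop:myer_sp} yields that $F$ is monotone and $V^F$ is continuous, so the level set $A := \{\theta : F(\theta) = F(\theta^*)\}$ is an interval containing $\theta^*$. Write $F(\theta^*) = (t^*, q^*)$ with $q^* > 0$ and $t^* < \theta^*$, and let $(t^-, q^-)$ and $(t^+, q^+)$ denote the bundles of $F$'s finite range immediately below and above $(t^*, q^*)$, with the usual $F$-cutoffs $\theta_-^F = \inf A$ and $\theta_+^F = \sup A$ satisfying the standard Myerson indifference equations $q^-(\theta_-^F - t^-) = q^*(\theta_-^F - t^*)$ and $q^*(\theta_+^F - t^*) = q^+(\theta_+^F - t^+)$.

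I would take $(\theta^*, q^*)$ as the replacement bundle: it has the same allocation probability as $(t^*, q^*)$ but transfer raised to the full valuation $\theta^*$, and hence yields exactly zero surplus at $\theta^*$. A single-crossing calculation using the upper $F$-cutoff equation together with $t^* < \theta^* \leq \theta_+^F$ gives $\theta^* < t^+$, so $(\theta^*, q^*) < (t^+, q^+)$ in the product order and the modified range remains monotone. I would then let $\theta_-^G$ and $\theta_+^G$ be the two new Myerson indifference cutoffs (between $(t^-, q^-)$ and $(\theta^*, q^*)$, and between $(\theta^*, q^*)$ and $(t^+, q^+)$, respectively), and define $G$ to agree with $F$ off $A$ and, on $A$, to assign $(t^-, q^-)$ on $[\theta_-^F, \theta_-^G)$, $(\theta^*, q^*)$ on $[\theta_-^G, \theta_+^G]$, and $(t^+, q^+)$ on $(\theta_+^G, \theta_+^F]$. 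Continuity of $V^G$ at each cutoff is automatic, monotonicity and finite range are clear, and individual rationality holds because $(\theta^*, q^*)$ yields nonnegative surplus for $\theta \geq \theta^*$; Proposition \ref{prop:myer_sp} then delivers strategy-proofness, so $G \in \mathcal{F}$. For the revenue comparison, on $[\theta_-^G, \theta_+^G]$ the payment rises from $q^* t^*$ to $q^* \theta^* > q^* t^*$, on $(\theta_+^G, \theta_+^F]$ it rises from $q^* t^*$ to $q^+ t^+ > q^* t^*$ (by the monotonicity $q^+ > q^*$ and $t^+ > t^*$), and off those sub-intervals it is unchanged.

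The main obstacle is checking that $\theta^*$ itself lies in $[\theta_-^G, \theta_+^G]$, which is what delivers the zero-surplus conclusion. The upper inequality $\theta^* \leq \theta_+^G$ is immediate from $q^*(\theta^* - \theta^*) = 0 > q^+(\theta^* - t^+)$ using $\theta^* < t^+$. The lower inequality $\theta_-^G \leq \theta^*$ is the delicate point: in the boundary case $(t^-, q^-) = (0, 0)$ the defining equation forces $\theta_-^G = \theta^*$, but when $(t^-, q^-)$ has $q^- > 0$ one finds $\theta_-^G > \theta^*$, so the naive construction would reassign $\theta^*$ to $(t^-, q^-)$ with strictly positive surplus. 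In that sub-case the construction must be modified---either by choosing a replacement bundle $(\tilde t, \tilde q)$ with $\tilde t = \theta^*$ but $\tilde q$ tuned so that the new lower cutoff collapses to $\theta^*$, or by simultaneously adjusting the bundle $(t^-, q^-)$---and the revenue bookkeeping on the additional sub-interval on which $G$ differs from $F$ is the principal technical challenge. The hypothesis $\theta^* < \overline{\theta}$ is what ensures there is enough room strictly above $\theta^*$ in $A$ to offset any potential revenue loss near $\theta_-^G$.
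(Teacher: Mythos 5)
Your proposal has two genuine gaps, one structural and one computational. The structural gap: you never use the force of the hypothesis $\theta^{*}q(\theta^{*})-q(\theta^{*})t(\theta^{*})>0$ to pin down \emph{where} $\theta^{*}$ sits. The paper's first step shows this hypothesis implies $\theta^{*}=\underline{\theta}$: if $\underline{\theta}<\theta^{*}$, a type $\theta^{*}-\epsilon$ with $(\theta^{*}-\epsilon)q(\theta^{*})-q(\theta^{*})t(\theta^{*})>0$ would strictly gain by reporting $\theta^{*}$, contradicting $\theta^{*}=\inf\{\theta\mid q(\theta)>0\}$. Once $\theta^{*}=\underline{\theta}$, the bundle $F(\theta^{*})$ is the minimum of the (ordered) range and there is no lower neighbour $(t^{-},q^{-})$ at all; the entire ``delicate point'' you flag about $\theta^{G}_{-}>\theta^{*}$ when $q^{-}>0$ evaporates. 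As written, you leave that sub-case unresolved (``the construction must be modified\dots the principal technical challenge''), so the proof is incomplete even on its own terms.

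The computational gap is more serious: your claimed inequality $\theta^{*}<t^{+}$ is false in general, and your construction collapses without it. From $q^{*}(\theta^{F}_{+}-t^{*})=q^{+}(\theta^{F}_{+}-t^{+})$ one gets $t^{+}=\bigl(1-\tfrac{q^{*}}{q^{+}}\bigr)\theta^{F}_{+}+\tfrac{q^{*}}{q^{+}}t^{*}$, a convex combination of $\theta^{F}_{+}$ and $t^{*}$ that can lie below $\theta^{*}$ when $q^{*}/q^{+}$ is close to $1$. Concretely, take $\theta^{*}=1$, $(t^{*},q^{*})=(\tfrac12,\tfrac12)$, $(t^{+},q^{+})=(\tfrac34,\tfrac35)$, $\theta^{F}_{+}=2$: all cutoff and IR conditions hold, yet $t^{+}=\tfrac34<1=\theta^{*}$, so your replacement bundle $(\theta^{*},q^{*})=(1,\tfrac12)$ is not below $(t^{+},q^{+})$ in the product order. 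Worse, every type prefers $(\tfrac34,\tfrac35)$ to $(1,\tfrac12)$, so any $G$ assigning $(1,\tfrac12)$ to a positive-measure set of types while $(\tfrac34,\tfrac35)$ remains in the range is not strategy-proof. The paper avoids this by raising \emph{both} coordinates: the replacement bundle $(t''',q''')$ is chosen on the zero-surplus curve of $\underline{\theta}$ (so $t'''=\underline{\theta}$) \emph{and} on the indifference curve of the type $\theta^{**}$ who was indifferent between the largest replaced bundle and the next bundle up, which simultaneously preserves the ordering of the range, keeps the upper cutoff intact, and gives $t'''q'''\geq t''q''\geq tq$ for every replaced bundle $(t,q)$, whence $E[G]\geq E[F]$. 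You would need to adopt some such joint adjustment of $t$ and $q$ for your argument to go through.
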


\begin{proof}
	We show that if $\theta^{*}q(\theta^{*})-q(\theta^{*})t(\theta^{*})>0$, then $\theta=\underline{\theta}$. 
	Let by the way of contradiction $\theta^{*}q(\theta^{*})-q(\theta^{*})t(\theta^{*})>0$ and $\underline{\theta}<\theta^{*}$. Then consider $\theta^{*}-\epsilon$ such that 
	$[\theta^{*}-\epsilon]q(\theta^{*})-q(\theta^{*})t(\theta^{*})>0$. By strategy-proofness 
	$q(\theta-\epsilon)>0$. This contradicts $\theta^{*}=\inf\{\theta\mid q(\theta)>0\}$.

	Thus now we have $\theta^*=\underline{\theta}$. Since $\underline{\theta}q(\underline{\theta})-q(\underline{\theta})t(\underline{\theta})>0$, we have
	$q(\underline{\theta})>0$. Further, $t(\underline{\theta})<\underline{\theta}$. 
	By finiteness of $Rn(F)$ consider the situation where $(t',q')\in Rn(F)$ is either in $IC(\underline{\theta},(\underline{\theta},1))
	=\{(t,q)\mid \underline{\theta}q-qt=0\}$ or the smallest bundle in $Rn(F)$ such that $(t',q')$ with $\underline{\theta}<t'$. Let $(t'',q'')$ be the largest bundle such that $t''<\underline{\theta}$. Let $\theta^{**}q'-q't'=\theta^{**}q''-q''t''=\theta^{**}q'''-q'''t'''$, where $(t''',q''')\in IC(\underline{\theta},(\underline{\theta},1))$. 
	To see that such $\theta^{**}$ exists, let $F(\theta')=(t',q'),F(\theta'')=(t'',q'')$. By monotonicity 
	of $F$, $\theta''<\theta'$. By strategy-proofness $\theta'q'-q't'\geq \theta'q''-q''t''$, and
	$\theta''q''-q''t''\geq \theta''q'-q't'$. Now let $\theta^{**}q'-q't'=\theta^{**}q''-q''t''$. 
	Then $\theta''\leq \theta^{**}\leq \theta'$. Thus $F(\theta^{**})\in \{(t',q'),(t'',q'')\}$. 
	The by continuity of $\theta^{**}q-qt$, there is $(t''',q''')$.    
	Let $G(\theta)=(t''',q''')$ for all $\theta\in]\underline{\theta},\theta^{**}]$, $G(\underline{\theta})=(0,0)$ and  $G=F$ otherwise. If $F$ is such that for all $\theta$, $t(\theta)<\underline{\theta}$, then define 
	$G(\theta)=(\underline{\theta},1)$ if $\theta\in ]\underline{\theta}, \overline{\theta}]$
	and $G(\underline{\theta})=(0,0)$. Since $\Gamma$ is strictly increasing the result follows. Thus the proof of the lemma follows.  
	
\end{proof}

By Lemma \ref{lemma:myer} without loss of generality let by way of contradiction  $Rn(F)=\{(t^0,q^0)=(0,0), (\theta^{*},q^{*})=(t^1,q^{1}),(t^{2},q^{2}),(t^{3},q^{3}),(t^{4},q^{4})\}$.
That is $\theta^{0}=\underline{\theta}\leq \theta^*=\theta^1\leq \theta^2\leq\theta^3\leq\theta^4\leq\theta^5=\overline{\theta}$. 
The bundles are $(0,0)\leq (t^1,q^1)\leq (t^2,q^2)\leq (t^3,q^3)\leq(t^4,q^4)$.  
Also note that $F(\theta^{*})=(0,0)$. Further set $q^{4}=1$. We fix the $q$s to entail a contradiction 
from the first order conditions that involve $\theta$s and $t$s.      
The relevant Lagrange is

$L=\theta^{*}q^{*}[\Gamma(\theta^{2})-\Gamma(\theta^{*})]+t^{2}q^{2}[\Gamma(\theta^{3})-\Gamma(\theta^{2})]+t^{3}q^{3}[\Gamma(\theta^{4})-\Gamma(\theta^{3})]+t^{4}q^{4}[1-\Gamma(\theta^{4})]$ 
\[-\lambda_{1}[\theta^{2}q^{2}-t^{2}q^{2}-\theta^{2}q^{*}+q^{*}\theta^{*}]-\lambda_{2}[\theta^{3}q^{3}-q^{3}t^{3}-\theta^{3}q^{2}+q^{2}t^{2}]-\lambda_{3}[\theta^{4}q^{4}-q^{4}t^{4}-\theta^{4}q^{3}+q^{3}t^{3}]\]

\noindent A solution to this optimization problem exists. The constraint set is compact and the objective function is continuous.          
Now we consider the first order conditions with respect to $\theta$s. 
\begin{equation}
	-q^{*}[\theta^{*}\gamma(\theta^{*})+\Gamma(\theta^{*})]-\lambda_{1}q^{*}=0\\
	\implies \lambda_{1}=-[\theta^{*}\gamma(\theta^{*})+\Gamma(\theta^{*})]
	\label{eqn:theta_star}
\end{equation} 
\begin{equation}
	\theta^{*}q^{*}\gamma(\theta^{2})-t^{2}q^{2}\gamma(\theta^{2})-\lambda_{1}[q^{2}-q^{*}]=0\label{eqn:theta-2}
\end{equation} 
\begin{equation}
	t^{2}q^{2}\gamma(\theta^{3})-t^{3}q^{3}\gamma(\theta^{3})-\lambda_{2}[q^{3}-q^{2}]=0
	\label{eqn:theta-3}
\end{equation} 
\begin{equation}
	t^{3}q^{3}\gamma(\theta^{4})-t^{4}q^{4}\gamma(\theta^{4})-\lambda_{3}[q^{4}-q^{3}]=0
	\label{eqn:theta-4}
\end{equation}
\noindent Now we consider the first order conditions with respect to $t$s.
\begin{equation}
	q^{2}[\Gamma (\theta^{3})-\Gamma (\theta^{2})]+\lambda_{1}q^{2}-\lambda_{2}q^{2}=0\\
	\implies [\Gamma (\theta^{2})-\Gamma (\theta^{2})]+\lambda_{1}-\lambda_{2}=0
	\label{eqn:t_2}
\end{equation} 
\begin{equation}
	q^{3}[\Gamma(\theta^{4})-\Gamma (\theta^{3})]+\lambda_{2}q^{3}-\lambda_{3}q^{3}=0\\
	\implies [\Gamma (\theta^{4})-\Gamma (\theta^{3})]+\lambda_{2}-\lambda_{3}=0
	\label{eqn:t_3}
\end{equation} 
\begin{equation}
	q^{4}[1-\Gamma (\theta^{4})]+\lambda_{3}q^{4}=0\\
	\implies [1-\Gamma (\theta^{4})]+\lambda_{3}=0,
	\label{eqn:t_4}
\end{equation} 

\medskip

\noindent Now we have the following:
$\lambda_{3}=-[1-\Gamma(\theta^{4})]$. Then, $\lambda_{2}=-[1-\Gamma(\theta^{3})]$ and 
$\lambda_{1}=-[1-\Gamma(\theta^{2})]$. Now the equations (\ref{eqn:theta-2}), (\ref{eqn:theta-3}) (\ref{eqn:theta-4})
can be written as 
\[\frac{\gamma(\theta^{2})}{1-\Gamma(\theta^{2})}=\frac{q^{2}-q^{*}}{t^{2}q^{2}-\theta^{*}q^{*}}\]
\[\frac{\gamma(\theta^{3})}{1-\Gamma(\theta^{3})}=\frac{q^{3}-q^{2}}{t^{3}q^{3}-t^{2}q^{2}}\]
\[\frac{\gamma(\theta^{4})}{1-\Gamma(\theta^{4})}=\frac{q^{4}-q^{3}}{t^{4}q^{4}-t^{3}q^{3}}=\frac{1-q^{3}}{t^{4}-t^{3}q^{3}},~\text{since}~q^{4}=1\]

\bigskip

\noindent Further, we have $\theta^{4}-t^{4}=\theta^{4}q^{3}-q^{3}t^{3}$. Therefore, $t^{4}-q^{3}t^{3}=\theta^{4}-\theta^{4}q^{3}=\theta^{4}[1-q^{3}]$. Similarly, 
$t^{3}q^{3}-t^{2}q^{2}=\theta^{3}[q^{3}-q^{2}]$. 
Hence, $\frac{\gamma(\theta^{3})}{1-\Gamma(\theta^{3})}=\frac{1}{\theta^{3}}$,  $\frac{\gamma(\theta^{4})}{1-\Gamma(\theta^{4})}=\frac{1}{\theta^{4}}$ and 
$\frac{\gamma(\theta^{2})}{1-\Gamma(\theta^{2})}=\frac{1}{\theta^{2}}$.  If $\theta^{4}>\theta^{3}>\theta^{2}$ we have a contradiction. 
Thus we must have $\theta^2=\theta^3=\theta^4$.  

\medskip

\noindent  Therefore  we assume that the mechanism has three bundles in its range:
$\{(0,0), (\theta^{*},q^{*}), (t,1)\}$ and $\theta - t =\theta q^*-q^*\theta^*$
Assume by the way of contradiction that $\theta^{*}<\theta$, $F(\theta')=(0,0)$ if $\theta'\in [\underline{\theta},\theta^{*}]$, $F(\theta')=(\theta^*,q^*)$ if $\theta'\in ]\theta^*,\theta]$, $F(\theta')=(t,1)$ if $\theta'\in ]\theta,\overline{\theta}]$. 	
The Lagrange for $3$ bundle range is:
$L=\theta^*q^* [\Gamma(\theta)-\Gamma(\theta^{*})]+t[1-\Gamma(\theta)]-\lambda_{1}[\theta-t-\theta q^{*}+q^{*}\theta^{*}]$. Let FOC denote first order condition.  

\noindent FOC for $\theta^*$ is $q^{*}[\Gamma(\theta)-\Gamma(\theta^{*})]-\theta^{*}q^{*}\gamma(\theta^{*})-\lambda_{1}q^{*}=0$

\medskip

\noindent FOC for $t$ is $[1-\Gamma(\theta)]+\lambda_1=0$

\medskip

\noindent FOC for $\theta$ is $\theta^{*}q^{*}\gamma(\theta)-t\gamma(\theta)-\lambda_1[1-q^{*}]=0$  

\medskip

\noindent From FOC for $t$, $\lambda_1=-[1-\Gamma(\theta)]$. 
In FOC for $\theta^{*}$, if $q^*>0$, then we have after replacing for $\lambda_1$, 
$$[\Gamma(\theta)-\Gamma(\theta^{*})]-\theta^{*}\gamma(\theta^{*})+1-\Gamma(\theta)=0.$$
Thus, $$\theta^{*}=\frac{1-\Gamma(\theta^{*})}{\gamma(\theta^{*})}.$$
or
$$\frac{1}{\theta^{*}}=\frac{ \gamma(\theta^{*})          )}{1-\Gamma(\theta^{*})}.$$  

\noindent Now replacing for $\lambda_{1}$ in FOC for $\theta$ we obtain 

$$  \theta^{*}q^{*}\gamma(\theta)-t\gamma(\theta)+[1-\Gamma(\theta)][1-q^{*}]=0.$$
Thus 
$$ \gamma(\theta)[\theta^*q^*-t]=-[1-\Gamma(\theta)][1-q^{*}].$$
After multiplying by $-1$ on both sides we obtain 

$$ \gamma(\theta)[t-\theta^*q^*]=[1-\Gamma(\theta)][1-q^{*}].$$  

Thus, $$\frac{\gamma(\theta)}{1-\Gamma(\theta)}=\frac{1-q^{*}}{t-\theta^{*}q^{*}}.$$

\noindent If $\theta^{*}<\theta$ and the hazard rate is increasing, then we obtain
$\frac{1}{\theta^{*}}<\frac{1-q^{*}}{t-\theta^{*}q^{*}}$. 
Thus, 
$$t-\theta^*q^*<\theta^{*}-\theta^{*}q^{*}.$$ 

\noindent This implies $t<\theta^{*}$. This is a contradiction to monotonicity of $F$. This completes the proof of Proposition \ref{prop:myer_opt}.

\end{document}